\documentclass[11pt]{article}

\usepackage[margin=1in]{geometry}
\usepackage{amssymb,amsmath,amsthm,amsfonts,thmtools,mathtools}
\usepackage{relsize}
\usepackage{color}
\usepackage[export]{adjustbox}
\usepackage{yfonts}
\usepackage{mathrsfs}
\usepackage{authblk}

\usepackage{xspace}

\usepackage{verbatim}

\usepackage{graphicx,lipsum}
\usepackage[font=small]{caption}
\usepackage{subcaption}
\usepackage[utf8]{inputenc}

\usepackage{url}
\usepackage[all]{xy}
\usepackage{rotating}
\usepackage{ifpdf}
\usepackage{tcolorbox}
\usepackage{lipsum}

\usepackage{mdframed}             
\usepackage{xifthen} 
\usepackage{mathtools}  

\DeclareMathAlphabet\mathbfcal{OMS}{cmsy}{b}{n}

\usepackage{comment}
\usepackage[T1]{fontenc}

\usepackage{array}
\usepackage[shortlabels]{enumitem}

\usepackage{algorithm}
\usepackage[noend]{algpseudocode}

\usepackage{wrapfig} 

\input{insbox}

\newcommand{\ignore}[1]{}

\newcommand{\margincomment}[2]%
{\marginpar{\footnotesize\raggedright {\color{red}#1}: #2}}

\newcommand{\etal}{et~al.}

\newcommand{\myparagraph}[1]{{\medskip\noindent\textbf{#1}}}

\newcommand{\myitparagraph}[1]{{\medskip\noindent\textit{#1}}}

\newcommand{\myif}{\textbf{if}\xspace}
\newcommand{\mythen}{\textbf{then}\xspace}
\newcommand{\myelse}{\textbf{else}\xspace}

\newcommand{\mydo}{\textbf{do}\xspace}
\newcommand{\myforeach}{\textbf{for each}\xspace}

\newcommand{\ListLengths}{\setlength{\itemsep}{0ex}\setlength{\topsep}{1ex}\setlength{\partopsep}{0ex}}

\newtheorem{theorem}{Theorem}
\newtheorem{lemma}{Lemma}[section]
\newtheorem{corollary}[lemma]{Corollary}

\newtheorem{claim}[lemma]{Claim}
\newtheorem{observation}[lemma]{Observation}

\newenvironment{claimproof}[1]{\par\noindent\underline{Proof:}\space#1}{\hfill $\lhd$}

\newcommand{\braced}[1]{{ \left\{ {#1} \right\} }}

\newcommand{\suchthat}{{\;:\;}}

\newcommand{\VertexCover}{\textsc{VertexCover}}
\newcommand{\Partition}{\textsc{Partition}}

\newcommand{\Multicut}{\textsc{MCut}}
\newcommand{\MulticutSum}{\Multicut_{\Sigma}} 
\newcommand{\MulticutMax}{\Multicut_{\max}} 
\newcommand{\MultiwayCut}{\textsc{MwCut}} 
\newcommand{\MultiwayCutSum}{\MultiwayCut_{\Sigma}} 
\newcommand{\MultiwayCutMax}{\MultiwayCut_{\max}}

\newcommand{\ConMultiwayCut}{\textsc{Con}\MultiwayCut}
\newcommand{\ConMultiwayCutMax}{\ConMultiwayCut_{\max}} 

\newcommand{\MulticutTrees}{\textsc{Tree}\Multicut}
\newcommand{\MultiwayCutTrees}{\textsc{Tree}\MultiwayCut} 
\newcommand{\MulticutSumTrees}{\MulticutTrees_{\Sigma}} 
\newcommand{\MultiwayCutSumTrees}{\MultiwayCutTrees_{\Sigma}}

\newcommand{\MultiwayCutMaxTrees}{\MultiwayCutTrees_{\max}} 

\newcommand{\ConMulticutTrees}{\textsc{ConTree}\Multicut}
\newcommand{\ConMulticutMaxTrees}{\ConMulticutTrees_{\max}}
\newcommand{\ConMultiwayCutTrees}{\textsc{ConTree}\MultiwayCut}
\newcommand{\ConMultiwayCutMaxTrees}{\ConMultiwayCutTrees_{\max}}
\newcommand{\ConMultiwayCutSumTrees}{\ConMultiwayCutTrees_{\Sigma}}

\newcommand{\LocalCutTrees}{\textsc{LCutTrees}}
\newcommand{\LocalCutTreesImprop}{\textsc{LCutTrees}^\circ}

\newcommand{\TreeCutting}{\textsc{TreeCutting}}

\newcommand{\problemP}{\mathbb{P}}

\algnewcommand\algorithmiclet{\textbf{let}}
\algnewcommand\Let{\State \algorithmiclet\ }
\algnewcommand\algorithmiccase{\textbf{case}}
\algdef{SE}[CASE]{Case}{EndCase}[1]{\algorithmiccase\ #1}{\algorithmicend\ \algorithmiccase}%
\algtext*{EndCase}%
\algnewcommand{\IfThenElse}[3]{
  \State
  \algorithmicif\ #1\ \algorithmicthen\ #2\ \algorithmicelse\ #3}

\algnewcommand{\LeftComment}[1]{\Statex \hspace{-1em} \(\triangleright\) \emph{#1}}

\newcommand{\dptable}{D}

\newcommand{\multicutM}{\calM}

\newcommand{\multicutQ}{\calQ}
\newcommand{\maxcost}{\mathrm{cost}_{\max}}
\newcommand{\sumcost}{\mathrm{cost}_{\Sigma}}
\newcommand{\maxopt}{\mathrm{opt}_{\max}}
\newcommand{\sumopt}{\mathrm{opt}_{\Sigma}}

\newcommand{\locopt}{\mathrm{opt}_{\mathrm{loc}}}
\newcommand{\locoptim}{\mathrm{opt}_{\mathrm{loc}}^\circ}
\newcommand{\vcover}{\mathrm{vc}}
\newcommand{\maxweight}{\weight_{\max}}
\newcommand{\totalweight}{W}

\newcommand{\treeplus}{T^+}
\newcommand{\jumps}{\calB}
\newcommand{\states}{\calS}
\newcommand{\subsets}{\mathscr{P}}
\newcommand{\add}{\mathbin{\mathsmaller{\sqcup}}}

\newcommand{\nullsol}{\textsc{null}}

\newcommand{\nd}{\mathop{\mathrm{nd}}}
\newcommand{\spcomp}{\mathrm{sp}}

\newcommand{\TwinClasses}{\mathfrak{N}}
\newcommand{\TypeGraph}{\mathfrak{C}}
\newcommand{\TwinEdges}{\mathfrak{F}}

\newcommand{\typeof}{\tau}
\newcommand{\itwin}[1]{#1^{\textrm{i}}}
\newcommand{\ctwin}[1]{#1^{\textrm{c}}}

\newcommand{\CTwinClasses}{\TwinClasses^\textrm{(c)}}
\newcommand{\ITwinClasses}{\TwinClasses^\textrm{(i)}}
\newcommand{\splits}{\mathfrak{X}}

\newcommand{\indsets}{\mathcal{I}}

\newcommand{\termyes}{\bullet}
\newcommand{\termno}{\circ}
\newcommand{\lambdaplus}{\lambda^\termyes}
\newcommand{\lambdaminus}{\lambda^\termno}

\newcommand{\Dplus}{D^\termyes}
\newcommand{\Dminus}{D^\termno}
\newcommand{\Xplus}{X^\termyes}
\newcommand{\Xminus}{X^\termno}

\newcommand{\Bplus}{B^\termyes}
\newcommand{\Bminus}{B^\termno}

\newcommand{\cutof}{\partial}

\newcommand{\Terminals}{{\Gamma}}
\newcommand{\weight}{{\omega}}
\newcommand{\boundary}{\partial}
\newcommand{\ConflictGraph}{\calC}
\newcommand{\ConflictEdges}{\calF}

\newcommand{\lca}{\textrm{LCA}}
\newcommand{\textfalse}{\textrm{false}}
\newcommand{\texttrue}{\textrm{true}}

\newcommand{\children}{\textit{ch}}
\newcommand{\child}{\textit{child}}

\newcommand{\naturals}{{\mathbb{N}}}

\newcommand{\reals}{{\mathbb{R}}}
\newcommand{\posreals}{{\mathbb R}^+}

\newcommand{\razy}{\,} 
\newcommand{\posvalue}[1]{\langle #1\rangle^+}

\newcommand{\NP}{\ensuremath{\mathbb{NP}}\xspace}

\newcommand{\half}{\textstyle{\frac{1}{2}}}

\newcommand{\barI}{\bar{I}}

\newcommand{\calB}{\ensuremath{\mathcal B}\xspace}
\newcommand{\calC}{\ensuremath{\mathcal C}\xspace}

\newcommand{\calF}{\ensuremath{\mathcal F}\xspace}

\newcommand{\calM}{\ensuremath{\mathcal M}\xspace}

\newcommand{\calQ}{\ensuremath{\mathcal Q}\xspace}

\newcommand{\calS}{\ensuremath{\mathcal S}\xspace}

\newcommand{\tildeO}{{\tilde{O}}}

\newcommand{\staru}{u^\ast}
\newcommand{\starv}{v^\ast}
\newcommand{\starrho}{\rho^\ast}

\graphicspath{{./}}

\title{Algorithms for Connected Multicuts of Weighted Trees\thanks{%
Research partially supported by NSF grant CCF-2153723 (Marek Chrobak),
grant 24-10306S of GA ČR (Petr Kolman),
and by AGATE project funded from the Horizon Europe
Programme under Grant Agreement No. 101183743 (Hans Raj Tiwari).
}}

\author[$\dagger$]{Sunny Atalig}
\author[$\dagger$]{Marek Chrobak}
\author[$\S$]{Petr Kolman}
\author[$\ast$]{Claire Mathieu}
\author[$\S$]{Hans Raj Tiwary}

\affil[$\dagger$]{University of California at Riverside, USA}
\affil[$\S$]{Charles University, Prague, Czech Republic}
\affil[$\ast$]{CNRS and Universit{\'e} Paris Cit{\'e}, Paris, France}

\begin{document}
	
\maketitle

\begin{abstract}
In the \emph{connected multicut problem for weighted trees}, the instance is an edge-weighted tree $T = (V,E,\weight)$,
a set of terminals $\Terminals\subseteq V$, and a set of pairwise ``conflicts''
between these terminals represented by edges of a conflict graph $\ConflictGraph = (\Terminals,\ConflictEdges)$.
A \emph{connected multicut} for this instance is a partition $\multicutM$ of the vertex set $V$ into subsets
called the \emph{components of $\multicutM$}, where each
component induces a subtree of $T$, contains at least one terminal, but does not contain any conflicting terminal pair.
Two cost functions for multicuts are considered: in the min-sum variant, we minimize the total weight of the edges between
different components of $\multicutM$, and in the min-max
variant, we minimize the maximum, over all components, of the total weight of edges leaving the component.

In the first part of the paper, we focus on algorithms for the min-max variant,
for which we give a $2$-approximation algorithm for the unweighted case,
a $(2+\epsilon)$-approximation algorithm for the general case, and a mildly exponential exact algorithm for the unweighted case.

In the second part of the paper, we study algorithms that take advantage of the structure of the conflict graph.
We first give a polynomial-time exact algorithm for the min-max variant when $\ConflictGraph$ is a clique
(the variant referred to in the literature as the \emph{multiway cut}).
Next, we generalize this result by providing polynomial-time exact algorithms, for both variants,
for instances whose conflict graph has constant \emph{split complexity} --- a new
parameter that refines the concept of neighborhood diversity previously studied in the context
of parameterized complexity. This unifies and generalizes
previous results for the min-sum variant when conflict graphs are
complete multipartite graphs or unions of cliques.
\end{abstract}

\newpage


\section{Introduction}
\label{sec: introduction}


In the \emph{connected multicut problem for weighted trees}, the instance is an edge-weighted tree $T = (V,E,\weight)$ with $n$ vertices,
a set of \emph{terminals} $\Terminals\subseteq V$, and a set of pairwise ``conflicts''
between these terminals represented by edges of a conflict graph $\ConflictGraph = (\Terminals,\ConflictEdges)$.
A \emph{connected multicut} for this instance is a partition $\multicutM$ of the vertex set into subsets
called the \emph{components of $\multicutM$}, where each
component induces a subtree of $T$, contains at least one terminal, but does not contain any conflicting terminal pair.
When $\ConflictGraph$ is a clique, that is when all terminals conflict with each other, 
multicuts are referred to as \emph{multiway cuts} in the literature.


\myitparagraph{Min-sum cost.}
One natural way to measure the cost of a multicut $\multicutM$ is by
the total weight of the edges between different components of $\multicutM$, often referred to as the \emph{min-sum} cost.
The problem of minimizing the min-sum cost of a connected multicut in trees, 
denoted here $\MulticutSumTrees$, has been well studied in the literature.
It is easy to see that the connectivity requirement is redundant for $\MulticutSumTrees$ --- any multicut without conflicts
in its components can be converted, without increasing cost, into a connected one.
Garg et al.~\cite{garg_etal_primal-dual_multicut_trees_1997} proved that $\MulticutSumTrees$ is $\NP$-hard. 
They also provided a primal-dual algorithm with approximation ratio of $2$, that (under standard assumptions)
is optimal, because their $\NP$-hardness proof shows that for star
graphs, even without weights, the problem is polynomial-time equivalent to the $\VertexCover$ problem.
As shown by Calinescu~\etal~\cite{calinescu_etal_multicuts_bounded_degree_treewidth_2003}, the unweighted version of $\MulticutSumTrees$ remains $\NP$-hard
even for binary trees, although for such trees an approximation ratio $1+\epsilon$ can be achieved, for each $\epsilon > 0$.
Yannakakis~\etal~\cite{yannakakis_etal_cutting_and_partitioning_1983} observed (prior to~\cite{garg_etal_primal-dual_multicut_trees_1997}) that $\MulticutSumTrees$
is polynomial-time solvable for a constant number of terminals.
The complexity of $\MulticutSumTrees$ parametrized by the number of edges in the optimum multicut has been extensively studied
\cite{guo_niedermeier_multicut_in_trees_2005,liu_etal_generalized_multiway_cut_in_trees_2012,bousquet_etal_kernel_multicut_in_trees_2009,
chen_etal_multicut_in_trees_vertex_cover_2011,kanj_etal_improved_algorithms_cut_problems_on_trees_2015}.

Some earlier work on $\MulticutSumTrees$ studied its subproblems with some restrictions on the structure of the conflict graph. 
The case when $\ConflictGraph$ is a clique is the multiway cut problem for trees,
for which, as observed earlier~\cite{chopra_rao_multiway_cut_polyhedron_1991,dahlhaus_1994_complexity_multiterminal_cuts,hasan_motwani_query_optimization_1995,costa_billionnet_multiway_cut_trees_2004},
the dynamic-programming approach yields an $O(n)$-time algorithm.
Motivated by applications to query optimization in database applications, 
Hasan and Motwani~\cite{hasan_motwani_query_optimization_1995} studied a more general model
when  $\ConflictGraph$ is a complete multi-partite graph, for which they gave an $O(n^2)$-time algorithm. 
Another natural extension of the clique case is when  $\ConflictGraph$ is a union of a constant number of cliques,
for which polynomial-time algorithms were given by
Liu~\etal~\cite{liu_etal_generalized_multiway_cut_in_trees_2012} and Kanj~\etal~\cite{kanj_etal_improved_algorithms_cut_problems_on_trees_2015}.


\myitparagraph{Min-max cost.}
The \emph{min-max} cost of a multicut is the maximum weight of the boundaries of $\multicutM$'s components,
where a boundary of a component is the set of edges leaving this component. 
By $\ConMulticutMaxTrees$ we denote the problem of optimizing the min-max objective for connected multicuts in trees.
It was first considered by Svitkina and Tardos~\cite{svitkina_tardos_min-max_multiway_cuts_2004}, who
observed that the connectivity requirement in $\ConMulticutMaxTrees$ is essential (see also Appendix~\ref{sec: connectivity requirement for min-max}).
A systematic study of  $\ConMulticutMaxTrees$ was initiated by Kolman and Tiwary~\cite{kolman_tiwary_2026},
who focussed on its multiway-cut variant $\ConMultiwayCutMaxTrees$ (where $\ConflictGraph$ is a clique) and
proved that it is fixed-parameter tractable when parametrized by the number of terminals,
and showed that its natural polytope has superpolynomial extension complexity.


\myparagraph{Our results.}
In the first part of this paper, in Section~\ref{sec: 2-approximation minmax}, we focus on algorithms for $\ConMulticutMaxTrees$
that apply to arbitrary conflict graphs. Unlike $\MulticutSumTrees$, that has substantial literature, 
the only known complexity results for $\ConMulticutMaxTrees$ are those in~\cite{kolman_tiwary_2026}, that 
only address its multiway-cut restriction $\ConMultiwayCutMaxTrees$ (when $\ConflictGraph$ is a clique).
By the same argument as for $\MulticutSumTrees$, $\ConMulticutMaxTrees$ cannot be approximated with ratio smaller than $2$.
The idea behind the LP-formulation for $\MulticutSumTrees$ in~\cite{garg_etal_primal-dual_multicut_trees_1997}
does not apply to $\ConMulticutMaxTrees$, and,
in fact, $\ConMulticutMaxTrees$ does not seem to have a simple relaxation useful for bounding the optimal min-max cost.
Nevertheless, we formulate an auxiliary optimization problem called the \emph{local cut} that does have a simple linear
program relaxation  and show that $\ConMulticutMaxTrees$ can be reduced to the problem of
computing local cuts, yielding an algorithm with an optimal approximation ratio.


\begin{theorem}\label{thm: approximation for min-max variant}
(a) For each constant $\epsilon > 0$, there is a polynomial-time $(2+\epsilon)$-approximation algorithm for
{\normalfont{$\ConMulticutMaxTrees$}}.
(b) There is a polynomial-time $2$-approximation algorithm for the unweighted variant of {\normalfont{$\ConMulticutMaxTrees$}}.
\end{theorem}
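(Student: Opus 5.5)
The plan is to guess the optimal cost and reduce the min-max problem to repeated solutions of an auxiliary \emph{local cut} problem. Fix a threshold $\lambda$; for (b) this ranges over the integers $0,\dots,n-1$, and for (a) over the $O(\log_{1+\epsilon}(\totalweight/\maxweight))$ powers of $(1+\epsilon)$ after standard rounding of weights. We either build a connected multicut with $\maxcost\le 2\lambda$ (or $(2+\epsilon)\lambda$), or certify that $\maxopt>\lambda$. The construction is bottom-up. Root $T$ at an arbitrary vertex and process vertices in post-order. Each subtree $T_v$ is handled by cutting off, near $v$, a connected component whose own boundary is cheap, and passing the remaining ``open'' part of the subtree upward.

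The local cut problem is stated for a vertex $v$ whose children subtrees have already been partially processed, leaving a set of open terminals. We must choose a set $S$ of edges in $T_v$ such that the component $K$ of $v$ in $T_v\setminus S$ contains no conflicting pair, and such that every terminal cut off by $S$ already lies in a closed component. The objective is $\weight(S)$, which is exactly the boundary weight of $K$ inside $T_v$, plus possibly the parent edge. This is a multicut on a tree where every conflict pair must be separated from each other only within $K$. Equivalently, for each conflicting pair $(s,t)$, some edge on $v$--$s$ or $v$--$t$ must be cut.

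This gives a covering LP with a constraint $x(P_{vs})+x(P_{vt})\ge1$ for each conflict. Every such constraint involves two root-paths, so it behaves like vertex cover. Set $y_e=\max$ over $e$'s path of the flow. Threshold rounding at $1/2$ cuts, for each conflict, an edge on one of the two paths. That edge may lie on the side where the path mass is only about $1/2$, so we need to refine it: cut the topmost edge $e$ with prefix sum of $x$ reaching $1/2$ along each path. Each edge is then charged at most twice its LP value, giving a 2-approximation for local cut. One must also argue that an optimal multicut with cost $\le\lambda$ induces a feasible local-cut solution of value $\le\lambda$ at every processed vertex. The argument takes the component of $v$ in the optimum, restricted to the not-yet-closed part, and uses an exchange argument to show that earlier greedy closings did not hurt. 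This exchange is the main obstacle. Components cut off below $v$ have boundary $\le 2\lambda$ by construction, but the edges they share with the rest must not be double-counted. I would handle this by closing a component only when its parent edge is cut, and by bounding each component's boundary as its local cut value plus that parent edge, with the parent edge taken heavy only when needed.

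For the unweighted case (b), the LP for local cut on the subtree is half-integral after path-contraction to a star-like structure, which matches the Garg et al.\ vertex-cover argument. Integrality of $\lambda$ lets us round exactly, giving factor $2$. For weighted trees (a), the guess of $\lambda$ is only accurate to within $1+\epsilon$. The single extra parent edge, which may be heavy, is handled by also guessing, for each component, whether its parent edge exceeds $\epsilon\lambda$. This is a DP over a constant-size state and costs the additional $\epsilon$. Running over all guesses and returning the best multicut found completes the proof.
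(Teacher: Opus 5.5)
\textbf{There is a genuine gap: the reduction from the min-max problem to local cuts is never supplied.} Your LP for the local cut and the overall plan (guess $\lambda$, build bottom-up, reduce to local cuts) match the paper. But the step you call ``the main obstacle'' is left unresolved, and the accounting you do give does not yield factor $2$.

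\emph{The parent-edge accounting.} You bound a closed component by its $2$-approximate local-cut value ($\le 2\lambda$ ``by construction'') plus its parent edge. That allows up to $2\lambda+\weight(u\pi_u)$, i.e.\ nearly $3\lambda$, because an optimal multicut may cut edges of weight close to $\lambda$. Guessing whether $\weight(u\pi_u)>\epsilon\lambda$ does not repair this. What is needed is to solve $u$'s local cut against the reduced target $\lambda-\weight(u\pi_u)$, which costs no $\epsilon$ at all.

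\emph{The exchange argument.} You never specify the DP state, the rule for when an open component is closed, or an invariant relating computed partial solutions to an optimal one. In particular, nothing shows that the open part $K$ fixed at $v$ and passed upward stays compatible with what an optimal solution needs above $v$. That is exactly the exchange argument you leave unproved.

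\textbf{The missing idea is the paper's invariant-based DP.} For every vertex $u$ it computes either $\nullsol$ or a full multicut $D_u$ of $T_u$ such that:
\begin{itemize}
\item $D_u$ has max-cost $\le\rho L$;
\item the component $X_u\ni u$ is a \emph{proper} local cut with the tightened bound $\weight(\cutof X_u)\le\rho\,(L-\weight(u\pi_u))$.
\end{itemize}
Once $u\pi_u$ is cut, $X_u$ then costs at most $\rho(L-\weight(u\pi_u))+\weight(u\pi_u)\le\rho L$.

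At $v$, the oracle is queried with target $L-\weight(v\pi_v)$, but may only sever edges $u\pi_u$ for descendants $u$ at \emph{any} depth with $D_u\ne\nullsol$. It then sets $D_v=\braced{X_v}\cup\bigcup_{u\pi_u\in\cutof X_v}D_u$. Nothing open is passed upward: an ancestor either cuts $v\pi_v$ and reuses $D_v$, or its own local cut reaches past $v$ to deeper $D_u$'s.

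The invariant is: if $T_v$ has a multicut of max-cost $\le L$ whose $v$-component has boundary $\le L-\weight(v\pi_v)$ inside $T_v$, then $D_v\ne\nullsol$. This makes the exchange step immediate. The optimal component through $v$ only cuts edges $u\pi_u$ below which the optimum restricts to such a multicut, so it is feasible for the restricted query. The edge restriction and properness must then be built into the local-cut oracle, e.g.\ by contracting forbidden edges and guessing a terminal; your proposal does not address this either.

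\textbf{Two smaller points.}
\begin{itemize}
\item Your per-edge charging for threshold rounding at $\frac12$ is false. An edge whose prefix sum crosses $\frac12$ can have tiny $x_e$ and still be paid in full. The paper instead uses $\int_0^1\weight(\cutof B_\rho)\,d\rho\le$ LP to pick some $\rho^*\in(0,\frac12)$ whose cut costs at most twice the LP.
\item The $\epsilon$ in (a) comes only from searching for $L$ over the reals, starting from an upper bound within factor $2|\Terminals|$ of the optimum (e.g.\ from the min-sum $2$-approximation). It does not come from heavy parent edges. Part (b) needs no half-integrality: the unweighted optimum is an integer, so binary search over integers loses nothing.
\end{itemize}
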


Somewhat surprisingly, the approximation algorithm above can be modified to yield
a polynomial-time Turing reduction from unweighted $\ConMulticutMaxTrees$ to $\VertexCover$, which is simultaneously
approximation-preserving and parameter-preserving with respect to the objective functions.
In particular, by applying known vertex cover algorithms \cite{harris_narayanaswamy_faster_algorithm_vertex_cover_parameterized_by_solution_size_2024,clinch_etal_faster_randomized_algorithm_vertex_cover_automated_approach_2025,brankovic_fernau_novel_parameterized_approximation_vertex_cover_2013}, we obtain mildly-exponential algorithms for the unweighted case.


\begin{theorem}\label{thm: paramterized for unweighted min-max}
Consider the variant of {\normalfont{$\ConMulticutMaxTrees$}}, where the input instances consist of
an unweighted tree $T$, a conflict graph $\ConflictGraph$, and a cost threshold~$k$.
\begin{description}
	\item{(a)} 
	There are algorithms with running times $O^*(1.2529^k)$ and $O^*(1.1996^{|\Gamma|})$ that determine whether a multicut 
	of cost at most $k$ exists\footnote{$O^*()$ hides polynomial factors.}.
	\item{(b)} 
	There is an algorithm with running time $O^*(1.09^k)$ that
	either computes a multicut with cost at most $1.5k$ or determines that every multicut has cost greater than $k$.
\end{description}
\end{theorem}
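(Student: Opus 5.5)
The plan is to reduce the unweighted threshold problem to a polynomial number of $\VertexCover$ instances, mirroring the local-cut reduction behind Theorem~\ref{thm: approximation for min-max variant}(b), and then plug in known exact and parameterized-approximation algorithms for vertex cover.

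\textbf{Step 1: the local subproblem.} Root $T$ and process it bottom-up. At a vertex $v$, the subproblem is to cut edges inside the subtree so that the part containing $v$ is conflict-free and every component has boundary at most $k$. In the unweighted case the boundary of a component is the number of cut edges incident to it. The key structural claim is that the subtree hanging below each child edge has already been resolved into a choice between two options: keep the child's residual terminal set attached to $v$, or cut the child edge. Cutting costs one unit of the component's budget. Conflicts between the residual terminal sets of different children of $v$ define a conflict graph on the children. The children kept attached must form an independent set, so the cut children form a vertex cover of this graph. Minimizing the number of cut child edges at $v$ is therefore exactly $\VertexCover$ on the children's conflict graph, possibly with forced vertices: children whose sets conflict with $v$'s own terminals, or children that cannot stay attached within budget.

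\textbf{Step 2: the reduction.} Following the local-cut argument, the multicut has cost at most $k$ if and only if every local vertex-cover instance produced this way has a cover of size at most $k$, minus the constant edges already forced. Each instance has at most $k$ as its parameter and at most $|\Terminals|$ relevant vertices. Only children carrying terminals matter, and a child's residual set can be represented by a single terminal class, which gives a bound of $|\Terminals|$. The reduction is approximation-preserving: an $\alpha$-approximate cover at every local step yields a multicut of cost at most $\alpha k$. This is because the per-component boundary is exactly the local cover size plus at most one, namely the parent edge. That extra edge has to be absorbed by the argument, e.g.\ by accounting for the parent edge in the child's instance.

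\textbf{Step 3: plug in the solvers.} For (a), use the $O^*(1.2529^k)$ parameterized algorithm of Harris and Narayanaswamy, and the $O^*(1.1996^{n})$ exact algorithm applied with $n\le|\Terminals|$. For (b), use the Brankovic--Fernau parameterized $1.5$-approximation for vertex cover, which runs in $O^*(1.09^k)$. Since there are polynomially many calls, the running times carry over.

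\textbf{Main obstacle.} The hard part is Step 2's claim that greedy bottom-up local decisions are globally consistent. Cutting fewer child edges at $v$ may leave $v$'s component with a larger residual terminal set, and that set creates more conflicts higher up the tree. I would first try to prove an exchange argument: among optimal local covers, one may choose any, because the residual set only affects the parent through the single edge $(v,\mathrm{parent})$, and that edge can always be cut at extra cost one, which is already accounted for in the budget. I would also check that the ratio $1.5$ survives this additive $+1$.
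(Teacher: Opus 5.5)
Your Step~3 is fine, and charging the parent edge inside the lower vertex's own instance matches the paper. But Steps~1--2 rest on a claim that is false as stated, and it is the whole difficulty rather than a technicality.

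In your scheme, a component's shape is decided piecemeal at every vertex it contains, and each vertex passes a residual terminal set upward. Which optimal local cover you pick changes that residual set, and therefore the conflicts seen higher up. Concretely, let the root $p$ have children $v$ and a terminal leaf $c$, let $v$ have terminal leaves $a,b$, let the conflicts be $ab$ and $ac$, and take $k=1$. The multicut $\{\{a\},\{p,v,b,c\}\}$ has max-cost $1$. At $v$, both $\{a\}$ and $\{b\}$ are minimum covers. If you cut $vb$, the residual set $\{a\}$ conflicts with $c$ at $p$. Whether you then cut $pv$ or $pc$, the component $\{v,a\}$ or $\{p,v,a\}$ has boundary $2$. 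Your exchange argument fails exactly here: cutting the parent edge is not ``already accounted for,'' because it is charged to $v$'s component, which has already spent its budget on the local cover. So your procedure's outcome depends on an arbitrary tie-break. It gives neither a decision procedure for (a) nor a ratio for (b).

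The missing idea (Lemma~\ref{lem: min-max approximation reduction}) is to decide each component once, at its top vertex, so that no residual sets ever propagate.
\begin{itemize}
\item The entry $\dptable_v$ assumes $v\pi_v$ is cut. It is either $\nullsol$, or a multicut of $T_v$ with all costs at most $\rho L$ and $v$'s component of cost at most $\rho(L-\weight(v\pi_v))$.
\item $v$'s whole component is computed in one shot as a proper local cut $X_v$ of $T_v$. Its boundary is restricted to $E^\rho_v=\{u\pi_u : u \text{ a proper descendant of } v,\ \dptable_u\neq\nullsol\}$, which includes descendants at any depth, not only children. Then $\dptable_v=\{X_v\}\cup\bigcup_{u\pi_u\in\partial X_v}\dptable_u$.
\item Correctness is the invariant that if $T_v$ admits such a multicut with $\rho=1$, then $\dptable_v\neq\nullsol$. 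This holds because that multicut's top component is itself an $E^\rho_v$-restricted cut of cost at most $L-\weight(v\pi_v)$.
\item For unweighted trees, the restricted cut reduces to $\LocalCutTreesImprop$. You guess a terminal of $X_v$, contract the path to it and all edges outside $E^\rho_v$, and merge terminals, adding a conflict with $\starv$ when merged terminals conflict.
\item Observation~\ref{obs: reduction from unweighted local cut to vertex cover} then yields a $\VertexCover$ instance on at most $|\Terminals|$ vertices with parameter at most $L-\weight(v\pi_v)-|B|\le k$.
\end{itemize}
Since $\rho(k-1)+1\le\rho k$, the reduction is exact for $\rho=1$ and preserves the ratio $1.5$. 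After that, your Step~3 applies verbatim.
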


In contrast, the fastest known algorithm for constructing an optimal solution to unweighted $\MulticutSumTrees$ runs in 
$O^*(1.554^k)$ time~\cite{kanj_etal_improved_algorithms_cut_problems_on_trees_2015,chen_etal_multicut_in_trees_vertex_cover_2011}, where $k$ is the min-sum objective.

\smallskip


The second part of the paper, Sections~\ref{sec: polynomial time algorithm multiway minmax}~and~\ref{sec: neighborhood diversity algorithms}, 
studies exact algorithms for both cost versions, with running times dependent on the structure of the conflict graph.
We start, in Section~\ref{sec: polynomial time algorithm multiway minmax},
by presenting a polynomial-time algorithm for $\ConMultiwayCutMaxTrees$, the restriction of $\ConMulticutMaxTrees$ to
conflict graphs that are cliques.


\begin{theorem}\label{thm: poly algorithm for min-max for trees}
\normalfont{$\ConMultiwayCutMaxTrees$} can be solved in $O(n \log \totalweight)$ time for integer weights 
(where $\totalweight$ is the total edge weight) and $O(n^2)$ time for real weights.
\end{theorem}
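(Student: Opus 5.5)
The plan is to reduce optimization to a decision problem and solve the decision problem by a linear-time greedy/dynamic-programming pass over the tree. For a threshold $\lambda$, we ask whether there is a connected multiway cut in which every component has boundary weight at most $\lambda$. Since the boundary of any component is a set of tree edges, the optimum value is the boundary weight of some component. For integer weights, feasibility is monotone in $\lambda$, so binary search over $[0,\totalweight]$ with an $O(n)$ decision procedure gives $O(n\log \totalweight)$. For real weights, we will show that the decision procedure can be run parametrically, or that the optimum lies among $O(n)$ candidate values computable in a structured way. Each candidate is then tested in $O(n)$ time, which gives $O(n^2)$.

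\textbf{Decision procedure.} Root $T$ at an arbitrary vertex and process vertices bottom-up. Because $\ConflictGraph$ is a clique, each component contains exactly one terminal, so a component is a subtree containing a unique terminal. For a vertex $v$ with subtree $T_v$, consider partial solutions in which every component fully inside $T_v$ is closed and has boundary at most $\lambda$. The component containing $v$ is "open": it may continue through the parent edge. We distinguish two states for the open component: it already contains a terminal, or it does not yet contain one. In each state we keep the minimum possible current boundary weight of the open component, counting only edges inside $T_v$ plus, if it is terminal-free, what is needed later.

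At $v$, each child $u$ is either cut off or merged. Cutting $u$ requires $u$'s open component to contain a terminal and to have boundary plus $\weight(uv)$ at most $\lambda$; it adds $\weight(uv)$ to $v$'s boundary. Merging adds $u$'s boundary and is allowed only if at most one merged part contains a terminal. Choosing which children to cut so as to minimize $v$'s boundary is a greedy choice. For the terminal-containing state we select at most one terminal-bearing child to merge. With terminal-free children forced to merge, the choice reduces to sums and a best single exchange, which is computable in $O(\deg v)$. Hence the whole decision procedure runs in $O(n)$.

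\textbf{Correctness of minimization.} The main step is to argue that keeping only the minimum boundary per state suffices. Smaller open boundary at $v$ never hurts: in every later use it only enlarges feasibility, both for cutting at $v$'s parent and for merging. This exchange argument gives monotonicity, and hence the validity of both binary search and the DP.

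\textbf{The real-weight case.} This is the hardest part. I would try to show that the optimum $\lambda^*$ equals, for some vertex $u$ where a cut becomes necessary, a value $\weight(uv)+b_u$ that depends on $\lambda$ only through a discrete choice. The aim is to reduce the candidate set to $O(n)$ values, for instance the boundaries computed by a "tight" run of the DP. An alternative is to run the decision procedure while maintaining $\lambda$ symbolically as an interval and resolve each comparison by one $O(n)$ test, but that could give more than $O(n)$ tests. So I would first try a direct argument that at most $n$ critical thresholds arise, one per edge, each determined by the earlier ones, and evaluate them in order.
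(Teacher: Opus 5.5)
Your decision procedure and the integer-weight bound are correct and match the paper. Your two states (open component with or without a terminal, minimizing its boundary subject to all closed components having boundary at most $\lambda$) are the paper's $D_v^1(L)$ and $D_v^0(L)$. Your exchange argument is what justifies storing only these minima, and binary search over $[0,W]$ gives $O(n\log W)$. The paper first converts to binary normal form, while your best-single-exchange step handles arbitrary degree directly.

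The real-weight claim, however, is not proved. Your last paragraph only lists things you would try, and the fact the $O(n^2)$ bound rests on is never established. That fact is the following. As functions of $L$, all DP values are non-increasing, right-continuous step functions, and they depend on $L$ \emph{only} through the $n$ cut-feasibility tests $\omega(u\pi_u)+D_u^1(L)\le L$. Since the left side is non-increasing in $L$, each test holds exactly for $L\ge L_u:=\min\{L : \omega(u\pi_u)+D_u^1(L)\le L\}$. Hence every $D_v^\sigma$ has all its breakpoints in the $n$-element set $\{L_u\}_{u\in V}$.

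Given this, the paper proceeds as follows:
\begin{itemize}
\item It stores each $D_v^\sigma$ as a list of $O(n)$ (breakpoint, value) pairs.
\item It obtains the lists at $v$ from those at its children by pointwise sums and minima, in $O(n)$ time.
\item It computes $L_v=\min_i\max\{a_i,\,\omega(v\pi_v)+D_v^1(a_i)\}$ over the breakpoints $a_i$ of $D_v^1$ (with $a_0=0$).
\item It returns $L_r$ for the root, which has a dummy parent edge of weight $0$.
\end{itemize}
This is $O(n^2)$ in total.

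Note that ``testing $O(n)$ candidates in $O(n)$ each'' does not by itself help. The candidates $L_u$ are defined through the functions $D_u^1$, and once you can compute them you already have the answer $L_r$.

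Conversely, the parametric alternative you dismiss would need only $n$ tests, again because of this structure. At the unknown optimum $\lambda^*$, every $D_u^1(\lambda^*)$ is a constant determined by the gates below $u$. So you can resolve the gates $\omega(u\pi_u)+D_u^1(\lambda^*)\le\lambda^*$ bottom-up. Each gate costs one $O(n)$ run of the decision procedure at $c=\omega(u\pi_u)+D_u^1(\lambda^*)$, plus one run with strict inequalities to detect the tie $c=\lambda^*$. Either route closes the gap, but one of them has to be carried out.
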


This significantly improves the previous results~\cite{kolman_tiwary_2026}, where the solution
involved a parametrization by the number of terminals.
Our algorithm is substantially more involved than the polynomial-time algorithm for the min-sum 
analog~\cite{chopra_rao_multiway_cut_polyhedron_1991,dahlhaus_1994_complexity_multiterminal_cuts,hasan_motwani_query_optimization_1995,costa_billionnet_multiway_cut_trees_2004}.
In essence, we show that in polynomial time we can verify, using dynamic programming, if there exists a solution with cost not exceeding a
given value $L$, so for integer weights the optimum can be found in polynomial time. 
To extend it to real weights, the dynamic programming algorithm processes \emph{whole cost functions}, which we show 
to be piece-wise constant with only a linear number of discontinuities.
In addition to its own interest, this algorithm also serves as a stepping stone towards
much more general results that we discuss later in the paper.

These more general results are obtained via parametrizations.
The hardness result of Garg et al.~\cite{garg_etal_primal-dual_multicut_trees_1997} applies even
when the input tree $T$ is a star, or when $T$ has depth 2 and $\ConflictGraph$ is a 
matching, so fixing some commonly used structural parameters of the input graph (say, maximum vertex degree, 
modular-width, etc.) cannot produce efficient algorithms for $\MulticutSumTrees$.
Parametrizations by the number of edges in the optimum solution produce running times that are overly pessimistic
for some simple instances, for example when the confict graph is a clique.
This raises a natural question: is there a parametrization \emph{based only on the conflict graph structure}
that includes as special cases the known variants of
$\MulticutSumTrees$ with polynomial-time algorithms, namely the
instances when the conflict graph is a complete multi-partite graph~\cite{hasan_motwani_query_optimization_1995}
or a union of a constant number of (possibly overlapping) cliques~\cite{liu_etal_generalized_multiway_cut_in_trees_2012,kanj_etal_improved_algorithms_cut_problems_on_trees_2015}. 
The same question applies to $\ConMulticutMaxTrees$.

One simple parameter that naturally captures the relevant structure of confict graphs is called
\emph{neighborhood-diversity} (see~\cite{lampis_algorithmic_meta_theorems_for_restrictions_of_treewidth_2012}), denoted here by $\nd(\ConflictGraph)$,
which groups vertices into ``types'' according to their neighborhoods. 
In Section~\ref{sec: neighborhood diversity algorithms}, we give polynomial-time
algorithms for conflict graphs with constant $\nd(\ConflictGraph)$.


\begin{theorem}\label{thm: neighborhood diversity result}
\begin{description}
\item{(a)} {\normalfont{$\MulticutSumTrees$}} can be solved in time $O(n+|\ConflictEdges|+4^{\nd(\ConflictGraph)}|\Terminals|)$.
\item{(b)} {\normalfont{$\ConMulticutMaxTrees$}} can be solved in time $O(n+|\ConflictEdges|+4^{\nd(\ConflictGraph)}|\Terminals|^2)$ for 
real weights and in time $O(n+|\ConflictEdges|+4^{\nd(\ConflictGraph)}|\Terminals|\log W)$ for integer weights,
where $\totalweight$ is the total edge weight.
\end{description}
\end{theorem}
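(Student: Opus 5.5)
We will prove the theorem with a tree dynamic program whose states record which neighborhood-diversity types the component containing the current vertex has absorbed. First compute the type partition of $\ConflictGraph$ in $O(|\Terminals|+|\ConflictEdges|)$ time by partition refinement. This gives $d=\nd(\ConflictGraph)$ classes, each a clique or an independent set, and between two classes either all edges are present or none. Then build the $d$-vertex type graph $\TypeGraph$: it has a loop on each clique class and an edge between each pair of adjacent classes.

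The key structural fact is that the conflict-freeness of a component depends only on which classes it touches, together with a multiplicity flag for clique classes. Call a set $S$ of types \emph{admissible} if it is independent in $\TypeGraph$, ignoring loops. A component is conflict-free iff its type set $S$ is admissible and it contains at most one terminal from each clique type in $S$. For clique types the relevant state is therefore ``present'' versus ``absent'', but only one terminal is allowed. So a state is a subset $S\subseteq[d]$ of types already present in the partial component at the current vertex; that is $2^d$ states. Combining a child's partial component with the parent's, when the connecting edge is kept, requires the two type sets to be disjoint on clique types and their union to be admissible. Summing over pairs of states costs $4^d$ per merge. The merges can be charged only to terminal-relevant vertices: Steiner-compress $T$ to its terminal-induced structure, with non-terminal paths replaced by their minimum-weight edge for min-sum. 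This gives the $4^d|\Terminals|$ term plus linear preprocessing.

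For part (a), the recurrence is $D_v(S)=$ the minimum min-sum cost in the subtree of $v$ such that the component of $v$ has type set $S$. Processing a child $u$ with edge $e=uv$, we either cut $e$ and pay $\weight(e)+\min_{S'\neq\emptyset\text{ or allowed}}D_u(S')$, or join and take $\min_{S_1\add S_2=S}D_v(S_1)+D_u(S_2)$. Because connectivity is redundant for min-sum, components without terminals can be absorbed, so the requirement that every component contain a terminal is harmless.

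For part (b), we reuse the machinery of Theorem~\ref{thm: poly algorithm for min-max for trees}. Fix a threshold $L$ and test whether a multicut of min-max cost at most $L$ exists. The DP value then becomes $D_v(S)=$ the minimum boundary weight accumulated so far by $v$'s open component, subject to every closed component having boundary at most $L$. Closing the component at a cut edge requires its boundary plus $\weight(e)$ to be at most $L$, and also adds $\weight(e)$ to the parent side. Each closed component must contain a terminal, so $S\ne\emptyset$ is required on closing. Binary search over $L$ gives the $\log W$ factor for integer weights. For real weights, as in Theorem~\ref{thm: poly algorithm for min-max for trees}, we propagate entire functions of $L$, which are piecewise constant with $O(|\Terminals|)$ breakpoints. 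Each $(S_1,S_2)$ merge then costs $O(|\Terminals|)$, giving $4^d|\Terminals|^2$.

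The main obstacle is part (b). The min-max DP in Theorem~\ref{thm: poly algorithm for min-max for trees} for cliques exploits that a component holds at most one terminal. Here a component may contain many terminals from independent types, so we must check that the exchange and monotonicity arguments behind that DP still hold with the richer state $S$. The most delicate point is the bound on breakpoints of the value functions after min-plus merges over $S_1\add S_2=S$. We would first try to show that each $D_v(S)$, as a function of $L$, has breakpoints only at the boundary weights of subtrees hanging at terminals.
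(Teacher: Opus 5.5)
Your dynamic program for part (a), and for part (b) at a fixed threshold $L$, is essentially the paper's. States are type sets that are independent in the type graph with loops ignored. A join ranges over pairs that are disjoint on clique types but may overlap on independent types, which are the paper's splits $\splits(s)$. The min-max value is the least top-boundary weight subject to every closed component having boundary at most $L$, and binary search gives the $\log\totalweight$ factor. Two small repairs are needed:
\begin{itemize}
\item Your join is written $\min_{S_1\add S_2=S}$, which denotes a partition. Read literally, it would forbid two terminals of the same independent type in different child subtrees from sharing a component. It must range over splits, as your prose says.
\item Your compression to $O(|\Terminals|)$ vertices is stated only for min-sum. It is also valid for max-cost (this is the normal-form lemma), and you need it for the $|\Terminals|^2$ and $|\Terminals|\log\totalweight$ bounds.
\end{itemize}

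The genuine gap is the real-weight case of (b). You assert that the value functions are piecewise constant with $O(|\Terminals|)$ breakpoints, then flag exactly this as unresolved. Your suggested route, placing breakpoints at ``boundary weights of subtrees hanging at terminals'', is not the right set and gives no linear bound. A per-function count is also not enough: for sums and minima to cost $O(|\Terminals|)$ without breakpoints accumulating up the tree, all functions must have their breakpoints in one \emph{common} set of size $O(|\Terminals|)$.

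The missing observation is that $L$ enters the recurrences only through the closing test at each edge $u\pi_u$. That is, it enters only via $H_u(L)=\weight(u\pi_u)$ if $\weight(u\pi_u)+D_u^*(L)\le L$, and $H_u(L)=\infty$ otherwise. Since $D_u^*$ is non-increasing and right-continuous, the set of $L$ passing this test is upward closed and has a minimum $L_u$ (possibly $\infty$). So $H_u$ is a step function with the single breakpoint $L_u$. By induction, every $D_v^s$ is a $(\min,+)$-combination of the $H_u$ for $u$ in $T_v$, so all its breakpoints lie in $\{L_u\}$, one value per vertex.

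You also need to extract $L_v$ from the list representation of $D_v^*$. One way is $L_v=\min_i\max\{a_i,\weight(v\pi_v)+D_v^*(a_i)\}$ over its breakpoints $a_i$, with $a_0=0$. With these two ingredients, each of the $O(4^{\nd(\ConflictGraph)})$ split evaluations per vertex is an $O(|\Terminals|)$ list merge. This gives the claimed $O(4^{\nd(\ConflictGraph)}|\Terminals|^2)$ term.
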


For complete $c$-partite graphs and for $q$-clique covers,
Theorem~\ref{thm: neighborhood diversity result} already matches the running times 
in~\cite{hasan_motwani_query_optimization_1995,liu_etal_generalized_multiway_cut_in_trees_2012,kanj_etal_improved_algorithms_cut_problems_on_trees_2015}
if the parameters $c$ and $q$ are constant. 

To further improve Theorem~\ref{thm: neighborhood diversity result}, we introduce a
generalized notion of types and a new ``split-complexity'' parameter $\spcomp(\TypeGraph)$, defined on a ``generalized type graph'' $\TypeGraph$ associated with $\ConflictGraph$.
Roughly, $\spcomp(\TypeGraph)$ measures the number of ways one can divide an independent set in $\ConflictGraph$ into two parts knowing only which types it contains.
We then show (Section~\ref{sec: generalized type graphs})
that for both cost versions the problem is tractable for polynomially-bounded $\spcomp(\TypeGraph)$.


\begin{theorem}\label{thm: generalized neighborhood diversity result}
\begin{description}
Given a generalized type graph $\TypeGraph$ for the conflict graph $\ConflictGraph$:
\item{(a)} {\normalfont{$\MulticutSumTrees$}} can be solved in time $O(n+\spcomp(\TypeGraph) |\Terminals|)$.
\item{(b)} {\normalfont{$\ConMulticutMaxTrees$}} can be solved in time $O(n+\spcomp(\TypeGraph) |\Terminals|^2)$ for 
real weights and in time $O(n+\spcomp(\TypeGraph) |\Terminals| \log W)$ for integer weights,
where $\totalweight$ is the total edge weight.
\end{description}
\end{theorem}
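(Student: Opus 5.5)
The plan is to extend the tree dynamic program for the clique case (Theorem~\ref{thm: poly algorithm for min-max for trees}) and for bounded neighborhood diversity (Theorem~\ref{thm: neighborhood diversity result}). The only change is that the state no longer records a subset of types. It records one of the ``split classes'' of the generalized type graph $\TypeGraph$.

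\textbf{Preprocessing.} We may assume every non-terminal vertex of degree at most one has been contracted away, and that terminals sit at the leaves. This is standard and costs $O(n)$. Afterwards the tree has $O(|\Terminals|)$ relevant vertices, which is why the DP term carries a factor $|\Terminals|$ rather than $n$.

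\textbf{States.} Root $T$. For a vertex $v$, the component containing $v$ restricted to the subtree $T_v$ contains an independent set of $\ConflictGraph$. For the purposes of merging with the rest of the tree, only its generalized type signature in $\TypeGraph$ matters; I take this to be the defining property of a generalized type graph. The table entry $D_v[\sigma]$ is the minimum min-sum cost inside $T_v$, over partial solutions whose open component at $v$ has signature $\sigma$.

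\textbf{Merging.} At $v$ we process children one at a time. For child $u$ we either cut the edge $(v,u)$, paying $\weight(v,u)$ and requiring $u$'s component to be closed, or we join the two open sets. Joining is legal iff the union of signatures is still independent. The key accounting is to invert the merge. Rather than pairing every signature at $v$ with every signature at $u$, which would cost $\spcomp^2$, we enumerate the resulting signature $\sigma$ together with a split of it into two parts. By definition, the number of such (set, split) pairs is $\spcomp(\TypeGraph)$, so each merge costs $O(\spcomp(\TypeGraph))$. With $O(|\Terminals|)$ merges this gives part~(a).

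\textbf{Min-max version.} For (b) I would reuse the technique behind Theorem~\ref{thm: poly algorithm for min-max for trees}. First fix a threshold $L$ and run a feasibility DP. Its state stores the signature $\sigma$ and minimizes the boundary weight accumulated so far by the open component, subject to every closed component having boundary at most $L$. Merging again enumerates splits, so one feasibility test takes $O(\spcomp(\TypeGraph)|\Terminals|)$ time. Binary search over integer $L$ then gives the $\log W$ bound.

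\textbf{Real weights.} Here I would follow the earlier section and propagate entire functions of $L$, which are piecewise constant with $O(|\Terminals|)$ breakpoints. Operating on such functions multiplies the cost by $|\Terminals|$, giving $O(n+\spcomp(\TypeGraph)|\Terminals|^2)$.

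\textbf{Main obstacle.} The hard step is proving that the merge can be charged to splits rather than to pairs of signatures. Concretely, we need that for a fixed resulting signature, only the splits counted by $\spcomp$ are ever needed. In the min-max case this must hold while keeping the open component's boundary weight, or its function, exact. I would first try to prove a lemma of the form: any feasible union of two independent sets determines, via $\TypeGraph$, a split counted by $\spcomp$; conversely, each counted split yields a valid merge.
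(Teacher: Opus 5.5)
Your architecture matches the paper's. The state is the type set $\typeof(X\cap\Terminals)\in\indsets(\TypeGraph)$ of the open component, so it \emph{is} a subset of types; ``split'' describes transitions, not states. Transitions are enumerated over $\splits(s)$, and the min-max version uses a threshold-$L$ DP minimizing top-boundary weight, then binary search or parametric step functions. Merging children one at a time is an equivalent substitute for binary normal form.

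The genuine gap is the step you postpone as the ``main obstacle''. It is the heart of the correctness argument, and the criterion you propose for it is false. ``Joining is legal iff the union of signatures is still independent'' ignores c-types. Independent sets of $\TypeGraph$ may contain self-looped types. So two open sets that each contain a terminal carrying the same c-type $\eta$ have an independent union $a\cup b$, yet together they contain a conflicting pair. For a union of $q$ cliques every subset of $[q]$ is independent, so your rule would merge two terminals of the same clique. The correct rule is exactly $(a,b)\in\splits(a\cup b)$: $a\cup b$ is independent \emph{and} $\ctwin{a}\cap\ctwin{b}=\emptyset$, while i-types may be shared. This is why $\spcomp(\TypeGraph)$ counts precisely the transitions that are needed.

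The lemma you want is the paper's (A1)--(A3). It should be derived from the definition of a generalized type graph, not taken as its ``defining property''. Let $A,B\subseteq\Terminals$ be disjoint independent sets. Some $t\in A$, $t'\in B$ conflict iff there is $\eta\eta'\in\TwinEdges$ with $\eta\in\typeof(A)$ and $\eta'\in\typeof(B)$. The case $\eta\neq\eta'$ is excluded by independence of $\typeof(A)\cup\typeof(B)$, and the case $\eta=\eta'$ by disjointness of the c-parts; this gives (A3). Conversely, two distinct terminals of an independent set cannot share a c-type, which gives (A2).

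Two smaller points:
\begin{itemize}
\item Your preprocessing must also suppress degree-$2$ non-terminals, replacing their two edges by one of minimum weight. Otherwise a long path between two terminals leaves $\Theta(n)$ merges, and the $O(n+\spcomp(\TypeGraph)|\Terminals|)$ bound fails.
\item For real weights, bounding each function by $O(|\Terminals|)$ breakpoints does not by itself keep repeated sums and minima at $O(|\Terminals|)$ time. You need, as in the paper, that all breakpoints lie in the common set $\{L_u\}$. Here $L_u$ is the unique threshold at which closing $u$'s component, paying $\weight(u\pi_u)$, becomes feasible.
\end{itemize}
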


Applying appropriate bounds to $\spcomp(\TypeGraph)$, we get algorithms for instances
when $\ConflictGraph$ has bounded neighborhood diversity (covered by Theorem~\ref{thm: neighborhood diversity result}),
is a union of cliques, and is a complete multipartite graph.


\begin{corollary}\label{cor: neighborhood diversity + clique cover + multipartite result} 
\begin{description}
\item{(a)} If $\ConflictGraph$ is a complete $c$-partite graph, then {\normalfont{$\MulticutSumTrees$}} can be solved in time $O(n+c|\Terminals|)$. 
Likewise {\normalfont{$\ConMulticutMaxTrees$}} can be solved
in time $O(n+c |\Terminals|^2)$ time for real weights and $O(n+c|\Terminals| \log \totalweight)$ for integer weights.
\item{(b)} If $\ConflictGraph$ is a union of $q$ cliques, then {\normalfont{$\MulticutSumTrees$}} can be solved in time $O(n+3^q|\Terminals|)$.
Likewise {\normalfont{$\ConMulticutMaxTrees$}} can be solved
in time $O(n+ 3^q |\Terminals|^2)$ time for real weights and $O(n+3^q|\Terminals| \log \totalweight)$ for integer weights.
\end{description}
\end{corollary}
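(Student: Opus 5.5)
The plan is to derive the corollary from Theorem~\ref{thm: generalized neighborhood diversity result}. For each of the two graph classes we will exhibit a generalized type graph $\TypeGraph$ for $\ConflictGraph$. We will then bound $\spcomp(\TypeGraph)$ by $O(c)$ in case (a) and by $O(3^q)$ in case (b). Substituting these bounds into parts (a) and (b) of Theorem~\ref{thm: generalized neighborhood diversity result} gives exactly the stated running times.

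\emph{Case (a), complete $c$-partite graphs.} We take as types the $c$ parts of the partition; these are independent twin classes. The key observation is that an independent set of $\ConflictGraph$ cannot meet two different parts, since any two vertices in different parts conflict. So every independent set either is empty or lies inside a single part.
\begin{itemize}
\item There are therefore only $c+1$ possible type-signatures of independent sets.
\item For each signature, a split into two parts is determined, at the type level, by which side (or both sides) the unique type goes to. This gives $O(1)$ possibilities per signature.
\end{itemize}
Hence $\spcomp(\TypeGraph)=O(c)$.

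\emph{Case (b), unions of $q$ cliques $K_1,\dots,K_q$.} We define the type of a terminal as the set of indices of the cliques containing it, so terminals with the same membership vector are twins. A naive count gives up to $2^q$ types, which would produce a weaker bound than we need. Instead, we describe an independent set $I$ by the set $S\subseteq[q]$ of cliques it touches. Since $I$ is independent, it contains at most one vertex from each clique $K_i$, so the vertices of $I$ touch pairwise disjoint collections of cliques.

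A split of $I$ into two parts $I_1,I_2$ is then encoded by assigning each index $i\in[q]$ one of three labels: untouched, touched by $I_1$, or touched by $I_2$. This gives at most $3^q$ (signature, split) pairs, and so $\spcomp(\TypeGraph)=O(3^q)$. The generalized type graph used here should be the one whose types are clique-index sets, or whatever the paper's definition in Section~\ref{sec: generalized type graphs} allows. It must also be checked that non-twin terminals with equal membership sets cannot arise, other than through harmless duplication.

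The main obstacle is matching the paper's precise definition of $\spcomp$ and of generalized type graphs. We must confirm that counting (touched-set, side-assignment) pairs is exactly what $\spcomp$ measures. We must also confirm that $\TypeGraph$ can be constructed within the $O(n)$ preprocessing term, without paying $|\ConflictEdges|$; the graph would be given implicitly via the partition or the clique cover. If the definition charges per type subset rather than per split, we would instead bound the number of independent type-sets by $3^q$ directly.
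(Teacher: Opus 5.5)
Your route is the paper's: instantiate Theorem~\ref{thm: generalized neighborhood diversity result} with a suitable generalized type graph and bound $\spcomp(\TypeGraph)$. Part (a) is exactly the paper's argument. The $c$ parts are pairwise adjacent i-types, and the states are $\emptyset$ and the $c$ singletons. Each $\braced{\eta}$ has the three splits $(\braced{\eta},\emptyset)$, $(\emptyset,\braced{\eta})$, $(\braced{\eta},\braced{\eta})$, so $\spcomp(\TypeGraph)=3c+1$.

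In (b), however, your count is not attached to a concrete type graph, and the one you actually write down does not give the bound. Suppose each terminal gets a single type equal to its clique-membership set (types in the style of Section~\ref{sec: neighborhood diversity algorithms}). Then two distinct membership sets are adjacent iff they intersect, so the states are families of pairwise disjoint membership sets. For instance, $\braced{\braced{1},\braced{2}}$ and $\braced{\braced{1,2}}$ are different states, although they touch the same cliques. In the worst case the number of states is the Bell number of order $q+1$, which is not $O(3^q)$.

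What you are really counting, pairs (touched set, side assignment), is the split complexity of the paper's $\TypeGraph_1$. There the types are the $q$ cliques themselves, each carrying a self-loop, and each terminal is mapped to the \emph{set} $\typeof(t)\subseteq[q]$ of cliques containing it. This is a valid generalized type graph: $t,t'$ conflict iff they share a clique iff $\typeof(t)\cap\typeof(t')\neq\emptyset$. The state of an independent set $I$ is $\typeof(I)=\bigcup_{t\in I}\typeof(t)$, which is exactly your touched set, and $\indsets(\TypeGraph_1)=\subsets([q])$. Since every type is a c-type, $\splits(s)$ is the set of ordered bipartitions of $s$, so $\spcomp(\TypeGraph_1)=\sum_{s\subseteq[q]}2^{|s|}=3^q$ exactly.

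Once you fix this choice, your other worries vanish. Terminals with equal membership sets are automatically twins. $\spcomp$ counts splits, so your count is exact rather than an upper bound. As you note, the clique cover (respectively, the parts) must be supplied on input, which is the same assumption the paper makes.
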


Previously, Corollary~\ref{cor: neighborhood diversity + clique cover + multipartite result} was only known for $\MulticutSumTrees$ \cite{hasan_motwani_query_optimization_1995,liu_etal_generalized_multiway_cut_in_trees_2012}.
Note that in Corollary~\ref{cor: neighborhood diversity + clique cover + multipartite result}(b) we assume that the clique cover is given on input.
This assumption is necessary, as the best algorithm finding a $q$-clique cover takes 
$O^* (2^{2^q})$ time~\cite{gramm_etal_data_reduction_and_exact_algorithms_for_clique_cover_2009} and is optimal assuming ETH~\cite{cygan_etal_clique_cover_2016}. 
For this reason, we also have to assume that the object $\TypeGraph$ is given for
Theorem~\ref{thm: generalized neighborhood diversity result} in general.\footnote{
Technically, Corollary~\ref{cor: neighborhood diversity + clique cover + multipartite result}(a) also assumes the parts are already given, but computing that can be done in $O(|\Terminals|+|\ConflictEdges|)$ time.}
But for Theorem~\ref{thm: neighborhood diversity result}, no additional information is needed.

Due to space constraints, some proofs are included in the appendix.
We also include in the appendix some side results that refine the computational complexity picture of
problems related to connected multicuts. 
Svitkina and Tardos~\cite{svitkina_tardos_min-max_multiway_cuts_2004} considered a bi-objective version
of connected multicuts in trees that involves both min-sum and min-max objective functions.
In Appendix~\ref{sec: np-hardness tree cutting} we show that that problem is (weakly) $\NP$-hard even if the conflict graph is a clique.
In Appendix~\ref{sec: connectivity requirement for min-max} we show that the \emph{connectivity} requirement 
is essential for the min-max variants of $\MultiwayCutMax$, proving the ratio of $\Omega(\log n)$
in the case of trees and $\Omega(\sqrt{n})$ for arbitrary graphs, between the optimal solutions without and with this requirement.
This strengthens the observation from~\cite{svitkina_tardos_min-max_multiway_cuts_2004} showing that this ratio is at least $1.25$.


\myparagraph{Other related work.}
There is a variety of other variants of $\MulticutSumTrees$ in the literature. For example, in some models not all conflicting pairs need to be separated. 
The \emph{partial multicut problem} asks for a set of edges which separate at least a given number of conflicting pairs, 
while the \emph{prize-collecting multicut problem} adds a penalty for non-separated conflicts. Both problems admit 
constant-ratio approximations~\cite{levin_segev_partial_multicuts_in_trees_2006,mestre_lagrangian_relaxation_and_partial_cover_2008}.
Further studied extensions involve submodular cost functions on edge sets or
submodular penalties for non-separated conflicts~\cite{liu_li_multicut_problem_in_trees_with_submodular_penalties_2019,
hou_liu_hou_prize_collecting_multicut_on_trees_2020,
hou_et_al_submodular_multicut_in_trees_with_linear_penalties_2021,
liu_li_prize_collecting_multicut_problem_in_trees_2022,
liu_li_prize_collecting_multicut_in_trees_with_submodular_penalties_2024}.
Multicuts for directed trees were studied as well~\cite{costa_et_al_greedy_algorithm_for_multicut_and_integral_multiflow_in_rooted_trees_2003}.

Some of the results for trees have been extended to bounded-treewidth graphs, see for example, the paper by Bentz and Le Bodic~\cite{bentz_lebodic_multicut_problem_bounded_treewidth_2020}.

The multicut problem  for arbitrary graphs with the min-sum objective, $\MulticutSum$, has rich literature.
It's known to be computationally hard. Even its multiway-cut version $\MultiwayCutSum$ with only three terminals is $\NP$-hard~\cite{dahlhaus_1994_complexity_multiterminal_cuts}.
Research on $\MulticutSum$ includes studies of 
its LP-relaxations~\cite{cherkasskij_solution_of_a_problem_on_multicommodity_flows_in_a_network_1977,bertsimas_etal_lp_relaxations_multicut_1999,chopra_rao_multiway_cut_polyhedron_1991},
estimates of the optimum in terms of maximal flows~\cite{garg_vazirani_yannakakis_approximate_max_flow_min_multicut_theorems_1996},
and
approximation algorithms~\cite{calinescu_karloff_rabani_improved_approximation_algorithm_for_multiway_cut_2000,karger_et_al_rounding_algorithms_for_geometric_embedding_of_minimum_multiway_cut_2004}.
For more information, see the survey by Costa et al.~\cite{costa_et_al_minimal_multicut_and_maximal_integer_multiflow_survey_2005}. 

The min-max variant is less understood. Its multiway-cut subproblem, $\MultiwayCutMax$, was studied by
Svitkina and Tardos~\cite{svitkina_tardos_min-max_multiway_cuts_2004}, who
showed that it is $\NP$-hard even for only four terminals,
and developed polylogarithmic approximation algorithms, later
improved by Bansal~\etal~\cite{bansal_etal_min-max_partitioning_2014}. 
Svitkina and Tardos also considered the subproblem $\MultiwayCutMaxTrees$, the restriction of
$\MultiwayCutMax$ to trees (without the connectivity requirement), for which they proved $\NP$-hardness and
gave a $(2+\epsilon)$-approximation algorithm for any $\epsilon > 0$.

Kolman and Tiwary~\cite{kolman_tiwary_2026} studied the variant of $\MultiwayCutMax$ with the connectivity requirement on multicut's components.
Their results include several hardness results, for example that $\ConMultiwayCutMax$ is (weakly) $\NP$-hard even for three terminals and graphs
of treewidth three, a pseudo-polynomial algorithm, and an FPTAS for graphs of bounded treewidth.


\section{Preliminaries}
\label{sec: preliminaries}
We are given a rooted tree $T =(V,E,\weight)$  with a weight function $\weight: E\to \reals^+$
and a graph $\ConflictGraph =(\Terminals, \ConflictEdges)$, where $\Terminals \subseteq V$. The vertices in $\Terminals$ are called \emph{terminals},
the edges of $\ConflictEdges$ between them are referred to as \emph{conflicts}, and $\ConflictGraph$ is called the \emph{conflict graph}.
Throughout the paper, the number of vertices in $T$ is denoted
by $n=|V|$, and by $\maxweight = \max_{e\in E}\weight(e)$ and $\totalweight = \sum_{e\in E}\weight(e)$
we denote the maximum and total edge weight, respectively. 
We use the notation $\pi_v$ for the parent of $v\in V$ (if $v$ is not the root) and $\children(v)$ for the set of children of $v$.
A \emph{connected multicut} of the instance $(T,\ConflictGraph)$ is defined as a partition $\multicutM$ of $V$ 
into subsets called \emph{components}, where each component $X$ satisfies three conditions:
(i) $X \cap \Terminals\ne\emptyset$, 
(ii) $X \cap \Terminals$ is an independent set in $\ConflictGraph$, and
(iii) the subgraph of $T$ induced by $X$ is connected (that is, a subtree of $T$).
For $v\in V$, $\multicutM(v)$ is the component of $\multicutM$ containing $v$.
For $X\subseteq V$ we define the \emph{boundary of $X$} as $\boundary X =E(X,V-X)$; that is $\boundary X$ is the set of edges with exactly one endpoint in $X$.
(In particular, $\boundary\multicutM (v)$ refers to the boundary of $\multicutM(v) \in \multicutM$.)
The boundary of a multicut $\multicutM$ is the union of boundaries of its components:  $\boundary \multicutM = \bigcup_{X\in\multicutM} \boundary X$.
By the connectivity requirement, $\multicutM$ is uniquely determined by its boundary; in particular
$\multicutM$ consists of the connected components of $T-\boundary \multicutM$.
We consider two cost functions for multicuts:
the \emph{sum-cost} of $\multicutM$ is the total weight of the edges in its boundary, 
$\sumcost(\multicutM) = \weight(\boundary \multicutM) = \half \sum_{X \in \multicutM} \weight(\boundary X)$,
and its  \emph{max-cost}  is the maximum cost of its component boundaries,
$\maxcost(\multicutM) = \max_{X\in \multicutM} \weight(\boundary X)$.
(See Figure~\ref{fig: multicut example} for an example.)


\begin{figure}[t]
\begin{center}
\includegraphics[width = 4in]{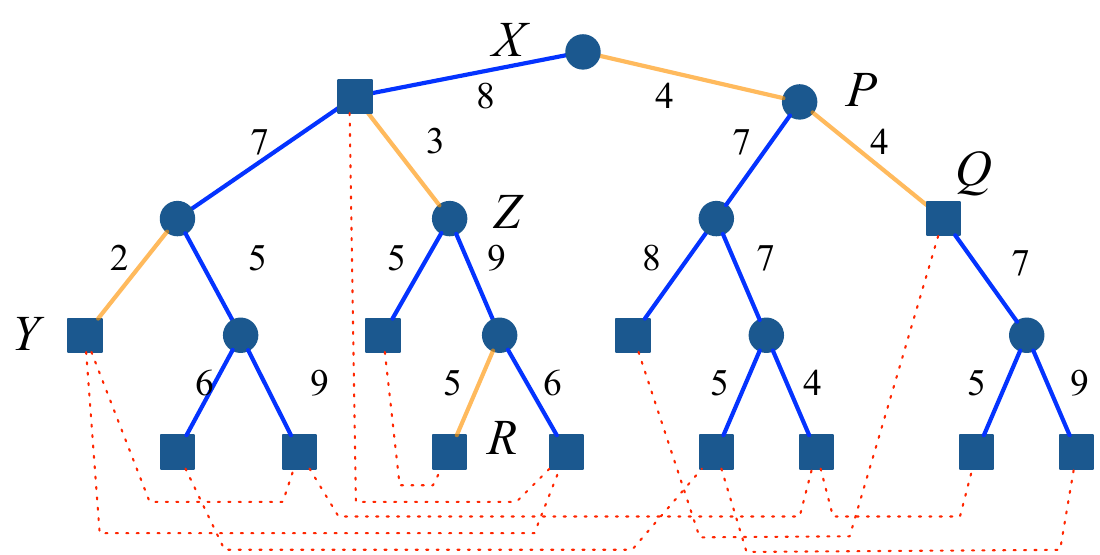}
\caption{%
A weighted tree, a conflict graph, and
 a connected multicut $\multicutM = \braced{X,Y,Z,P,Q,R}$. 
Tree edges are represented by solid lines, non-terminals by circles, terminals by squares, and
conflicts by dotted thin lines.
The edges within components are in blue (dark), 
the boundary edges are in yellow.  Each component's label is near its top vertex.
The boundary costs are $\weight(\cutof X) = 9$, $\weight(\cutof Y) = 2$, 
$\weight(\cutof Z) = 8$, $\weight(\cutof P) = 8$, $\weight(\cutof Q) = 4$, and $\weight(\cutof R) = 5$. 
So $\sumcost(\multicutM) = 18$ and $\maxcost(\multicutM) = 9$.}
\label{fig: multicut example}
\end{center}
\end{figure}


We focus on two optimization problems for trees, the min-sum and min-max variants of computing optimal connected multicuts:
\begin{description}
\item{$\MulticutSumTrees$:} Given a tree $T$ and a conflict graph $\ConflictGraph$, compute a connected multicut $\multicutM$ that minimizes its sum-cost $\sumcost(\multicutM)$.
\item{$\ConMulticutMaxTrees$:} Given a tree $T$ and a conflict graph $\ConflictGraph$, compute a connected multicut $\multicutM$ that minimizes its max-cost $\maxcost(\multicutM)$.
\end{description}
We denote the optimal cost of each solution by $\sumopt(T,\ConflictGraph)$ and $\maxopt(T,\ConflictGraph)$, respectively.
In the $\MulticutSumTrees$ notation we do not indicate the connectivity requirement, because, as mentioned in the introduction, it's redundant: any multicut satisfying
all other requirements can be converted into a connected one without increasing its sum-cost. This requirement is essential, however, for $\ConMulticutMaxTrees$
(see Appendix~\ref{sec: connectivity requirement for min-max}).


\myparagraph{Normal forms.}
We assume that all input trees are rooted. For a rooted tree $T$ and a vertex $v$, by $T_v$ we will denote the
subtree of $T$ rooted at $v$ and containing all $v$'s descendants.

Our algorithms will first simplify the input trees to streamline the computation.
We say that $T$ is in \emph{normal form} if all its leaves are terminals and all internal non-terminals have at least two children.
If we further require that all internal nodes are non-terminals (so $\Terminals$ is exactly the set of leaves)
and have exactly two children then we say that $T$ is in \emph{binary normal form}
(See Figure~\ref{fig: multicut example}.)
It is known, and not difficult to prove that, after $O(n)$ pre-processing, without any loss of generality we can assume that
a weighted tree is in binary normal form, and that an unweighted binary tree is in a normal form.
This is formalized in the lemma below. For the sake of completeteness, we include its proof in Appendix~\ref{sec: normal forms proof}.


\begin{figure}[t]
\begin{center}
\includegraphics[width = 4in]{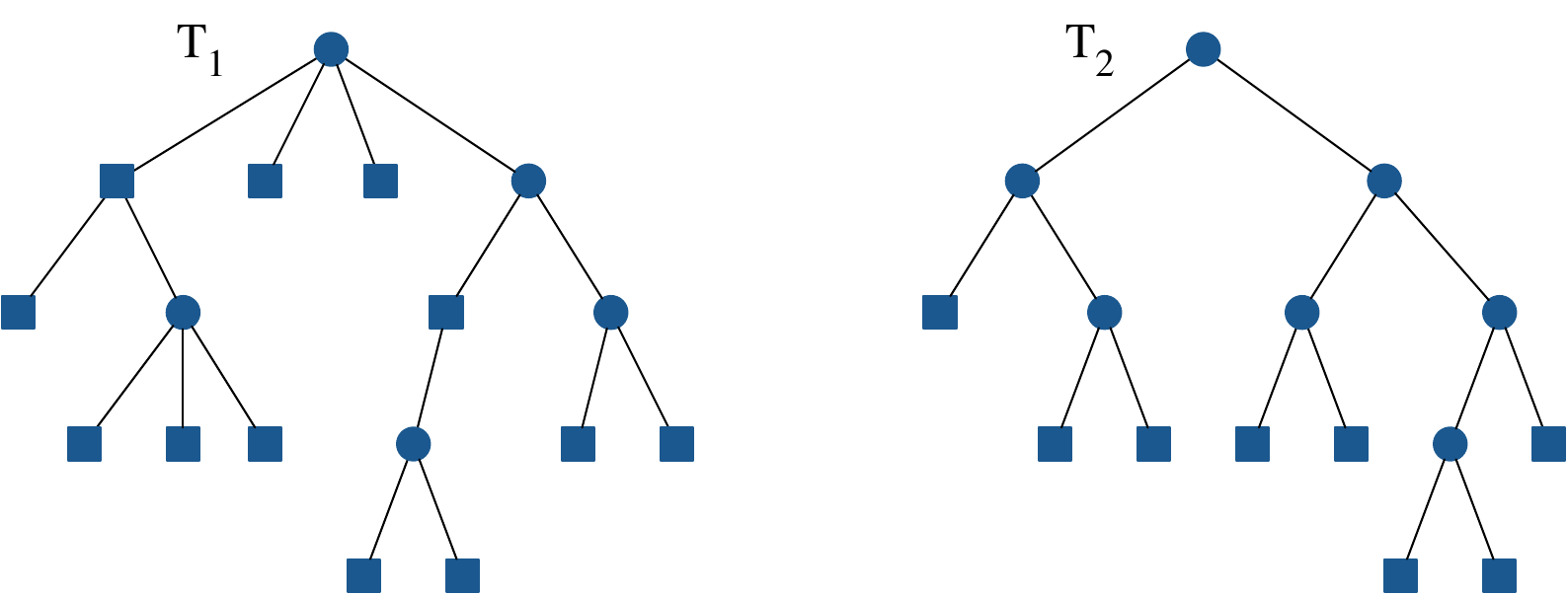}
\caption{%
On the left, tree $T_1$ in normal form. On the right, tree $T_2$ in binary normal form.
Terminals are represented by squares and non-terminals by circles.}
\label{fig: normal forms}
\end{center}
\end{figure}


\begin{lemma}\label{lemma: normal form for trees}
Let $T=(V,E,\weight)$ be a weighted tree and $\ConflictGraph=(\Terminals, \ConflictEdges)$ a conflict graph.
In time $O(n+|\calF|)$ one can compute a tree  $T'=(V',E',\weight')$ with $\Gamma \subseteq V'$ such that 
$\sumopt(T',\ConflictGraph)=\sumopt(T,\ConflictGraph)$, $\maxopt(T',\ConflictGraph)=\maxopt(T,\ConflictGraph)$, and
\begin{description} 
    \item{(a)} If $T$ is unweighted, then $T'$ is an unweighted tree in normal form,
	\item{(b)} If $T$ is weighted, then $T'$ is in binary normal form and $\max_{e \in E'} \weight'(e) \le \sum_{e \in E} \weight(e)$.
\end{description}
In both cases and for both objective functions, any multicut $\multicutM'$ for $T'$ can be converted into a multicut $\multicutM$ for $T$ in time $O(n)$, without increasing the cost.
\end{lemma}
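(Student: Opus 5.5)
We will obtain $T'$ from $T$ by a short sequence of local reductions. For each reduction we check that it preserves both $\sumopt$ and $\maxopt$. We also check that any multicut of the reduced tree lifts back to the previous tree in linear time without increasing either cost. Composing the reductions gives the lemma.

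\emph{Common reductions (both cases).} First, repeatedly delete non-terminal leaves. In any multicut of $T$ a non-terminal leaf $\ell$ can be merged into its parent's component: this only removes the edge $(\ell,\pi_\ell)$ from boundaries, so both costs weakly decrease. Conversely, a multicut of $T-\ell$ extends by adding $\ell$ to $\multicutM(\pi_\ell)$, at no extra cost. Second, contract each non-terminal internal vertex $v$ with exactly one child $u$. We replace the path $\pi_v - v - u$ by a single edge $(\pi_v,u)$ of weight $\min(\weight(\pi_v,v),\weight(v,u))$. If $v$ is the root, we instead delete it together with its edge to $u$.

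The contraction is justified as follows. In an optimal (or any) multicut, $v$ is a non-terminal of degree $2$, so at most one of its two incident edges needs to be cut. If both are cut, $\braced{v}$ is a component without a terminal, which is impossible; hence $v$ lies in the component of a neighbour. Moreover, we may move $v$ to whichever side makes the cut edge the cheaper one. This never increases the boundary weight of either adjacent component, and so it decreases neither objective's feasibility. Lifting back is immediate: if the new edge is cut, we cut the corresponding cheaper original edge. Both reductions can be applied in a single bottom-up pass in $O(n)$ time, and the result is in normal form. Weights only ever become minima of original weights, so in the unweighted case the tree stays unweighted, proving (a). Handling the conflict graph costs $O(|\ConflictEdges|)$, since we only need to mark terminals.

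\emph{Binary normal form (case (b)).} Two further transformations are needed.
\begin{itemize}
\item \emph{Terminals at internal vertices.} For each terminal $t$ at an internal vertex, rename the vertex to a new non-terminal $t_0$ and attach $t$ as a new leaf child of $t_0$ via an edge of weight $\totalweight$, the total weight of $T$. This edge is never worth cutting. Any multicut cutting it has some component boundary containing it, and that boundary already has weight at least $\totalweight$. Moreover, the trivial-style multicut of the original tree has all costs at most $\totalweight$, and this solution maps to one not cutting the heavy edge. Thus optima are preserved.
\item \emph{High-degree vertices.} For each vertex with $d>2$ children $c_1,\dots,c_d$, replace it by a path of $d-1$ non-terminals joined by edges of weight $\totalweight$, hanging the children along it. The same argument shows the heavy edges are never cut in an optimal solution, and lifting simply contracts them.
\end{itemize}
All new weights are at most $\totalweight$, giving $\max_{e\in E'}\weight'(e)\le\sum_{e\in E}\weight(e)$. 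The total size remains $O(n)$, since the number of added vertices is bounded by $|\Terminals|$ plus the sum of degrees.

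The main obstacle is the argument that heavy edges of weight $\totalweight$ can be assumed uncut. This requires care because ties occur: an optimal solution may legitimately have cost exactly $\totalweight$. The plan is therefore to show the weaker claim that \emph{some} optimal solution avoids cutting heavy edges. Take any multicut of $T'$. If it cuts a heavy edge, its max-cost and sum-cost are at least $\totalweight$. In that case, replace it by the lift of a multicut of the original $T$, whose costs are at most $\totalweight$. This replacement is valid for both objectives and also gives the required conversion of arbitrary multicuts $\multicutM'$ in $O(n)$ time without increasing cost. A secondary subtlety is that contractions may create new degree-$1$ non-terminals or non-terminal leaves. To handle this, we apply the common reductions first and perform the binary splitting afterwards; the splitting introduces only non-terminals with two children, so the binary normal form is preserved.
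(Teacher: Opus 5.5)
Your proposal is correct and follows the paper's proof essentially step for step. It uses the same four rules: delete non-terminal leaves, splice out degree-two non-terminals using the $\min$ of the two weights, hang each internal terminal as a leaf below a new non-terminal via an edge of weight $\totalweight$, and split high-degree vertices using edges of weight $\totalweight$. Your heavy-edge justification replaces any multicut that cuts such an edge wholesale by an arbitrary connected multicut of $T$, whose sum- and max-cost are at most $\totalweight$; this is a cleaner, more explicit version of the paper's brief remark that such multicuts ``can be modified to cut other edges without increasing cost,'' and it also directly gives the required $O(n)$-time conversion.
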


A tree in normal form has $O(|\Gamma|)$ vertices, since non-terminals (except the root) have degree at least 3. 
Thus we may use $|\Gamma|$ and $n$ interchangeably in the asymptotic formulas.



\section{Algorithms for $\ConMulticutMaxTrees$}
\label{sec: 2-approximation minmax}


In this section we study algorithms for $\ConMulticutMaxTrees$, that is computing
a connected multicut in trees with the min-max cost function, and
we prove Theorems~\ref{thm: approximation for min-max variant} and~\ref{thm: paramterized for unweighted min-max}.
To develop our algorithms, we proceed in two steps.
In Section~\ref{subsec: local cut} we first identify a simpler optimization problem where the goal is
only to compute one ``local cut''. 
This problem, while still $\NP$-hard itself, has a simple LP relaxation that can be used to produce
a $2$-approximate solution.
Then, in Section~\ref{subsec: reduce min max to local cut}, we give a polynomial-time
Turing reduction from $\ConMulticutMaxTrees$ to this problem using dynamic programming.
This will simultaneously prove Theorems~\ref{thm: approximation for min-max variant}~and~\ref{thm: paramterized for unweighted min-max}.


\myitparagraph{Intuitions.}
A natural idea for solving $\ConMulticutMaxTrees$
would be to compute bottom-up a multicut for each rooted sub-tree: 
to compute a multicut $D(v)$ of a subtree $T_v$,
first include in it the computed edges $D(u)$ for each $u\in ch(v)$, 
and then add some suitably chosen edges between $v$ and its children to eliminate the conflicts between terminals in subtrees of different children of $v$.
This works for unweighted trees (see Appendix~\ref{sec: faster unweighted approximation}) but fails for the weighted case,
simply because the edges from $v$ to its children could be very expensive. 

This observation leads to a somewhat unusual algorithm that, when processing a node $v$, computes a multicut for $T_v$
using dynamic-programming entries associated with all \emph{descendants} of $v$, rather than only with its children.
Roughly, the information computed for $v$'s descendants will be used to eliminate edges that are too expensive to
be in the boundary of $v$'s component. After excluding these edges (by simply contracting them),
the component of $v$ is then determined by computing its local cut. This is the essence of the
Turing reduction mentioned in the previous paragraph.


\myitparagraph{Reductions.}
There are in fact a few Turing reductions in this section. To reduce a minimization problem $\problemP$ to a minimization problem $\problemP'$,
given a parameter $\rho\ge 1$, the reductions we use have the following form:

\begin{itemize}[nosep]
\item We assume the existence of an oracle for $\problemP'$ that, given an instance $I'$ of $\problemP'$ and a value $L'$, 
either returns a solution of cost at most $\rho L'$ or determines that the optimum cost of $I'$ is larger than $L'$.

\item
The algorithm for $\problemP$ uses the oracle for $\problemP'$ and,
given an instance $I$ of $\problemP$ and a value $L$, it
either returns a solution of cost at most $\rho L$ or determines that the optimum cost of $I$ is larger than $L$.
Furthermore, it runs in polynomial time, and each oracle query $(I', L')$ satisfies $L' \le L$.
\end{itemize}
Note that we allow $\rho=1$ (in which case we just have a parameter-preserving reduction), and that 
$L$ and $L'$ may be real numbers. When weights are real, we only obtain a $(\rho+\epsilon)$-approximation for $\problemP$ from a $\rho$-approximation of $\problemP'$.


\subsection{Computing Local~Cuts} \label{subsec: local cut}

Given a tree $T = (V,E,\weight)$, a conflict graph $\ConflictGraph$, and a vertex $\starv \in V$, we define a \emph{local cut for $\starv$} to be a subset $X\subseteq V$
such that (i) $\starv \in X$, (ii) $T[X]$ is connected (i.e., a subtree of~$T$), and (iii) $X \cap \Terminals$ is an independent set in $\ConflictGraph$ (possibly empty). 
$X$ is a \emph{proper local cut} if $X \cap \Terminals \ne \emptyset$, and \emph{improper} otherwise.
We now define the local cut problem:
\begin{description}
\item{$\LocalCutTrees$:} Given a tree $T = (V,E,\weight)$, a conflict graph $\ConflictGraph$, and  $\starv \in V$, 
compute a proper local cut $X \subseteq V$ for $\starv$ minimizing $\weight(\cutof X)$.
\end{description}

We let $\LocalCutTreesImprop$ be the problem variant where improper cuts are allowed, and we denote the optimum cost for both problems by 
$\locopt(T, \ConflictGraph, \starv)$ and $\locoptim(T, \ConflictGraph, \starv)$, respectively.
First, we observe that $\LocalCutTrees$ is not harder than $\LocalCutTreesImprop$.


\begin{observation}\label{obs: reduce proper local cuts to improper cuts}
There is a polynomial-time Turing reduction from $\LocalCutTrees$ 
to $\LocalCutTreesImprop$ that makes $O(n)$ oracle queries on an input instance with \(n\) vertices. 
Moreover, if the input instance is unweighted, then every instance submitted to the oracle is unweighted.
\end{observation}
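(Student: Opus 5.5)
The reduction guesses one terminal that the optimal proper local cut must contain, and forces it into the cut by contracting a path. Any proper local cut $X$ for $\starv$ contains at least one terminal $t\in\Terminals$. Since $T[X]$ is connected, $X$ then contains the whole path $P_t$ in $T$ from $\starv$ to $t$. So for each terminal $t$ we will ask the oracle for the cheapest improper-allowed local cut that is forced to contain $P_t$, and return the best answer over all $t$. This gives at most $|\Terminals|\le n$ queries.

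To force $P_t$ into the cut, we build an instance $I_t$ from $(T,\ConflictGraph)$ as follows.
\begin{itemize}
\item Contract all edges of $P_t$ into a single vertex $\starv_t$, which is the new root vertex of the query. Contraction keeps the tree structure, and all edges not on $P_t$ keep their weights; in particular, an unweighted instance stays unweighted.
\item Remove from the terminal set, and make non-terminal in $I_t$, every terminal lying on $P_t$.
\item Also remove every terminal $s$ that conflicts with some terminal on $P_t$; such an $s$ can never lie in a valid cut containing $P_t$. Since $s$ may not be deleted from the tree, we handle it by deleting the edge incident to $s$ on the path from $\starv_t$ towards $s$. Equivalently, we split off the subtree, which is allowed because a cut avoiding $s$ must cut that path somewhere. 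This changes the costs, however, so instead of deleting it we keep $s$ as a terminal in the conflict graph.
\end{itemize}
We give conflicts to these removed terminals by adding a fresh terminal leaf $t'$ attached to $\starv_t$ by an edge. That edge must never be cut. In the unweighted case this is achieved by attaching $t'$ through many parallel paths; alternatively, and more simply, we make $\starv_t$ itself a terminal. Concretely, we declare $\starv_t$ a terminal of the new conflict graph whose conflicts are exactly the union of the conflicts of the terminals on $P_t$. Every local cut for $\starv_t$ contains $\starv_t$, so it is automatically proper and avoids all terminals conflicting with $P_t$'s terminals.

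With this construction, local cuts $X'$ for $\starv_t$ in $I_t$ correspond bijectively to local cuts $X\supseteq P_t$ for $\starv$ in the original instance, via $X=(X'\setminus\{\starv_t\})\cup P_t$. The boundaries are identical edge sets with identical weights. Independence of $X\cap\Terminals$ corresponds exactly to independence of $X'\cap\Terminals'$, because the merged terminal carries all conflicts of $P_t\cap\Terminals$. Moreover, $P_t\cap\Terminals$ is itself independent; if it is not, we skip $t$.

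Hence $\locopt(T,\ConflictGraph,\starv)=\min_t \locoptim(I_t)$. The oracle's answer for $I_t$ converts back in linear time, so approximation guarantees and thresholds $L$ are preserved, with each query having $L'=L$. The main point to check is this bijection, in particular that merging the terminals of $P_t$ into one terminal with the union of their conflict sets is faithful. It is, since all of them lie in every cut considered for $t$. Building each $I_t$ takes polynomial time, so the reduction makes $O(n)$ queries, all unweighted whenever the input is unweighted.
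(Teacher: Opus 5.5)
Your final construction is correct and is essentially the paper's proof. You contract $P_t$ to a single vertex, make that vertex a terminal carrying the union of the conflicts of $P_t\cap\Terminals$ (the quotient of $\ConflictGraph$), skip $t$ when $P_t\cap\Terminals$ is not independent, and minimize over all terminals $t$; the abandoned detours in your bullet list (deleting edges toward conflicting terminals, a fresh leaf attached through ``parallel paths'', which would not even give a tree) are unnecessary and should be removed, since your bijection argument uses only the quotient construction.
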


\begin{proof}
We actually prove a stronger fact: given an (approximate) algorithm for $\LocalCutTreesImprop$ and a vertex $u\in V$, we compute an (approximate) minimum weight 
local cut $X$ for $(T, \ConflictGraph, \starv)$ that contains $u$ (if one exists).
We then try this procedure for each $u\in\Terminals$ and pick the solution of minimum weight.
	
Given $u$, we define $(T', \ConflictGraph', \starv)$ by contracting the unique path in $T$ between $u$ and $\starv$ into a single vertex, 
then taking a graph quotient in $\ConflictGraph$ by merging all the terminals (if any) that appear on this path.
(If two such terminals have a conflict in $\ConflictGraph$, then there is no local cut for $\starv$ containing $u$.) 
We have an obvious correspondence between local cuts of $T'$ and local cuts of $T$ containing $u$, and clearly $T'$ is unweighted if $T$ is.
\end{proof}

Next, we give a $2$-approximation algorithm for computing local cuts.
\begin{lemma}\label{lem: two approximation for local cut}
There is a polynomial-time $2$-approximation algorithm for $\LocalCutTreesImprop$ 
(and thus also for $\LocalCutTrees$, by Observation~\ref{obs: reduce proper local cuts to improper cuts}).
\end{lemma}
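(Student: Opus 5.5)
Root $T$ at $\starv$. An improper-allowed local cut is a subtree $X$ containing $\starv$; it is determined by the set $C$ of edges of $T$ that are cut ($X$ is the component of $\starv$ in $T - C$), and $\weight(\cutof X) \le \weight(C)$, with equality when $C$ is minimal. The only requirement is that no two conflicting terminals both end up in $X$. For a terminal $t$, let $P_t$ denote the edge set of the path from $\starv$ to $t$. Then $t \notin X$ if and only if $C \cap P_t \neq \emptyset$. So for every conflict $\{s,t\}\in\ConflictEdges$ we need $C$ to hit $P_s$ or $P_t$.

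I will write this as an integer program with variables $x_e \in \{0,1\}$ for $e \in E$ and auxiliary variables $y_t$ for $t\in\Terminals$, where $y_t$ indicates that $t$ is cut off from $\starv$. The program minimizes $\sum_e \weight(e) x_e$ subject to:
\[
\sum_{e\in P_t} x_e \ge y_t \quad (t\in\Terminals), \qquad y_s + y_t \ge 1 \quad (\{s,t\}\in\ConflictEdges), \qquad x,y\ge 0 .
\]
A terminal $t=\starv$ has $P_t=\emptyset$, which forces $y_t=0$; that is consistent. Any local cut $X$ gives a feasible integral solution: set $x$ to the indicator of $\cutof X$ and $y_t=[t\notin X]$. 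Hence the LP optimum is at most $\locoptim(T,\ConflictGraph,\starv)$.

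For the rounding, solve the LP, which has polynomial size. Let $U=\{t : y_t \ge 1/2\}$. By the conflict constraints, $\Terminals \setminus U$ is an independent set in $\ConflictGraph$; this is the same half-integral threshold idea as for vertex cover. Next I need a set of edges $C$ hitting $P_t$ for every $t\in U$. This is a cut problem in a tree rooted at $\starv$: separate $\starv$ from the set $U$. It is a minimum $s$-$t$ cut, after merging $U$ into a single sink, and on a tree a simple bottom-up DP solves it exactly. For each vertex $v$, compute the cheapest way to separate every $U$-terminal in $T_v$ from the parent side: either cut the edge to $v$, or recurse into the children, with cost $\infty$ if $v\in U$. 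The vector $2x$ is a fractional solution of this cut problem, since $\sum_{e\in P_t}2x_e \ge 2y_t \ge 1$ for $t\in U$. The path-cut LP on a tree with a common root is integral: its constraint matrix is a path–edge incidence matrix of a rooted tree, and hence totally unimodular (a network matrix). Alternatively, use max-flow/min-cut. So the exact minimum cut $C$ has weight at most $2\sum_e\weight(e)x_e \le 2\,\locoptim$.

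Finally, let $X$ be the component of $\starv$ in $T-C$. It contains $\starv$, it is connected, and it excludes all of $U$, so $X\cap\Terminals\subseteq \Terminals\setminus U$ is independent. Moreover $\weight(\cutof X)\le \weight(C)\le 2\,\locoptim$, which gives the 2-approximation. The proper variant then follows by Observation~\ref{obs: reduce proper local cuts to improper cuts}, because each oracle call returns a 2-approximation for its instance and the minimum over those calls is a 2-approximation of the proper optimum.

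The main step to check carefully is the integrality (exact solvability) of the root-to-$U$ cut on a tree with the fractional solution $2x$ as a witness. I would argue it directly via a flow argument or via the standard greedy DP, which avoids relying on total unimodularity.
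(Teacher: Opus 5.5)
Your argument is correct, but your rounding takes a genuinely different route from the paper's. The paper uses only the edge variables, with the constraint $x(P_{s\starv})+x(P_{t\starv})\ge 1$ for every conflict $st$. Your LP has exactly the same projection onto $x$: take $y_t=x(P_t)$ in one direction, and add the two path inequalities in the other. The paper then rounds by ball-growing. For $B_\rho=\{u : x(P_{u\starv})\le\rho\}$ it shows $\int_0^1\weight(\cutof B_\rho)\,d\rho\le L$ and picks $\rho^*\in(0,\frac12)$ with $\weight(\cutof B_{\rho^*})\le 2L$. It then notes that any two terminals in $B_{\rho^*}$ have $x$-lengths to $\starv$ summing to at most $2\rho^*<1$, so they cannot conflict.

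You split this into two steps instead. A vertex-cover-style threshold on $y$ fixes the set $U$ of terminals to exclude. An exact minimum root-to-$U$ cut is then charged against $2x$ via integrality of the tree path-covering LP (network matrix, or LP duality with max-flow/min-cut).

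Each route has its advantages. The paper's version is self-contained: its averaging over $\rho$, applied to $2x$, is precisely a direct proof of the integrality fact you import. Yours is more modular and makes explicit the vertex-cover structure that the paper only exploits later, in its half-integrality remark. For the same fractional $x$, yours also never produces a worse cut, since every $B_\rho$ with $\rho<\frac12$ excludes all of $U$.

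One small caveat: the greedy DP only computes the optimal cut. By itself it does not compare that cut with $2x$, so your closing remark that the DP ``avoids relying on total unimodularity'' still needs the flow or averaging argument behind it.
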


\begin{proof}
Given an instance $(T, \ConflictGraph, \starv)$ with $T=(V,E,\weight)$, $\ConflictGraph=(\Terminals,\ConflictEdges)$, and $\starv\in V$, we formulate an
LP relaxation with a variable $x_e$ for each edge $e \in E$,
where $x_e = 1$ is intended to indicate that $e$ lies in the
boundary of a local cut.
Letting $P_{uv}$ denote the path in $T$ between $u,v\in V$ (as an edge set), the relaxation is as follows:
\begin{align}
	 	\text{minimize } \sum_{e \in E} \weight(e) &x_e \;
		&\\\
	 	\text{subject to } \sum_{e \in P_{s\starv} } &x_e + \sum_{e \in P_{t\starv}} x_e \;\ge\; 1 \quad \forall st \in \ConflictEdges 
		\label{eq: lp constraint}
		\\
		 &x_e \;\ge\; 0 \quad \forall e \in E
		 \nonumber
\end{align}
We now describe a rounding procedure of a solution $x$ for this linear program. Let $L$ denote the optimum solution value.
For $\rho \in (0,1)$ let $B_\rho = \braced{u \in V \mid x(P_{u\starv}) \le \rho}$, where $x(F) = \sum_{e \in F} x_e$ for a set $F\subseteq E$;
clearly, $T[B_\rho]$ is connected and $\starv \in B_\rho$.
We aim to show that for a proper choice of $\rho$, $B_\rho$ is a local cut for $\starv$ of cost at most $2L$.

We claim that $\int_0^1 \weight(\cutof B_{\rho}) \, d\rho \le L$. 
As $|x(P_{u\starv})-x(P_{v\starv})| = x_{uv}$ for every $uv\in E$, we obtain
\begin{align*}
\int_0^1 \weight(\cutof B_{\rho}) \, d\rho \;&=\; \int_0^1 \sum_{e \in \cutof B_{\rho}} \weight(e) \, d\rho 
									\;=\; \sum_{uv \in E} \weight(uv) \int_0^1 1_{uv \in \cutof B_{\rho}} \, d\rho \\
									\;&\le\; \sum_{uv \in E} \weight(uv) |x(P_{u\starv})-x(P_{v\starv})| \;=\; \sum_{uv \in E} \weight(uv) x_{uv} =L
\end{align*}
where $1_{\Pi}$ is an indicator function for a predicate $\Pi$.

By the above claim, there exists $\starrho \in (0,\half)$ such that $\weight(\cutof B_{\starrho})\le 2L$. 
To finish the proof, it suffices to verify that 
$B_{\starrho}\cap \Terminals$ is an independent set in $\ConflictGraph$.
For any distinct $s,t\in B_{\starrho} \cap \Terminals$, we have
$x(P_{s\starv}) + x(P_{t\starv}) \le 2 \starrho < 1$,
so $st \notin \ConflictEdges$ by the LP constraint~(\ref{eq: lp constraint}).
Thus, $B_{\starrho}$ is the desired local cut for $\starv$.
As there are at most $n$ distinct sets $B_\rho$ for
$\rho \in (0,\half)$, namely those for values of $\rho$ of the form
$\rho = x(P_{v\starv})$ for $v \in V$, the radius $\starrho$
and the corresponding local cut $B_{\starrho}$ can be computed in polynomial time.
\end{proof}

There is an alternative way to prove Lemma~\ref{lem: two approximation for local cut}, via a primal-dual
algorithm instead of randomized rounding. We include it in Appendix~\ref{sec: primal dual algorithm}.
That solution does not require solving a linear program.

We make one last observation about $\LocalCutTreesImprop$ on unweighted trees. As mentioned earlier, 
a reduction from $\VertexCover$ shows that $\MulticutSumTrees$ and $\ConMulticutMaxTrees$ have no $(2-\epsilon)$-approximation under standard assumptions,
even on unweighted stars. What's interesting about $\LocalCutTreesImprop$ is that it is actually equivalent to $\VertexCover$ when $T$ is unweighted.
For a graph $H$, let $\vcover(H)$ denote the minimum size of a vertex cover of $H$.


\begin{observation}\label{obs: reduction from unweighted local cut to vertex cover}
Given an unweighted $\LocalCutTreesImprop$ instance $(T,\ConflictGraph, \starv)$, we can in linear time compute a graph $H$ and number $k'$ where 
$ \vcover(H) +k' =\locoptim(T,\ConflictGraph, \starv)$. 
Also, any vertex cover $A'\subseteq V(H)$ can be converted in linear time to a local cut of $\starv$ with cost $|A'|+k'$.
\end{observation}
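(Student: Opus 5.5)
We reduce the unweighted local cut problem to vertex cover by noting that an improper local cut for $\starv$ is determined by its boundary edges, and that in a tree each terminal is separated from $\starv$ by some edge on $P_{s\starv}$. The key simplification comes from first forcing the obviously necessary cuts and then contracting, so that every remaining terminal hangs on a single edge below $\starv$.

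\textbf{Step 1: root and force the unavoidable cuts.} Root $T$ at $\starv$. If $\starv$ is itself a terminal, then every terminal $s$ with $s\starv\in\ConflictEdges$ must be excluded from $X$. For a fixed $X$, the cheapest way to exclude a set of vertices is to cut, for each excluded vertex, some edge on its path to $\starv$. Otherwise we may treat $\starv$ as a merged non-terminal; conflicts with $\starv$ are handled exactly like the self-conflict case below.

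\textbf{Step 2: contract each child subtree to a star edge.} For each child $c$ of $\starv$, look at the edge $\starv c$. The claim we aim for is the following. In some optimal improper local cut, for every child $c$ either $\starv c$ is cut, or no edge of $T_c$ is cut at all, so that all of $T_c$ lies in $X$. The justification is an exchange argument. If some edges inside $T_c$ are cut but $\starv c$ is not, replace all of them by the single edge $\starv c$. This costs at most as much in the unweighted case, since at least one edge was cut. It only removes terminals from $X$, so it preserves independence. This is exactly where unweightedness is used.

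After this exchange, the cut is described by a set $A$ of children of $\starv$, namely those whose edges are cut. Let $\Terminals_c=\Terminals\cap V(T_c)$. The set $X\cap\Terminals$ is the union of $\Terminals_c$ over $c\notin A$, together with $\starv$ if $\starv$ is a terminal. This set must be independent. Define the two pieces of the output as follows.
\begin{itemize}
\item Set $k'$ to the number of children $c$ such that $\Terminals_c$ is not independent, plus, when $\starv$ is a terminal, the number of children $c$ with some terminal of $T_c$ conflicting with $\starv$. Both kinds of children must be cut.
\item Let $H$ be the graph on the remaining children, with $cc'$ an edge whenever some conflict joins $\Terminals_c$ and $\Terminals_{c'}$.
\end{itemize}
Then $A$ is feasible if and only if $A$ contains the forced children and $A$ minus those children is a vertex cover of $H$. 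This gives $\vcover(H)+k'=\locoptim(T,\ConflictGraph,\starv)$.

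\textbf{Step 3: converting a cover back into a cut.} Given a vertex cover $A'$ of $H$, output $X$ as the component of $\starv$ after cutting the edges $\starv c$ for $c\in A'$ and for the forced children. Its cost is $|A'|+k'$.

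\textbf{Linear time.} Compute, for every vertex, the index of the child subtree of $\starv$ containing it, via one DFS. Then scan $\ConflictEdges$ once to classify each conflict. A conflict whose endpoints lie in the same subtree marks that child as forced. A conflict involving $\starv$ marks the other endpoint's child as forced. A conflict between two different subtrees becomes an edge of $H$, and we keep multi-edges or deduplicate them with a bucket pass.

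\textbf{Main obstacle.} The main obstacle is justifying the exchange in Step 2 carefully. The replacement must not increase cost, must not create conflicts, and must keep $X$ connected; all three hold because cutting $\starv c$ removes all of $T_c$ from $X$.
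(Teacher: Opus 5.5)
Your proposal is correct and takes essentially the same route as the paper. Your forced children are exactly the paper's set $B$ (children $u$ for which the terminals of $T_u\cup\braced{\starv}$ contain a conflict), your $H$ is the paper's graph on the remaining children, and your exchange argument repackages the paper's direct lower bound that $\cutof X$ must meet each branch $\braced{u\starv}\cup E(T_u)$ for $u\in B$ and that the touched non-forced children form a vertex cover. One small fix: $k'$ must be the number of forced children, i.e.\ the size of the union of your two conditions (which is what your marking procedure computes), not the literal sum of the two counts, since that would double-count a child satisfying both.
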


\begin{proof}
Root $T$ at $\starv$. Let $A\subseteq ch(\starv)$ be the set of children $u$ such that 
there are no conflicts between the terminals in $T_u\cup\braced{\starv}$.
Let $B= ch(\starv) - A$. Define $H=(A, E_H)$ by adding $uw$ to $E_H$ if there exist terminals $s\in T_u$ and $t\in T_w$ such that $st\in\ConflictEdges$. ($H$ can easily be computed in linear time.)
	
We now claim that $\vcover(H)+|B|=\locoptim(T, \ConflictGraph, \starv)$. To see this, observe that for any local cut $X\subseteq V$ of $\starv$, $\cutof X$ 
must contain at least one edge from $\braced{u\starv} \cup E(T_u)$ for each $u\in B$, as
otherwise $X$ would have a conflict.
Furthermore, $A'= \braced{u \in A \suchthat |\cutof X \cap (\braced{u\starv} \cup E(T_u))| \ge 1}$ is a vertex cover for $H$.
(If $uw\in E_H$ was not covered, then some conflicting pair in $T_u \cup T_w$ would be in $X$.) This implies $|\cutof X| \ge \vcover(H)+|B|$.
And given a vertex cover $A'$ for $H$,  the set $X=\braced{v^*} \cup \bigcup_{u\in A-A'} T_u$ with boundary
$\cutof X=\braced{u\starv \suchthat u \in A' \cup B}$ is a local cut for $\starv$.
\end{proof}


\subsection{Main Algorithm} \label{subsec: reduce min max to local cut}

In this section we prove Theorem~\ref{thm: approximation for min-max variant}, by
reducing $\ConMulticutMaxTrees$ to $\LocalCutTreesImprop$. 

\begin{lemma}\label{lem: min-max approximation reduction}
There is a polynomial-time Turing reduction from \normalfont{$\ConMulticutMaxTrees$} to \normalfont{$\LocalCutTreesImprop$}.
\end{lemma}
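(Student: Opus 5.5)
The plan is to fix a threshold $L$ and decide, bottom-up by dynamic programming over the rooted tree $T$, which vertices can serve as the \emph{top} (the vertex closest to the root) of a component. Each vertex $v$ gets one oracle call for a local cut at $v$, made in a modified tree in which edges that must not be cut have been contracted. Fix $\rho\ge 1$ and an oracle for $\LocalCutTreesImprop$ with ratio $\rho$. By Observation~\ref{obs: reduce proper local cuts to improper cuts}, this yields an oracle for proper local cuts with the same guarantee, and it keeps instances unweighted. For the root $r$ set $\weight(v\pi_v)=0$.

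For each vertex $v$, processed in post-order, I compute a flag $\mathrm{good}(v)$. Its intended meaning is: $T_v$ has a connected multicut in which $v$'s component $X_v$ is proper, every component has boundary weight at most $\rho L$, and for $X_v$ this bound includes the parent edge $v\pi_v$. All descendants of $v$ have already been processed when $v$ is handled. The computation of $\mathrm{good}(v)$ goes as follows.
\begin{itemize}
\item Form $T'_v$ from $T_v$ by contracting every edge $ab$, with $b$ a child of $a$, such that $\mathrm{good}(b)$ is false. Form $\ConflictGraph'$ by the quotient that merges terminals landing in the same contracted vertex. If this creates a conflict inside a merged vertex, that vertex can never lie in a component; I handle this by declaring $\mathrm{good}(v)$ false whenever the contracted vertex containing $v$ is conflicting. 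Other conflicting vertices can be kept, since they will never be chosen in a feasible local cut.
\item Query the proper-local-cut oracle on $(T'_v,\ConflictGraph',v)$ with threshold $L'=L-\weight(v\pi_v)\le L$. If it returns $X$, set $\mathrm{good}(v)$ true and record $X$. Its boundary has weight at most $\rho L'+\weight(v\pi_v)\le\rho L$.
\end{itemize}
Answer "opt $>L$" if $\mathrm{good}(r)$ is false. Otherwise build the multicut recursively: take the recorded $X$ at $r$, uncontracted, as a component. Every boundary edge $ab$ of $X$ inside $T_r$ was not contracted, so $\mathrm{good}(b)$ holds, and I recurse on $b$ with its recorded cut. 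Connectivity and properness of each piece come from the local cut. Independence of terminals is preserved under uncontraction, because merged vertices carry the union of their terminals. All costs are at most $\rho L$. The procedure makes polynomially many queries, and contraction keeps unweighted instances unweighted.

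The main step is soundness: if $\maxopt\le L$ with optimal $\multicutM^*$, then every top vertex $b$ of a component of $\multicutM^*$ is good. I would prove this by induction from the leaves. Let $X^*$ be the component with top $b$. Every edge of $\boundary X^*$ below $b$ leads to a child $c$ that is the top of another component of $\multicutM^*$, so $c$ is good by induction, and that edge is not contracted. Hence the image of $X^*$ in $T'_b$ is a proper local cut with the same boundary inside $T_b$. It also stays independent: contracted edges with both ends in $X^*$ merge only mutually non-conflicting terminals, and contracted edges outside $X^*$ do not touch its image. Its cost is at most $L-\weight(b\pi_b)$, so the oracle succeeds. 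Applying this to $b=r$ shows that the algorithm cannot wrongly report "opt $>L$".

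Two points need care. The first, and the main obstacle, is justifying the contraction. It is valid only because non-good children can never be tops, and I must check that uncontracting preserves all three component conditions. The second is the exact accounting of the parent edge inside the threshold $L'$. For real weights the outer search over $L$ cannot be exact, which costs the additional $\epsilon$ mentioned in the paragraph on reductions.
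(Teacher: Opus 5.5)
\textbf{Verdict: same approach as the paper, with one step that does not work as written.}

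Your scheme matches the paper's proof step for step:
\begin{itemize}
\item $\mathrm{good}(v)$ plays the role of ``$\dptable_v$ is an $(L,\rho)$-approximate multicut for $T_v$''.
\item Contracting every edge into a non-good child is how the paper realizes $E_v^\rho$-restricted proper local cuts.
\item The threshold $L-\weight(v\pi_v)$ and the reconstruction $\dptable_v=\{X_v\}\cup\bigcup_{u\pi_u\in\cutof X_v}\dptable_u$ are the same as in the paper.
\item Your soundness induction over the tops of a fixed optimal multicut is the paper's invariant $(\star)$, specialized to one solution.
\end{itemize}

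\textbf{The gap.} The problem is your treatment of contracted vertices that absorb a conflicting pair of terminals, other than the one containing $v$. You keep them and say they ``will never be chosen in a feasible local cut.'' In $\LocalCutTreesImprop$, however, the conflict graph records conflicts between pairs of terminals. A quotient that merges $s,t$ with $st\in\ConflictEdges$ therefore turns that conflict into a loop, which is not part of a valid oracle instance. If you simply drop the loop, nothing prevents the oracle from returning a local cut that contains that vertex, and uncontracting it yields a component with a conflicting pair. A black-box reduction cannot assume the oracle honors self-conflicts. The same issue arises inside the path contraction of Observation~\ref{obs: reduce proper local cuts to improper cuts}, which you use to obtain proper cuts. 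The paper flags exactly this point as ``somewhat subtle.''

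\textbf{The fix.} It is the one the paper uses in its construction of $E'$-restricted cuts.
\begin{itemize}
\item You only need proper cuts, so guess a terminal $t$ of the cut and contract the $v$--$t$ path.
\item Reject the guess if this merges conflicting terminals or passes through a self-conflicting contracted vertex. This also subsumes your test on $v$'s own contracted vertex.
\item Now $v$'s vertex is a terminal. Replace the loop at each remaining self-conflicting contracted vertex $z$ by an ordinary conflict edge between $z$ and $v$'s vertex.
\end{itemize}
This forbids $z$ from every local cut for $v$, keeps the conflict graph simple, and keeps the tree unweighted. It does not disturb your soundness argument: the image of $X^*$ contains no self-conflicting vertex, and the guess can be taken to be a terminal of $X^*$, whose path to $b$ lies inside $X^*$. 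With this repair your proof is complete.
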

	
To prove Lemma~\ref{lem: min-max approximation reduction}, we assume that there is a constant $\rho$ and an
oracle for $\LocalCutTreesImprop$ which, given an instance  $(T',\ConflictGraph', \starv)$ and a real number $L'$, either returns a solution of cost at most $\rho L'$ 
or determines that $\locoptim(T',\ConflictGraph',\starv)>L'$.
Using this oracle we then describe an algorithm that,
given an instance $(T,\ConflictGraph)$ of $\ConMulticutMaxTrees$ and a real number $L$, 
either returns a multicut for $(T,\ConflictGraph)$ of cost at most $\rho L$, or determines that $\maxopt(T,\ConflictGraph) > L$. 
All queries to the oracle for $\LocalCutTrees$ use values $L' \le L$.

In this section we assume that $T$ is a tree in normal form, rooted at a vertex $r$. 
To streamline the presentation, we add to $T$ a ``dummy'' parent vertex $\pi_r$ with $\weight(r\pi_r) = 0$.


\myitparagraph{Intuitions.}
As a warm-up, we first show a simpler algorithm for $\ConMulticutMaxTrees$ with approximation ratio $8$.
The algorithm proceeds top-down, starting with the root $r$.
Compute a (proper) local cut $X_r\subseteq V$ for $r$ using the algorithm
from Lemma~\ref{lem: two approximation for local cut}. Then proceed recursively:
for each vertex $u$ with $u\pi_u \in \cutof X_r$, compute a local cut $X_u$ in $T_u$ (that is, treating $T_u$ as the input tree)
using Lemma~\ref{lem: two approximation for local cut},
and so on. The vertices $u$ for which local cuts $X_u$ are computed in this process will be called \emph{hubs}.
The computed multicut $\multicutM$ for $T$ is the set of all $X_u$'s for hubs $u$.

Lemma~\ref{lem: two approximation for local cut} implies that for each hub $u$, the cost of the local
cut $X_u$ in $T_u$ satisfies
$\weight(\cutof X_u)\le 2 \max_{v \in V} \locopt(T_v, \ConflictGraph \cap T_v, v)$. 
Every component of $\multicutM$ has a boundary in the form $\cutof X_u\cup\braced{u\pi_u}$, for some hub $u$, 
so its boundary cost is $\weight(\cutof X_u) + \weight(u\pi_u) \le 4\max_{v \in V} \locopt(T_v, \ConflictGraph \cap T_v, v)$.
To show an upper bound of $8$ on the approximation ratio,
it suffices to show that
$\locopt(T_u, \ConflictGraph \cap T_u, u) \le 2 \cdot \maxopt(T, \ConflictGraph)$ for each vertex $u$.
This inequality follows by considering the component $X$ of the optimal multicut for $T$ that contains $u$. Let 
$X' = X \cap V(T_u)$. This $X'$ is a local cut for $u$ in $T_u$, whose boundary within $T_u$
has cost $\weight(\partial_{T_u} X')\le \maxopt(T, \ConflictGraph)$.
If $X'$ is not proper,
we can merge $X'$ with any adjacent component below it (which exists, by the normal form, and must be proper), 
obtaining a proper local cut for $u$ in $T_u$ and at most doubling the cost.

While the bounds in this analysis are tight in the worst case, the extra factors of 2 are really due to a poor choice of components.
If one considers an optimal multicut $\multicutM$ of $T$ with min-max cost $L$, and we take any component $X \in \multicutM$ with common ancestor $u\ne r$, 
then $X$ is a proper local cut for $u$ in $T_u$ with cost at most $L-\weight(u \pi_u)$.
This observation suggests a bottom-up approach: first try to compute min-max multicuts for the sub-trees, 
then compute a local cut for the root using \emph{only} edges $u\pi_u$ for which $T_u$ admits a ``good'' solution.

\medskip

To formalize this idea, we need a slight strengthening of Observation~\ref{obs: reduce proper local cuts to improper cuts}.
For $E'\subseteq E$ and $\starv \in V$, we define
an \emph{$E'$-restricted proper local cut for $\starv$} (or just an \emph{$E'$-restricted cut}, for short)
to be a proper local cut $X$ for $\starv$ such that $\cutof X \subseteq E'$.
Let $\locopt(T, \ConflictGraph, \starv, E')$ denote the minimum weight of all $E'$-restricted proper local cuts for $\starv$.
(If there is no such cut, let $\locopt(T, \ConflictGraph, \starv, E')=\infty$.)

We observe that computing such a cut for $\starv$ can be reduced to $O(n)$ instances of $\LocalCutTreesImprop$, though this is somewhat subtle\footnote{
We could trivially reduce to $\LocalCutTrees$ by increasing the weights of $E-E'$, but this doesn't work if we want an unweighted tree.}.
As in Observation~\ref{obs: reduce proper local cuts to improper cuts}, if $\starv$ is not already a terminal, guess a terminal $t\in\Terminals$ and contract
the edges on the path from $\starv$ to $t$, merging terminals in $\ConflictGraph$ as needed.
If two conflicting terminals were merged, then there is no local cut without separating $\starv$ and $t$.
Otherwise, we identify $\starv$ and $t$ and contract the edges $E-E'$ and merge any terminals in $\ConflictGraph$ if they got merged in $T$.
If we merge two terminals that conflict, this does \emph{not} necessarily mean that there is no solution, but merely that $\starv$ must be separated
from the merged terminal $u$, in this case we simply add $u\starv$ to $\ConflictEdges$. But if we try to merge $\starv$ and $t'$ and $t'\starv\in\ConflictEdges$,
then there is no $E'$-restricted cut without separating $\starv$ and $t$.
If the previous checks pass, then let $(T'_t, \ConflictGraph_t')$ be the modified tree and conflict graph.
Then by inspection, $\locopt(T, \ConflictGraph, \starv, E') = \min_t \locoptim(T'_t, \ConflictGraph'_t, \starv)$, where $t$ ranges over all terminals
where $(T'_t,\ConflictGraph'_t)$ is defined.

We are now ready to describe the reduction, which is simply a dynamic program.


\begin{proof}[Proof of Lemma~\ref{lem: min-max approximation reduction}]
Let $L$ be a real number denoting our target cost and $\rho \ge 1$ be the approximation ratio of our $\LocalCutTreesImprop$ oracle.
From the discussion above, we may assume that this oracle can compute $E'$-restricted proper local cuts:
given a number $L'\le L$ and a tuple $(T, \ConflictGraph, \starv, E')$, where $\starv\in V$ and $E'\subseteq E$,
the oracle either returns an $E'$-restricted cut for $\starv$ of cost at most $\rho L'$ or determines that $\locopt(T, \ConflictGraph, \starv, E') >L'$.

For $v\in V$, define an \emph{$(L,\rho)$-approximate multicut for $T_v$} as a multicut $\multicutQ$ for $(T_v, \ConflictGraph \cap T_v)$ with 
$\maxcost(\multicutQ)\le \rho L$
and with $v$'s component $\multicutQ(v)$ satisfying a stricter requirement
$\weight(\cutof \multicutQ(v)) \le \rho (L - \weight(v \pi _v))$. 
For each $v\in V$ the algorithm will compute a table entry $\dptable_v$
whose value is either an $(L,\rho)$-approximate multicut for $T_v$ or $\nullsol$ (undefined). Let
\begin{align*}
	E_v^1 \;&=\; \braced{u\pi_u \suchthat u \in V(T_v)-\braced{v} \text{ and there is an $(L,1)$-approximate multicut for $T_u$}} 
	\\
	E_v^\rho \;&=\; \braced{u\pi_u \suchthat \text{ $u \in V(T_v)-\braced{v}$ and $\dptable_u$ is an $(L,\rho)$-approximate multicut for $T_u$} }
\end{align*}
We maintain the following invariant for each vertex $v\in V$: 
\begin{description}[nosep]
	\item{$(\star)$} If there is an $(L,1)$-approximate multicut for $T_v$,
then $D_v$ is $(L,\rho)$-approximate.
\end{description}

The validity of $(\star)$ for $v=r$  ensures that the existence of a multicut in $T$ of cost 
at most $L$ implies that $D_r$ is a multicut of cost at most $\rho L$, proving the 
correctness of the reduction.

We now describe an algorithm that guarantees the validity of the invariant $(\star)$ 
at each node $v$. The computation proceeds bottom-up, starting from the leaves.
If $v$ is a leaf, set $\dptable_v=\braced{\braced{v}}$ if $\weight(v\pi_v) \le L$, and 
set $\dptable_v=\nullsol$ otherwise. Then $(\star)$ holds for $v$.

Now suppose that $v$ is an internal vertex and that ($\star$) holds for all $u\in V(T_v)-\braced{v}$.
This implies $E^1_v \subseteq E^\rho_v$.
Using the oracle for $\LocalCutTrees$, we now try to compute an $E^\rho_v$-restricted cut $X_v$ in $T_v$ such that
$\cutof X_v\subseteq E_v^\rho$ and $\weight(\cutof X_v)\le \rho (L- \weight(v\pi_v))$. 
If the oracle fails, set $\dptable_v =\nullsol$; otherwise it returns $X_v$ and set
\begin{equation*}
\dptable_v \;=\; \braced{X_v} \cup \bigcup_{u\pi_u \in \cutof X_v} \dptable_u.
\end{equation*}
We first argue that if the oracle succeeds, then $\dptable_v$ is an $(L,\rho)$-approximate multicut for $T_v$.
Indeed, the component of $D_v$ containing $v$ is $X_v$ and by construction, 
$\weight(\cutof X_v)\le \rho (L- \weight(v\pi_v))$. Consider now a component $X\in D_v$ not containing
$v$, and let $u$ be the common ancestor of its nodes in $X$. 
Then if $u\pi_u \in \cutof X_v$, we have 
$\weight(\cutof{X}) \le \rho (L-\weight(u\pi_u)) + \weight(u\pi_u) \le \rho L$,
and otherwise $\weight(\cutof{X})\le \rho L$ by the inductive assumption.

Therefore, to maintain ($\star$), it suffices to show that if there is an $(L,1)$-approximate 
multicut for $T_v$, then the oracle succeeds.
This is straightforward: suppose we have an $(L,1)$-approximate multicut and let $Y\subseteq T_v$ be its component containing $v$.
By definition of $(L,1)$-approximate multicuts, $Y$ contains a terminal, $\weight(\cutof{Y}) \le L-\weight(v\pi_v)$, and
$\cutof Y\subseteq E_v^1 \subseteq E_v^\rho$.
Thus, $Y$ is an $E_v^\rho$-restricted proper local cut for $v$; in other words $\locopt(T_v, C\cap T_v, v, E_v^\rho) \le \weight(\cutof Y) \le L-\weight(v\pi_v)$.
By the guarantee of our oracle, it succeeds, returning an 
$E_v^\rho$-restricted proper local cut $X_v$ for $v$ with  $\weight(\cutof X_v) \le \rho(L-\weight(v\pi_v))$, as desired.
\end{proof}


\begin{proof}[Proof of Theorem~\ref{thm: approximation for min-max variant}]
Combine Lemma~\ref{lem: two approximation for local cut} and Lemma~\ref{lem: min-max approximation reduction} with binary search.
For the weighted case we need an initial upper bound on the optimum, which can be obtained from the $8$-approximation sketched earlier
or the $2$-approximation for
$\MulticutSumTrees$~\cite{garg_etal_primal-dual_multicut_trees_1997}, which gives a multicut of min-max cost at most $2|\Gamma|$ times the optimum.
\end{proof}


\begin{proof}[Proof of Theorem~\ref{thm: paramterized for unweighted min-max}]
By combining Observation~\ref{obs: reduction from unweighted local cut to vertex cover}, 
Observation~\ref{obs: reduce proper local cuts to improper cuts}, and Lemma~\ref{lem: min-max approximation reduction},
we obtain a polynomial-time reduction from unweighted $\ConMulticutMaxTrees$ to $\VertexCover$ that is approximation- and parameter-preserving, 
in the sense described in the beginning of Section~\ref{sec: 2-approximation minmax}.
The theorem then follows by using the vertex cover algorithms from \cite{harris_narayanaswamy_faster_algorithm_vertex_cover_parameterized_by_solution_size_2024,clinch_etal_faster_randomized_algorithm_vertex_cover_automated_approach_2025,brankovic_fernau_novel_parameterized_approximation_vertex_cover_2013}.
(Regarding the $O^*(1.1996^{|\Gamma|})$ time bound, one can observe that all vertex cover instances computed are 
graph quotients of $\ConflictGraph$ and thus have at most $|\Gamma|$ vertices.)
\end{proof}

A straightforward implementation of 
our 2-approximation algorithm (solving $\LocalCutTreesImprop$ instances with vertex cover) would take time $\tildeO(n^2(n+|\ConflictEdges|))$. 
In Appendix~\ref{sec: faster unweighted approximation} we give an alternative algorithm whose running time is nearly linear.


\section{Polynomial-Time Algorithm for $\ConMultiwayCutMaxTrees$}
\label{sec: polynomial time algorithm multiway minmax}
In this section we prove Theorem~\ref{thm: poly algorithm for min-max for trees}; that is
we give a polynomial-time algorithm for $\ConMultiwayCutMaxTrees$, the 
multiway-cut version of  $\ConMulticutMaxTrees$.
The result in Theorem~\ref{thm: poly algorithm for min-max for trees} is analogous to the result for $\MultiwayCutSumTrees$
in papers~\cite{chopra_rao_multiway_cut_polyhedron_1991,dahlhaus_1994_complexity_multiterminal_cuts,hasan_motwani_query_optimization_1995,costa_billionnet_multiway_cut_trees_2004},
although the min-max cost introduces challenges not present in the min-sum version. 

At a high-level, rather than computing the optimum solution directly, the algorithm
uses an $O(n)$-time dynamic-programming subroutine that, given a real number $L$, either
returns a multiway cut of min-max cost at most $L$ or determines that the optimum min-max cost is greater than $L$.
Using this subroutine, we can compute an optimal multiway cut for integer weights by binary search.
To obtain a strongly polynomial-time algorithm for real weights, we view our dynamic program as computing 
parametric functions over $L$,  which we then show can be represented using only $O(n)$ space per function.


\myparagraph{Notation.} 
We assume that $T$ is in binary normal form with root $r$. 
We also add a dummy parent $\pi_r$ to $r$ and set $\weight(r\pi_r)=0$. 
For a vertex $v\in V$, let $\treeplus_v = T_v \cup \braced{\pi_v}$, that is the sub-tree rooted at $v$ plus $v$'s parent.\footnote{
One can formulate a dynamic program using $T_v$ instead of $\treeplus_v$, but the latter is slightly more convenient.}
A \emph{$\treeplus_v$-cut} is a partition $\multicutM$ of $\treeplus_v$ satisfying the same conditions as a connected multicut,
with the possible exception that the component $\multicutM(\pi_v)$ may not contain a terminal.
We say $\multicutM$ is \emph{proper} if $\multicutM(\pi_v)$ does contain a terminal and \emph{improper} otherwise.
In particular, a proper $\treeplus_r$-cut is just a connected multicut of $T$.
When $v$ is clear from context, we often call $\multicutM(\pi_v)$ the ``top component'' of $\multicutM$.
To avoid clutter, for a $\treeplus_v$-cut $\multicutM$ the ``boundaries'' $\boundary \multicutM (\pi_v)$ and $\boundary\multicutM$
always refer to the boundary edges restricted to $E(\treeplus_v)$. 
For a non-top component in $\multicutM$, its boundaries in $T$ and $\treeplus_v$ are identical.



\begin{figure}[t]
\begin{center}
\includegraphics[width = 5in]{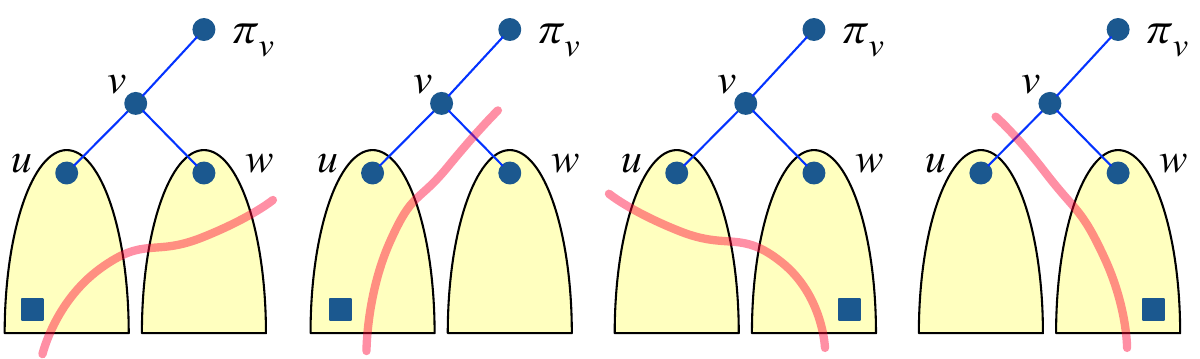}
\caption{%
Illustration of equation~(\ref{eqn: multiway min-max recurrence proper}) for $\dptable_v^1(L)$.
The wide orange (light) curve represents the boundary of the top component of $\dptable_v^1(L)$.
The two first cases correspond to $\sigma = 1$, the two other cases correspond to $\sigma = 0$.}
\label{fig: dp_for_minmax_recurrence}
\end{center}
\end{figure}


\myparagraph{Algorithm for integer weights.} 
For each $v\in V$ we define two functions $\dptable_v^0, \dptable_v^1 \colon \posreals \to \posreals$. 
$\dptable_v^1 (L)$ is the minimum, over all proper $\treeplus_v$-cuts $\multicutM$ with all non-top boundaries of weight at most $L$, 
of the boundary weight  $\weight(\boundary \multicutM(\pi_v) )$ of its top component in $\treeplus_v$. 
(When no $\treeplus_v$-cut satisfies these constraints, then $\dptable_v^i(L)=\infty$.)
$\dptable_v^0 (L)$ is defined analogously, but over improper $\treeplus_v$-cuts.
We colloquially refer to these as the ``D-functions''. 

We first fix $L$, and show a dynamic programming algorithm that computes $\dptable_v^0(L)$ and $\dptable_v^1(L)$
for all $v\in V$.
These values are stored in a table, and computed in the bottom-up order using a recurrence equation.
If $t$ is a leaf (that is, $t \in \Terminals$), we have $D^1_t (L)=0$ (no edges cut), and $\dptable_t^0 (L)$ equals $\weight(t\pi_t)$ if $\weight(t\pi_t) \le L$ 
(cut the edge $t\pi_t$) and is $\infty$ otherwise. 
For an internal vertex $v\in V$ with children $u$ and $w$, we use the following recurrences:
\begin{align}
	\dptable_v^1(L) \;&=\; \min_{\sigma \in \braced{0,1} } \braced{\dptable_u^{\sigma} (L) + \dptable_w^{1-\sigma} (L)} 
	\label{eqn: multiway min-max recurrence proper}
	\\
	\dptable_v^0 (L) \;&=\; \begin{cases}
		\min\braced{\dptable_u^0 (L)+\dptable_w^0(L), \weight(v\pi_v)} & \quad \text{if } \weight(v\pi_v) + \dptable_v^1 (L) \le L \\
		\dptable_u^0 (L)+\dptable_w^0(L) & \quad \text{else}
	\label{eqn: multiway min-max recurrence improper}
	\end{cases}
\end{align}
Figure~\ref{fig: dp_for_minmax_recurrence} illustrates equation~(\ref{eqn: multiway min-max recurrence proper}) for $\dptable_v^1(L)$.
For $\sigma =1$, the top component of $\dptable_v^1(L)$ is obtained from a top (proper) component of $\dptable_u^1(L)$
and the (improper) top component (which may or may not contain $w$) of $\dptable_w^0(L)$. The case of $\sigma=0$
is symmetric. The intuition behind the recurrence for  $\dptable_v^0(L)$ is similar, except that we also need to take into
account the case that its top compoment may only contain $\pi_v$.

From the definitions, we have that $\maxopt(T,\ConflictGraph) \le L$ if and only if $\dptable_r^1 (L) \le L$. 
Thus, checking whether $\maxopt(T,\ConflictGraph) \le L$ can be done with a subroutine that
computes the D-values at all nodes of the tree by the recurrence~(\ref{eqn: multiway min-max recurrence proper})
and (\ref{eqn: multiway min-max recurrence improper}), and
then checks if $\dptable_r^1 (L) \le L$.

If the weights are integer, the overall algorithm can compute the optimal value $L$ by
performing binary search for $L$ in the interval $[0,W]$, in each iteration using the above subroutine to 
test whether $\maxopt(T,\ConflictGraph) \le L$.

To complete the proof of Theorem~\ref{thm: poly algorithm for min-max for trees} for integer weights, we still need to
justify that all formulas for $\dptable_v^0(L)$ and $\dptable_v^1(L)$ are correct. 
The proof is by induction. If $v$ is a leaf, the correctness follows directly from definition. So assume that 
$v$ is an internal vertex. 

To streamline the argument, some auxiliary terminology will be helpful.
Suppose $v$ has children $u$ and $w$.
A $\treeplus_v$-cut is \emph{$L$-feasible} if all its non-top boundaries have weight at most $L$.
If $\multicutQ_u$ and $\multicutQ_w$ are a $\treeplus_u$-cut and $\treeplus_w$-cut respectively with top components $X_u$ and $X_w$,
and one of them is improper, their \emph{merge} is a
$\treeplus_v$-cut $\multicutQ_v$ consisting of all non-top components of $\multicutQ_u$ and $\multicutQ_w$
and top component $X_v = X_u\cup X_w \cup\braced{\pi_v}$.
Clearly, $|X_v \cap \Terminals| \le 1$ 
and $\weight(\partial X_v) = \weight(\partial X_u) + \weight(\partial X_w)$.
(Recall our convention that the boundary of a top component of a subtree $T^+_z$ includes only edges in $T^+_z$.)
Also, $\multicutQ_v$ is $L$-feasible iff both $\multicutQ_u$ and $\multicutQ_w$ are $L$-feasible.
Further, \emph{any} $\treeplus_v$-cut $\multicutQ_v$ (with top component $X_v$ containing $v$) induces
a $\treeplus_u$-cut $\multicutQ_u$ and a $\treeplus_w$-cut $\multicutQ_w$ such that
$\multicutQ_v$ is a merge of $\multicutQ_u$ and $\multicutQ_w$: assign each
non-top component to $\multicutQ_u$ or  $\multicutQ_w$, depending on in which subtree it is contained,
and let the respective top components be $X_u = X_v\cap V(T^+_u)$ and $X_w = X_v\cap V(T^+_w)$.

We are now ready to continue the proof. Consider first $\dptable_v^1(L)$. 
Recall that in the normal form all terminals are leaves. By the above discussion,
any $\treeplus_v$-cut $\multicutQ_v$ is proper and $L$-feasible if and only if it is a
merge of an $L$-feasible $\treeplus_u$-cut $\multicutQ_u$ and an $L$-feasible $\treeplus_w$-cut $\multicutQ_w$,
one proper and one improper.
The correctness of~(\ref{eqn: multiway min-max recurrence proper}) follows from the earlier established
relation between the top boundaries, namely $\weight(\partial X_v) = \weight(\partial X_u) + \weight(\partial X_w)$.

Next, consider $\dptable_v^0(L)$.
We distinguish two disjoint types of improper $L$-feasible $\treeplus_v$-cuts $\multicutQ_v$: 
(i) those whose top component $X_v$ is $X_v = \braced{\pi_v}$ and (ii) those where $X_v \neq \braced{\pi_v}$.
It is sufficient to show that the two options in the formula~(\ref{eqn: multiway min-max recurrence improper}) 
correctly compute the optimal improper $L$-feasible $\treeplus_v$-cuts of these two types, respectively.

Regarding type~(i), let $\multicutQ_v$ be an improper $L$-feasible $\treeplus_v$-cut.
Observe that $\multicutQ_v$ is
of type~(i) iff there is a \emph{proper} $L$-feasible $\treeplus_v$-cut $\multicutQ'_v$ 
such that the non-top components of $\multicutQ_v$ consist of all non-top components of $\multicutQ'_v$
and one additional component $X''_v = X'_v\setminus\braced{\pi_v}$ 
that satisfies $\weight(\cutof X''_v) = \weight(v\pi_v) + \weight(\cutof X'_v) \le L$,
where $X'_v$ is the top component of $\multicutQ'_v$. (We add
$\weight(v\pi_v)$ because the boundary of $X''_v$, treated as a non-top component in $\multicutQ_v$, 
also includes $v\pi_v$. Note that there is no type~(i) $\treeplus_v$-cut if this inequality is false.)
This shows the correctness of the top option in~(\ref{eqn: multiway min-max recurrence improper}).

The argument for type~(ii) is easier:
By the earlier discussion,
$\treeplus_v$-cut $\multicutQ_v$ is of type~(ii) if and only if it is 
a merge of an improper $L$-feasible $\treeplus_u$-cut $\multicutQ_u$ and an improper $L$-feasible $\treeplus_w$-cut $\multicutQ_w$.
Then again the correctness of~(\ref{eqn: multiway min-max recurrence improper}) for type~(ii)
follows from the earlier established
relation between the top boundaries, namely $\weight(\partial X_v) = \weight(\partial X_u) + \weight(\partial X_w)$.


\myparagraph{Extension to real weights.} 
The key idea is not to deal with \emph{individual} values $D_v^\sigma(L)$ of
the functions, but rather to deal with the \emph{entire} functions
$D_v^\sigma(\cdot)$. This is possible because the functions have a very simple
structure.

\begin{claim}\label{cla: D-functions form}
	For each $v\in V$ and $\sigma\in\braced{0,1}$, the function $\dptable_v^\sigma (\cdot)$ is non-increasing, right-continuous, and piece-wise constant.\footnote{
	To account for $\infty$-values, ``right-continuous'' in this context means that for each $0\le L<\infty$, there is some $\epsilon>0$ such that $\dptable_v^i(L+\delta)=\dptable_v^i(L)$ for all $0<\delta<\epsilon$.}
	Furthermore, all points of discontinuity of all D-functions are from some set $\jumps \subseteq \posreals$ of size $n=|V|$.
\end{claim}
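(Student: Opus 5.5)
We prove the claim by induction over the tree, bottom-up, using the recurrences (\ref{eqn: multiway min-max recurrence proper}) and (\ref{eqn: multiway min-max recurrence improper}). These hold pointwise for every $L$, so they also hold as identities between functions of $L$. We do not fix $\jumps$ in advance. Instead, we assign at most one new breakpoint to each vertex, and then show that all discontinuities of $\dptable_v^0,\dptable_v^1$ lie in the set $\jumps_v$ of breakpoints assigned to vertices of $T_v$. This gives $|\jumps|\le n$ with $\jumps=\jumps_r$.

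Monotonicity follows directly from the definitions and needs no induction. As $L$ grows, the feasibility constraint ``all non-top boundaries have weight at most $L$'' only relaxes. The minimum defining $\dptable_v^\sigma(L)$ is therefore taken over a growing family, so it is non-increasing.

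For the base case, let $t$ be a leaf. Then $\dptable_t^1\equiv 0$, and $\dptable_t^0(L)$ equals $\infty$ for $L<\weight(t\pi_t)$ and $\weight(t\pi_t)$ for $L\ge \weight(t\pi_t)$. This is right-continuous and piecewise constant with the single breakpoint $\weight(t\pi_t)$, which we assign to $t$.

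For the inductive step, let $v$ be internal with children $u,w$. We use two facts:
\begin{itemize}
\item A sum or minimum of finitely many non-increasing, right-continuous, piecewise constant functions is again such a function, and its discontinuities lie in the union of the discontinuities of the summands or arguments.
\item Each such function of the form $\{L : f(L)\le L\}$ is a closed up-ray. More precisely, if $f$ is non-increasing and right-continuous, then $g(L)=L-f(L)$ is strictly increasing and right-continuous, so $\{L: f(L)\le L\}=[L_v,\infty)$ for $L_v=\inf\{L: f(L)\le L\}$, and the infimum is attained.
\end{itemize}

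By the first fact, $\dptable_v^1=\min_\sigma(\dptable_u^\sigma+\dptable_w^{1-\sigma})$ has the required form, with discontinuities in $\jumps_u\cup\jumps_w$. Likewise $h=\dptable_u^0+\dptable_w^0$ has the required form, with discontinuities in $\jumps_u\cup\jumps_w$.

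Now apply the second fact to $f=\weight(v\pi_v)+\dptable_v^1$, which is non-increasing and right-continuous. The condition in (\ref{eqn: multiway min-max recurrence improper}) then holds exactly on $[L_v,\infty)$. Hence
\[
\dptable_v^0 = h\cdot[L<L_v] + \min\{h,\weight(v\pi_v)\}\cdot[L\ge L_v]
\]
(read with the obvious meaning). This function is piecewise constant and non-increasing, and its only possible new discontinuity is at $L_v$. At $L_v$ it is right-continuous, because the interval $[L_v,\infty)$ is closed on the left. We assign $L_v$ to $v$ as its breakpoint. If the set is empty, $L_v=\infty$ and no breakpoint is assigned.

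Since each vertex contributes at most one breakpoint, we get $|\jumps|\le n$. The main subtle step is the threshold argument: we must check that $f(L)\le L$ defines a left-closed ray. This relies on right-continuity of $\dptable_v^1$ and on its $\infty$ values being handled consistently, with $f=\infty$ treated as failing the test. Everything else is routine closure under sums and minima.
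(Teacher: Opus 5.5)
Your proof is correct and follows essentially the paper's route. The recurrences express every $D$-function as a min/sum combination of step functions, with exactly one new breakpoint $L_v$ per vertex, located where the condition $\weight(v\pi_v)+\dptable_v^1(L)\le L$ in~(\ref{eqn: multiway min-max recurrence improper}) switches on. The only difference is how you show that this condition holds exactly on a closed ray $[L_v,\infty)$: you argue analytically from the monotonicity and right-continuity of $\dptable_v^1$, whereas the paper argues combinatorially, taking $L_v$ to be the minimum max-cost over the finitely many improper $\treeplus_v$-cuts whose top component is $\braced{\pi_v}$.
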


\begin{proof}
The fact that $\dptable_v^\sigma$ is non-increasing follows immediately from its definition. To prove the rest of the claim, it is useful to introduce a set of auxiliary functions
$H_v$:
for $v\in V$, let $H_v(L)=\weight(v\pi_v)$ if $\weight(v\pi_v)+\dptable_v^1 (L) \le L$, and otherwise let $H_v(L)=\infty$.
Then for $t\in\Terminals$, $\dptable_t^0 (L) = H_t (L)$,
and for each internal vertex $v \in V \setminus \Terminals$ with children $u,w$,
$\dptable_v^0 (L) = \min\braced{\dptable_u^0 (L) +\dptable_w^0 (L), H_v (L)}$. 
By induction,  all D-functions are some combination of $H$-functions using component-wise $\min$ and $+$ operations.

The condition in the definition of $H_v$ 
can be phrased as an optimization problem: $\weight(v\pi_v)+\dptable_v^1 (L) \le L$ if and only if 
there is an \emph{improper} $\treeplus_v$-cut which cuts $v\pi_v$ and has min-max cost (in $\treeplus_v$) at most $L$.
(Or equivalently, the component containing $\pi_v$ is a singleton. In this interpretation, $\dptable_v^1 (L)$ corresponds to the boundary of the component containing $v$ in such a cut.)
Then $L_v \coloneqq \min \braced{L \suchthat \weight(v\pi_v)+\dptable_v^1 (L) \le L}$ is the minimum max-cost over all such cuts, 
and $H_v(L) = \weight(v\pi_v)$ if $L \ge L_v$, and $H_v(L)=\infty$ otherwise.
That is, $H_v$ is just a right-continuous step function with at most one point of discontinuity.
Then, by induction, the points of discontinuity of $\dptable_v^\sigma$ form a subset of $\jumps_v^\sigma$, 
where $\jumps_v^1 =\braced{L_u}_{u \in T_v -\braced{v}}$ and $\jumps_v^0 = \jumps_v^1 \cup \braced{L_v}$.
It also follows that $\dptable_v^\sigma$ is piece-wise constant and right-continuous.
\end{proof}


In the algorithm, we represent a piece-wise constant right-continuous function $f$
by a list $\braced{(a_0,f(a_0)),...,(a_l,f(a_l))}$ where $a_1 < ... < a_l$ are the points of discontinuity and $a_0=0$. 
For $x\ge 0$, $f(x)=f(a_i)$, where $i$ is the largest index such that $a_i \le x$.
Given list-representations for functions $f$ and $g$ with discontinuity sets $A$ and $B$ respectively, we can compute a list 
representation for their component-wise minimum or sum, whose discontinuities are in $A\cup B$, in $O(|A|+|B|)$ time.
Crucially, because the points of discontinuity among our $D$-functions come from a subset of size $n$, we can perform these calculations in $O(n)$ time.

The final algorithm is this: for each $v\in V$ and $\sigma\in \braced{0,1}$, 
compute the value $L_v$ and a list representation of $\dptable_v^\sigma$ over discontinuities $\jumps_v^\sigma$.
Then return $L_r$, which is equal to $\maxopt(T,\ConflictGraph)$.
Computing lists for the terminals is easy, so suppose $v$ is an internal vertex with children $u,w$.
Computing a list for $\dptable_v^1$ takes $O(n)$ time from functions $\dptable_u^0$, $\dptable_w^0$, $\dptable_u^1$ and $\dptable_w^1$,
using equation~(\ref{eqn: multiway min-max recurrence proper}).
To compute a list for $\dptable_v^0$, we first need to compute $L_v$ from $\dptable_v^1$.
To do this, suppose $\dptable_v^1$ has points of discontinuity $a_1,...,a_l$ and $a_0=0$.
We then claim $L_v=\min_{0\le i \le l} a'_i$, where $a'_i= \max\braced{a_i, \weight(v\pi_v) + \dptable_v^1 (a_i)}$.
Indeed, if $j$ is the greatest index where $a_j \le L_v$, then $\weight(v\pi_v) + \dptable_v^1 (a_j)\le L_v$ and $a'_j \le L_v$.
And by monotonicity, $\dptable_v^1(a'_i)+\weight(v\pi_v) \le a'_i$ for all $i$, so the formula is correct.
Then using the equation $\dptable_v^0 (L) = \min\braced{\dptable_u^0 (L) +\dptable_w^0 (L), H_v (L)}$ we compute a list for $\dptable_v^0$.
Clearly, we spend $O(n)$ computing a list for each $\dptable_v^\sigma$, and thus the total running time is $O(n^2)$,
completing the proof of Theorem~\ref{thm: poly algorithm for min-max for trees}.


\section{Algorithms Parameterized by Neighborhood Diversity}
\label{sec: neighborhood diversity algorithms}

	In this section we prove Theorem~\ref{thm: neighborhood diversity result}(a). 
	We assume that our tree $T$ is in binary normal form with root $r$.
	We add a dummy parent $\pi_r$ for $r$ and set $\weight(r \pi_r)=0$.
	We will also use the notion of $\treeplus_v$-cuts and their boundary notation as introduced in Section~\ref{sec: polynomial time algorithm multiway minmax}.
	
	To motivate our algorithm, we briefly sketch a simple (albeit inefficient) dynamic program for $\MulticutSumTrees$.
	Its pseudo-code is presented in Algorithm~\ref{alg: ind set warm up}.
	$\dptable_v^s$ and $\dptable_v^*$ have a natural interpretation: for an independent set $s\subseteq \Terminals$,
	$\dptable_v^s$ is the minimum sum-cost over all $\treeplus_v$-cuts where the top cluster contains exactly the terminals from $s$, while $\dptable_v^*$ is the minimum sum-cost over all proper $\treeplus_v$-cuts.
	
	Without giving a complete analysis, we quickly justify the recurrences in Lines~\ref{line: min sum convolution warm up non empty}~and~\ref{line: min sum convolution warm up empty}.
	For vertex $v\in V - \Terminals$ with children $u$ and $w$, let $\multicutM$ be a $\treeplus_v$-cut whose top component satisfies $\multicutM(\pi_v) \cap \Terminals = s$.
	If $v\pi_v \notin \boundary\multicutM$ (which holds when $s\ne \emptyset$), then 
	restricting the components of $\multicutM$ to $\treeplus_u$ and $\treeplus_w$ yields a $\treeplus_u$-cut $\multicutM_u$ and a $\treeplus_w$-cut $\multicutM_w$, respectively.
	The terminals in the corresponding top components $\multicutM_u(\pi_u)$ and $\multicutM_w(\pi_w)$ partition $s$ (perhaps trivially, one or both components may have no terminals),
	while the boundary edges $\boundary \multicutM = \boundary \multicutM_u \sqcup \boundary \multicutM_w$ get partitioned.
	This justifies Line~\ref{line: min sum convolution warm up non empty} and the first term in Line~\ref{line: min sum convolution warm up empty}.
	When $v\pi_v \in \boundary\multicutM$, $\multicutM$ is improper and removing $v\pi_v$ from the boundary creates a proper $\treeplus_v$-cut,
	justifying the second term in Line~\ref{line: min sum convolution warm up empty}.


\begin{algorithm}[t]
\caption{Exponential-time algorithm for computing $\sumopt(T,\ConflictGraph)$.}\label{alg: ind set warm up}
\begin{algorithmic}[1]
\State Let $\states$ be the family of all independent sets of $\ConflictGraph$, including $\emptyset$
\State For each $v\in V$ and $s\in\states$, create a table entry $\dptable_v^s \in\posreals\cup\braced{\infty}$
\For{$t \in \Terminals$}
	\Comment{initialize $\dptable_t$ for all leaves $t\in \Terminals$}
	\State $\dptable_t^{\braced{t}} \gets 0$ \,,\, 
			$\dptable_t^* \gets 0$ \,,\,
			$\dptable_t^{\emptyset} \gets \weight (t\pi_t)$ \,,\, 
			$\dptable_t^s \gets \infty$ for $s\in \states \setminus \braced{\braced{t}, \emptyset}$\label{line: warm up base case}
\EndFor
\For{$v\in V-\Terminals$, processing bottom-up}
	\State Let $u$ and $w$ be $v$'s children
	\For{$s \in \states - \braced{\emptyset}$}
		\State $\dptable_v^s \gets \min_{a \add b=s} \braced{\dptable_u^a+\dptable_w^b}$ \label{line: min sum convolution warm up non empty}
		\Comment{$a \add b=s$ is a partition of $s$}
	\EndFor
	\State $D^*_v \gets \min_{s \ne \emptyset} \dptable_v^s$
	\State $\dptable_v^\emptyset \gets \min\braced{\,\dptable_u^\emptyset+\dptable_w^\emptyset\,,\, \weight(v\pi_v) + D^*_v\,}$. \label{line: min sum convolution warm up empty}
\EndFor
\State \Return $D^*_r$
\end{algorithmic}
\end{algorithm}


	The runtime is dominated by the number of independent sets in $\ConflictGraph$ and Line~\ref{line: min sum convolution warm up non empty}, 
	so it's at most $O(|\states|^2 n)$ in the worst case.\footnote{
	In terms of $|\Terminals|$, one can see that the running time is in fact $O(3^{|\Terminals|} n)$, since Line~\ref{line: min sum convolution warm up non empty} is just a min-sum subset convolution.
	Note also that  independent sets can be enumerated in $O(|\states|n)$ time \cite{kurita_etal_constant_amortized_time_enumeration_for_independent_sets_in_graphs_with_bounded_clique_number_2021}.}
	
	A natural idea for speeding up the algorithm is to group independent sets into equivalence classes, as few as possible, that can be used as states in dynamic programming. 
	Roughly, two independent subsets $s,s'$ of
	$\Terminals$ can be considered equivalent if, for any vertex $v$, swapping $s$ and $s'$ in the top cluster of $v$
	will not affect the computation in the ancestors of $v$.
	To generalize the recurrence in line~\ref{line: min sum convolution warm up non empty} of the algorithm, this equivalence relation
	needs to be preserved under disjoint unions. 
	For our algorithm, we define this equivalence relation on independent sets in terms of a \emph{type graph} and
	\emph{neighborhood diversity}~\cite{lampis_algorithmic_meta_theorems_for_restrictions_of_treewidth_2012}, 
	which we define next.
	This relation is in general not optimal, but it applies to a broad class of graphs and can be computed efficiently.
	Later, in Section~\ref{sec: generalized type graphs} we consider a natural generalization of these concepts
	that allows us to establish Corollary~\ref{cor: neighborhood diversity + clique cover + multipartite result}.

	
	\myparagraph{Neighborhood diversity.}
	For a conflict graph $\ConflictGraph=(\Terminals, \ConflictEdges)$, two terminals $t,t'\in\Terminals$ are \emph{of the same type} (also called \emph{twins}) if $N(t')-\braced{t}=N(t)-\braced{t'}$,
	where $N(\cdot)$ is the open neighborhood in $\ConflictGraph$. This property forms an equivalence relation, whose congruence classes $\TwinClasses$ are called \emph{types}.
	The number of types $\nd(\ConflictGraph)=|\TwinClasses|$ is then the \emph{neighborhood diversity} of $\ConflictGraph$.
	We treat types as abstract elements instead of subsets of terminals, using the map $\typeof\colon \Terminals \to \TwinClasses$ to denote which type $\typeof(t)$ that terminal $t$ is assigned to.
	Associated with the types is a unique \emph{type graph} $\TypeGraph=(\TwinClasses, \TwinEdges, \typeof)$, which has the property that for all $t,t'\in \Terminals$,
	$tt' \in \ConflictEdges$ if and only if $\typeof(t)\typeof(t')\in \TwinEdges$.
	(A type $\eta\in\TwinClasses$ may have a self-loop in $\TypeGraph$, which we denote by $\eta \eta \in \TwinEdges$.)
	Intuitively, the types form a partition of $\Terminals$ where (i) each part is either a clique or independent set and
	(ii) between each pair of parts is either a biclique or no edges at all.
	It is a standard fact that $\TypeGraph$ can be computed in $O(n+|\ConflictEdges|)$ time (see Figure~\ref{fig: type graph} for an example).
	

	\begin{figure}[t]
	\begin{center}
	\includegraphics[width = 4.5in]{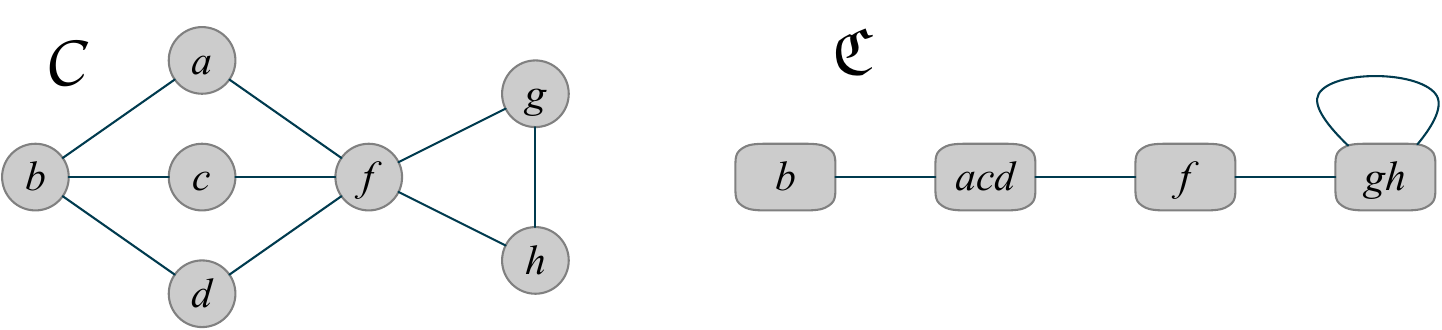}
	\caption{%
	An example of a conflict graph $\ConflictGraph$ and its type graph $\TypeGraph$.
	The types are labeled by the vertices of $\ConflictGraph$ mapped to them.
	Since $\TypeGraph$ has four vertices (types), $\nd(\ConflictGraph) = 4$.
	}
	\label{fig: type graph}
	\end{center}
	\end{figure}



	\myparagraph{States and splits.}
	Let $\CTwinClasses\subseteq\TwinClasses$ denote the types with self-loops (\emph{c-types}) and $\ITwinClasses=\TwinClasses-\CTwinClasses$ (\emph{i-types}).
	Note that the terminal set $\typeof^{-1}(\eta)$ is a clique when $\eta$ is a c-type and is an independent set otherwise.
	Letting $\typeof(\Terminals') = \braced{\typeof(t) \mid t \in \Terminals'}$ be the image of $\Terminals'\subseteq\Terminals$ under $\typeof$,
	we observe that $\typeof$ maps independent sets of $\ConflictGraph$ to independent sets of $\TypeGraph$ (with self-loops allowed).
	From here on we refer to the latter as \emph{states} and denote its elements $s\in\indsets(\TypeGraph)$ using lower-case letters. (We include $\emptyset \in \indsets(\TypeGraph)$.)
	Naturally, $\indsets(\TypeGraph)$ will replace $\indsets(\ConflictGraph)$ in Algorithm~\ref{alg: ind set warm up},
	but to do so requires obtaining an analog of Line~\ref{line: min sum convolution warm up non empty};
	here we need a correspondence between bipartitions $\Terminals'=A\sqcup B$ of an independent set 
	$\Terminals'\in \indsets(\ConflictGraph)$ and pairs of states associated with $\typeof(\Terminals')$.
	Crucially, we in general do \emph{not} want to consider bipartitions of $\typeof(\Terminals')$:
	while the c-types in $\typeof(\Terminals')$ will be partitioned between $\typeof(A)$ and $\typeof(B)$, if $\eta\in\typeof(\Terminals')$ is an i-type,
	then it's possible that $A$ and $B$ could \emph{both} contain terminals from $\typeof^{-1}(\eta)$.
	
	This motivates the following definitions.
	For $s\in\indsets(\TypeGraph)$, $\ctwin{s}$ and $\itwin{s}$ are the c-types and i-types in $s$ respectively, so $s=\ctwin{s}\sqcup\itwin{s}$.
	The \emph{splits of $s$} denotes the collection of pairs $\splits(s) = \braced{(a,b) \in \subsets(\TwinClasses)^2 \mid a\cup b = s \text{ and } \ctwin{a} \cap \ctwin{b} = \emptyset}$.
	In particular, we allow $\itwin{a} \cap \itwin{b}  \ne \emptyset$.
	
	Our algorithm does not depend so much on the particulars of neighborhood diversity or type graphs.
	In fact, we only need the three following properties:
	\begin{description}
	\item[(A1)] For all $A \in \indsets(\ConflictGraph)$, $A =\emptyset$ if and only if $\typeof(A)=\emptyset$.
	\item[(A2)] For all $B \in \indsets(\ConflictGraph)$ and $A\subseteq B$, $(\typeof(A), \typeof(B-A)) \in \splits(\typeof(B))$.
	\item[(A3)] For all $A,B \in \indsets(\ConflictGraph)$ where $A \cap B = \emptyset$, $(\typeof(A), \typeof(B)) \in \splits(s)$ implies $A\cup B \in \indsets(\ConflictGraph)$ and $\typeof(A \cup B)=s$.
	\end{description}
	Intuitively, (A1) just says that $\typeof$ distinguishes the empty set in $\indsets(\ConflictGraph)$, 
	(A2) says we can convert bipartitions of $B \in\indsets(\ConflictGraph)$ into splits of $\typeof(B)$,
	(A3) says that $A\sqcup B$ is automatically independent if $(\typeof(A),\typeof(B))$ is a split of some state (and that $\typeof(A\cup B)$ must be this state).
	That these properties hold is straightforward to verify.
	
	The runtime of our algorithm is dominated by the total number of splits or \emph{split complexity}, denoted by $\spcomp(\TypeGraph)=\sum_{s \in \indsets(\TypeGraph) } |\splits(s)|$.
	It is easy to see that $\splits(s)$ and $\splits(s')$ are disjoint for two distinct states $s$, $s'$,
	so the total number of splits is $\spcomp(\TypeGraph)\le |\indsets(\TypeGraph)|^2 \le 4^{\nd(\ConflictGraph)}$.


\myparagraph{Algorithm for $\MulticutSumTrees$.}
	As mentioned before, we can compute $\ConflictGraph$'s type graph $\TypeGraph$ in $O(n+|\ConflictEdges|)$ time.
	The main algorithm will actually run in time $O(\spcomp(\TypeGraph) n)$, which we do by enumerating $\indsets(\TypeGraph)$ and the splits directly.
	Enumerating independent sets can be done in $O(|\indsets(\TypeGraph)| |\TwinClasses|)=O(\spcomp(\TypeGraph) n)$ time, and for each $s\in\indsets(\TypeGraph)$,
	$\splits(s)$ can be computed in $O(|\splits(s)|)$ time by listing all bipartitions of $\ctwin{s}$ and pairs of subsets of $\itwin{s}$.
	
	We now just need to modify Algorithm~\ref{alg: ind set warm up} as follows: $\states$ is now $\indsets(\TypeGraph)$, 
	every instance of $\braced{t}$ in Line~\ref{line: warm up base case} is replaced by $\braced{\typeof(t)}$,
	and the min-sum convolution formula in Line~\ref{line: min sum convolution warm up non empty} is replaced by $\min_{(a,b)\in\splits(s)}\braced{D_u^a+D_w^b}$.
	We also reinterpret the meaning of $D_v^s$ and $D_v^*$, as follows. 
	We say a $\treeplus_v$-cut $\multicutM$ is in state $s\in\states$ if the top component satisfies $\typeof(\multicutM(\pi_v)\cap \Terminals)=s$.
	Then $D_v^s$ is the minimum sum-cost over all $\treeplus_v$-cuts in state $s$, and by property~(A1) a $T_v$-cut is in state $\emptyset$ if and only if it is improper,
	so $D_v^*$ is still the minimum sum-cost over all proper $\treeplus_v$-cuts.
	
	The correctness of $D_t^s$ for leaves $t \in\Terminals$ in Line~4 is easy to verify.
	The formula in Line~\ref{line: min sum convolution warm up empty} has not changed, so we only need to justify the new formula for Line~\ref{line: min sum convolution warm up non empty}.
	Let $\multicutM$ be a $\treeplus_v$-cut in state $s\ne\emptyset$ with sum-cost $D_v^s$, and for $c\in ch(v)=\braced{u,w}$, 
	let $\multicutM_c$ be the induced $\treeplus_c$-cut by restricting $\multicutM$ to $\treeplus_c$.
	Since $\multicutM$ is proper, $v\pi_v\notin\boundary \multicutM$ and $\multicutM_c(\pi_c)= \multicutM_(\pi_v) \cap \treeplus_c$.
	Letting $\Terminals_c = \multicutM_c(\pi_c) \cap \Terminals$, we clearly have $\boundary\multicutM=\boundary\multicutM_u \sqcup \boundary\multicutM_w$ 
	and $\multicutM(\pi) \cap \Terminals = \Terminals_u \sqcup \Terminals_w$.
	Furthermore, $\multicutM_u$ and $\multicutM_w$ are in states $a = \typeof(\Terminals_u)$ and $b=\typeof(\Terminals_w)$ respectively, 
	and by property~(A2) we must have $(a,b)\in\splits(s)$, and so we have $D_v^s\ge D_u^a+D_w^b$.
	Likewise, if we are given a $\treeplus_u$-cut and $\treeplus_w$-cut with terminals $\Terminals_u$ and $\Terminals_w$ in their respective top-components, 
	and $(\typeof(\Terminals_u), \typeof(\Terminals_w))\in \splits(s)$,
	then by property~(A3) $\Terminals_u \cup \Terminals_w$ is independent and $\typeof(\Terminals_u \cup \Terminals_w)=s$, 
	and so merging the two cuts yields a $\treeplus_v$-cut in state $s$.
	This completes the proof of Theorem~\ref{thm: neighborhood diversity result}(a).


\myparagraph{Algorithm for $\ConMulticutMaxTrees$.}
	The dynamic programming algorithm claimed in Theorem~\ref{thm: neighborhood diversity result}(b)
	is a straighforward combination of the dynamic program formulation above with techniques in 
	Section~\ref{sec: polynomial time algorithm multiway minmax}.
	The key change is that now we need to store the minimum boundary weight of the top component of $\treeplus_v$-cuts in state $s$, 
	subject to the other components having boundary weight at most $L$.
	More details can be found in the proof of a more general statement in
	in Section~\ref{sec: generalized type graphs}.


\section{Generalized Type Graphs and Split Complexity}
\label{sec: generalized type graphs}

	In this section, we extend the algorithms from Sections~\ref{sec: polynomial time algorithm multiway minmax}~and~\ref{sec: neighborhood diversity algorithms}
	using ``generalized'' type graphs, then prove Theorem~\ref{thm: generalized neighborhood diversity result} and Corollary~\ref{cor: neighborhood diversity + clique cover + multipartite result}.

	
\myitparagraph{Basic ideas.}
We start by briefly explaining the concept of generalized type graphs and split complexity, and
outlining the proof of Theorem~\ref{thm: generalized neighborhood diversity result}.
We are assuming the reader is familiar with the terminology from Section~\ref{sec: neighborhood diversity algorithms}.
Informally, a ``generalized type graph'' $\TypeGraph=(\TwinClasses,\TwinEdges,\typeof)$ modifies the notion of type graphs from neighborhood diversity, allowing 
a terminal $t$ to be assigned \emph{multiple} types $\typeof(t) \subseteq\TwinClasses$ rather than one. 
A conflict $tt'\in\ConflictEdges$ is represented by the existence of an edge in $\TwinEdges$ between some pair
of types from $\typeof(t)$ and $\typeof(t')$. This extension can exponentially reduce the number of dynamic programming states: in particular we only need $2^q$ states in this model for $q$-clique cover.
Aside from changing the semantics of types, the definition of splits is unchanged, and in fact $\spcomp(\TypeGraph)$ was already defined in Section~\ref{sec: neighborhood diversity algorithms} for the standard type graph
and is unchanged for our general formulation.
Crucially, the properties~(A1)-(A3) identified in that section still hold in this model,
and the algorithms for Theorem~\ref{thm: generalized neighborhood diversity result} are almost identical to their neighborhood diversity variants.
	
	In general, conflict graphs admitting a type graph $\TypeGraph$ with low split complexity $\spcomp(\TypeGraph)$ must have bounded neighborhood diversity $\nd(\ConflictGraph$),
	but the former can be exponentially smaller than the $4^{\nd(\ConflictGraph)}$ factor in Theorem~\ref{thm: neighborhood diversity result}.
	The downside is that the following algorithms require specifying $\TypeGraph$ on input; because this model extends graphs with $q$-clique covers, which take $O^*(2^{2^q})$ time to detect,
	it is unlikely that efficiently computing a minimal generalized type graph is possible.
	
	%

	\begin{figure}[t]
	\begin{center}
	\includegraphics[width = 6in]{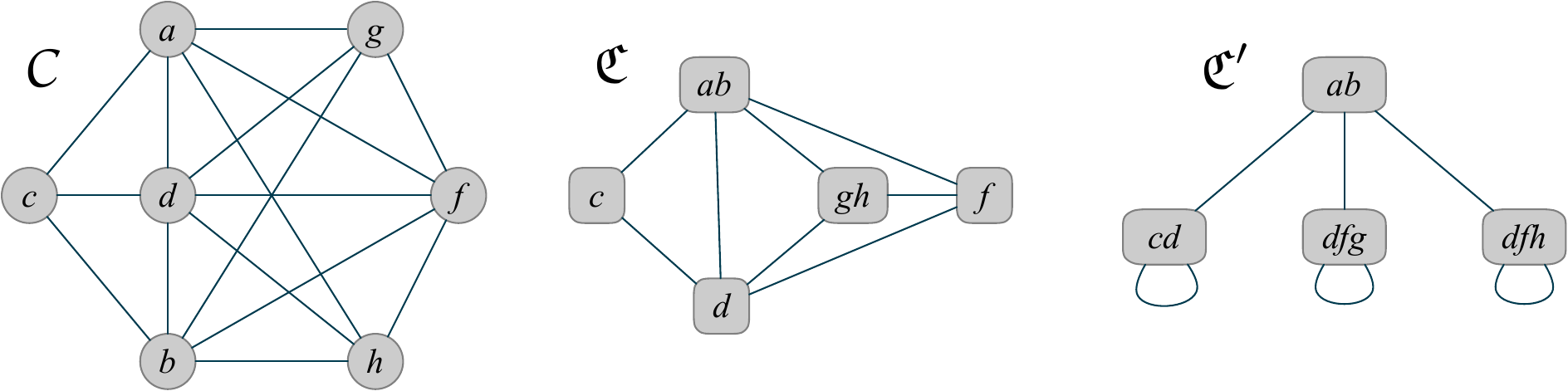}
	\caption{%
	An example of a conflict graph $\ConflictGraph$, its (standard) type graph $\TypeGraph$,
	and one generalized type graph $\TypeGraph'$ for $\ConflictGraph$.
	The types are labeled by the vertices of $\ConflictGraph$ mapped to them.
	In $\TypeGraph'$, vertices may be mapped to multiple types.
	The example shows that in a generalized type graph the number of types can be smaller.
	}
	\label{fig: generalized type graph}
	\end{center}
	\end{figure}
	%


	\myparagraph{Generalized Type Graph.}
	For a conflict graph $\ConflictGraph$, a \emph{generalized type graph} 
	(or simply a type graph) $\TypeGraph = (\TwinClasses,\TwinEdges,\typeof)$
	is an undirected graph (with self-loops allowed) on \emph{types} $\TwinClasses$ connected by edges $\TwinEdges$,
	together with a map $\typeof\colon \Terminals \to \indsets(\TypeGraph)-\braced{\emptyset}$, where $\indsets(\TypeGraph)$ denotes the family of 
	independent sets in $\TypeGraph$ (allowing self-loops).
	In order to be a valid type graph for $\ConflictGraph$, $\TypeGraph$ must satisfy the following condition:
	$tt' \in\ConflictEdges$ if and only if there exists some edge $\eta \eta' \in \TwinEdges$ 
	with $\eta \in \typeof(t)$ and $\eta' \in \typeof(t')$.
	(If $\eta = \eta'$ then $\eta \eta'$ is a self-loop.)
	As in Section~\ref{sec: neighborhood diversity algorithms}, the types $\CTwinClasses\subseteq\TwinClasses$ with self-loops are called c-types
	 and the remaining types $\ITwinClasses=\TwinClasses-\CTwinClasses$ are called i-types.
	For simplicity we also impose $|\TwinClasses|\le|\Terminals|$.\footnote{
	If we were only interested in minimizing the number of types, then trivially we can make this assumption without loss of generality.
	It's not a priori clear that we can make this assumption for minimizing splits, but we'll put this aside.
	}
	
	If $\typeof$ maps terminals to singletons, this is equivalent to the standard type graph used to define neighborhood diversity.
	The crucial difference is that terminals can now be assigned \emph{multiple} types.
	Intuitively, the conflict graphs induced by a generalized type graph with $k$ types can be constructed in the following way:
	select $k$ (possibly overlapping) subsets $\Terminals_1 ,..., \Terminals_k$ of $\Terminals$, 
	make some $\Terminals_i$ cliques, and add a biclique between any \emph{disjoint} pair $\Terminals_i,\Terminals_j$.
	See Figure~\ref{fig: generalized type graph} for an example.
	
	We make a few remarks before moving on. For one, conflict graphs may be represented by distinct generalized type graphs, even if they use a minimal number of types. 
	Second, the terminals $\braced{t \in \Terminals \mid \eta \in \typeof(t)}$ associated with an i-type $\eta$ may not form an independent set, as pairs of some terminals
	may be assigned the same c-type or two types adjacent in $\TypeGraph$.
	Third, we observe that $\typeof(t)=\typeof(t')$ implies that $t$ and $t'$ are twins, which yields the inequality $\nd(\ConflictGraph)\le 2^{|\TwinClasses|}$; in particular,
	a type graph with a constant number of types induces a conflict graph with constant neighborhood diversity.
	

	\myparagraph{States and splits.}
	We now discuss states and splits for a conflict graph $\ConflictGraph$ induced by some generalized type graph $\TypeGraph$.
	Letting $\typeof(\Terminals')=\bigcup_{t\in\Terminals'} \typeof(t) \subseteq \TwinClasses$ for $\Terminals'\subseteq\Terminals$,
	we observe that, just like in Section~\ref{sec: neighborhood diversity algorithms}, $\typeof$ maps independent sets in $\ConflictGraph$ to independent sets in $\TypeGraph$ (perhaps with self-loops).
	Like before, \emph{states} $s\in\indsets(\TypeGraph)$ refer to these independent sets, including the empty set.
	
	The definition of splits is the same as in Section~\ref{sec: neighborhood diversity algorithms}.
	For $s\in\indsets(\TypeGraph)$ we define the splits $\splits(s) = \braced{(a,b) \in \indsets(\TypeGraph)^2 \mid a\cup b = s \text{ and } \ctwin{a} \cap \ctwin{b} = \emptyset}$,
	where $\ctwin{s}$ refers to c-types which $s$ contains.
	
	We illustrate the definitions with two examples.
	If $\ConflictGraph_1$ is a union of $q$ cliques, then it is induced by a type graph $\TypeGraph_1=(\TwinClasses_1, \TwinEdges_1, \typeof_1)$, where $\TwinClasses_1=[q]$, 
	$\TwinEdges_1$ is the set of all self-loops, and $\typeof_1(t)$ indicates which cliques $t$ belongs to.
	We have $\indsets(\TypeGraph_1)=\subsets([q])$ (every subset of types is independent)
	and $\splits_1(s)=\braced{(a,s-a) \mid a \subseteq s}$ (every bipartition is a split of $s$).
	
	Letting $\ConflictGraph_2$ denote a complete $c$-partite graph (that is not a clique), the standard type graph $\TypeGraph_2=(\TwinClasses_2,\TwinEdges_2, \typeof_2)$
	using neighborhood diversity is a complete graph on $c$ types (say $\TwinClasses_2=[c]$) with no self-loops, 
	where $\braced{\typeof^{-1}_2(\eta)}_{\eta\in [c]}$ is simply the $c$ different parts.
	The states $\indsets(\TypeGraph_2)=\braced{\emptyset} \cup \braced{\braced{\eta}\mid \eta \in [c]}$ are exactly the empty set and singleton sets,
	and the splits of $\braced{\eta}$ are $\splits_2(\braced{\eta})=\braced{(\emptyset, \braced{\eta}), (\braced{\eta}, \emptyset), (\braced{\eta},\braced{\eta})}$; 
	the last split corresponds to dividing a non-empty subset of a part into two non-empty subsets.
	(If $\ConflictGraph_2$ is a clique, then the standard type graph is a single node with a self-loop.)
	
	By inspection, the splits and the map $\typeof$ still satisfy properties (A1)-(A3) from Section~\ref{sec: neighborhood diversity algorithms}.

	The runtime will be dominated by the total number of splits, denoted by $\spcomp(\TypeGraph)=\sum_{s\in\indsets(\TypeGraph)} |\splits(s)|$ like in Section~\ref{sec: neighborhood diversity algorithms}.
	The same reasoning as in Section~\ref{sec: neighborhood diversity algorithms} implies $\spcomp(\TypeGraph) \le |\indsets(\TypeGraph)|^2 \le 4^{|\TwinClasses|}$. 
	For the two examples discussed earlier,
	$|\indsets(\TypeGraph_1)|=2^q$ and $\spcomp(\TypeGraph_1) = \sum_{X \subseteq [q]} 2^{|X|} = 3^q$, while
	$|\indsets(\TypeGraph_2)|=c+1$ and $\spcomp(\TypeGraph_2)=3c+1$.


	\myparagraph{Algorithm for $\MulticutSumTrees$.}
	We now prove Theorem~\ref{thm: generalized neighborhood diversity result} for $\MulticutSumTrees$.
	Using the type graphs $\TypeGraph_1$ and $\TypeGraph_2$ from earlier, Corollary~\ref{cor: neighborhood diversity + clique cover + multipartite result} follows for the min-sum case.
	
	Given $\TypeGraph=(\TwinClasses,\TwinEdges,\typeof)$, we can enumerate $\indsets(\TypeGraph)$ in time $O(|\indsets(\TypeGraph)||\TwinClasses|  ) = O(|\indsets(\TypeGraph)|n)$
	\cite{kurita_etal_constant_amortized_time_enumeration_for_independent_sets_in_graphs_with_bounded_clique_number_2021}. 
	For each state $s\in\indsets(\TypeGraph)$ we may compute $\splits(s)$ in $O(|\splits(s)|)$ time by enumerating all partitions of $\ctwin{s}$ and all pairs of subsets of $\itwin{s}$.
	
	We now just need to modify Algorithm~\ref{alg: ind set warm up} as follows: $\states$ is now $\indsets(\TypeGraph)$, 
	every instance of $\braced{t}$ in Line~\ref{line: warm up base case} is replaced by $\typeof(t)$,
	and the min-sum convolution formula in Line~\ref{line: min sum convolution warm up non empty} is replaced by $\min_{(a,b)\in\splits(s)}\braced{D_u^a+D_w^b}$.
	We now just need to reinterpret the meaning of $D_v^s$ and $D_v^*$. 
	We say a $\treeplus_v$-cut $\multicutM$ is in state $s\in\states$ if the top component satisfies $\typeof(\multicutM(\pi_v)\cap \Terminals)=s$.
	Then $D_v^s$ is the minimum sum-cost over all $\treeplus_v$-cuts in state $s$, and by property~(A1) a $\treeplus_v$-cut is in state $\emptyset$ if and only if it is improper,
	so $D_v^*$ is still the minimum sum-cost over all proper $\treeplus_v$-cuts.
	
	The correctness of $D_t^s$ for leaves $t \in\Terminals$ in Line~4 is easy to verify.
	The formula in Line~\ref{line: min sum convolution warm up empty} has not changed, so we only need to justify the new formula for Line~\ref{line: min sum convolution warm up non empty}.
	Let $\multicutM$ be a $\treeplus_v$-cut in state $s\ne\emptyset$ with sum-cost $D_v^s$, and for $c\in ch(v)=\braced{u,w}$, 
	let $\multicutM_c$ be the induced $\treeplus_c$-cut by restricting $\multicutM$ to $\treeplus_c$.
	Since $\multicutM$ is proper, $v\pi_v\notin\boundary \multicutM$ and $\multicutM_c(\pi_c)= \multicutM_(\pi_v) \cap \treeplus_c$.
	Letting $\Terminals_c = \multicutM_c(\pi_c) \cap \Terminals$, we clearly have $\boundary\multicutM=\boundary\multicutM_u \sqcup \boundary\multicutM_w$ 
	and $\multicutM(\pi_v) \cap \Terminals = \Terminals_u \sqcup \Terminals_w$.
	Furthermore, $\multicutM_u$ and $\multicutM_w$ are in states $a = \typeof(\Terminals_u)$ and $b=\typeof(\Terminals_w)$ respectively, 
	and by property~(A2) we must have $(a,b)\in\splits(s)$, and so we have $D_v^s\ge D_u^a+D_w^b$.
	Likewise, if we were given a $\treeplus_u$-cut and $\treeplus_w$-cut with terminals $\Terminals_u$ and $\Terminals_w$ in their respective top-components, 
	and $(\typeof(\Terminals_u), \typeof(\Terminals_w))\in \splits(s)$,
	then by property~(A3) $\Terminals_u \cup \Terminals_w$ is independent and $\typeof(\Terminals_u \cup \Terminals_w)=s$, 
	and so merging the two cuts yields a $\treeplus_v$-cut in state $s$.

	
	\myparagraph{Algorithm for $\ConMulticutMaxTrees$.}
	We now prove Theorem~\ref{thm: generalized neighborhood diversity result} for $\ConMulticutMaxTrees$.
	Using the type graphs $\TypeGraph_1$ and $\TypeGraph_2$, Corollary~\ref{cor: neighborhood diversity + clique cover + multipartite result} for $\ConMulticutMaxTrees$ follows.
	We assume that our tree $T$ is in binary normal form with root $r$, and we add a dummy parent $\pi_r$ for $r$ with $\weight(r \pi_r)=0$.
	We use the same notion of $\treeplus_v$-cuts and their boundary convention as in the previous sections. 
	
	Let $\TypeGraph=(\TwinClasses,\TwinEdges,\typeof)$ be a generalized type graph for $\ConflictGraph$ given on input and let $\states=\indsets(\TypeGraph)$.
	As explained before, in $O(\spcomp(\TypeGraph)n)$ time we can enumerate both $\states$ and the corresponding splits.
	
	\paragraph*{Algorithm For Integer Weights.}
	We now describe the main dynamic sub-routine, which in time $O(n \sum_{s\in \states} |\splits(s)|)$ determines whether $\maxopt(T,\ConflictGraph)\le L$ for a given real $L\in\posreals$.
	This automatically implies Theorem~\ref{thm: generalized neighborhood diversity result}(b) for integer weights.
	For now we treat $L$ as fixed.
	
	As in Section~\ref{sec: neighborhood diversity algorithms}, we say a $\treeplus_v$-cut $\multicutM$ is in state $s\in\states$ if the top component satisfies $\typeof(\multicutM(\pi_v) \cap \Terminals)=s$.
	Furthermore, $\multicutM$ is $L$-feasible if all non-top boundaries have weight at most $L$.
	For each $s\in\states$ and $v\in V$, we define the table entry $\dptable_v^s (L)$ to be the minimum, over all $L$-feasible $\treeplus_v$-cuts $\multicutM$ in state $s$,
	of the boundary weight $\weight(\boundary \multicutM(\pi_v))$ of its top component in $\treeplus_v$. (When no such cut exists, set $\dptable_v^s(L)=\infty$.)
	We also create a table entry $\dptable_v^* (L)=\min_{s\ne \emptyset} \dptable_v^s (L)$, denoting the minimum top-boundary weight over all $L$-feasible proper $\treeplus_v$-cuts.
	(By Property~(A1), $\multicutM$ is in state $\emptyset$ if and only if it is improper.)
	From the definitions, $\maxopt(T,\ConflictGraph)\le L$ if and only if $\dptable_r^*(L) \le L$.
	
	For a leaf $t\in \Terminals$, we set $\dptable_t^{\typeof(t)} (L)=0$,
	set $\dptable_t^{\emptyset} (L)=\weight(t\pi_t)$ if $\weight(t\pi_t) \le L$ and $\infty$ otherwise, and set $\dptable_t^s(L)=\infty$ for $s\notin\braced{\typeof(t),\emptyset}$.
	We now describe the recurrence for an internal vertex $v\notin \Terminals$ with children $u,w$.
	\begin{align}
		\dptable_v^{s} (L) &= \min_{(a,b) \in \splits(s)} \braced{\dptable_u^a (L) + \dptable_w^b (L)} \quad\qquad \text{for } s \ne \emptyset \label{eq: nd minmax nonempty state} \\ 
		\dptable_v^* (L) &= \min_{s \ne \emptyset} \dptable_v^s (L)   \label{eq: nd minmax proper cuts}\\ 
		\dptable_v^{\emptyset}(L) &=
		\begin{cases}
			\min\braced{\dptable_u^{\emptyset}(L)+\dptable_w^{\emptyset} (L), \weight(v\pi_v) }	&\quad \text{if } \weight(v\pi_v) + D_v^* (L) \le L \\
			\dptable_u^{\emptyset}(L)+\dptable_w^{\emptyset} (L)							&\quad \text{else}
		\end{cases} \label{eq: nd minmax empty state}
	\end{align}
	The runtime is dominated by the evaluation of Eq.~(\ref{eq: nd minmax nonempty state}), which takes $O(\sum_{s\in \states} |\splits(s)|)$ time per vertex.
	
	The formula for the leaves holds from the definitions and inspection.
	For an internal vertex $v \notin \Terminals$, observe that $\dptable_v^\emptyset (L)$ and $\dptable_v^* (L)$
	have the same semantics as the table entries $\dptable_v^0 (L)$ and $\dptable_v^1 (L)$ from Section~\ref{sec: polynomial time algorithm multiway minmax};
	in particular $\dptable_v^\emptyset$ optimizes over improper cuts, while $\dptable_v^*$ optimizes over proper cuts.
	Indeed, Eq.~(\ref{eq: nd minmax empty state}) matches Eq.~(\ref{eqn: multiway min-max recurrence improper}), 
	so it suffices to show that Eq.~(\ref{eq: nd minmax nonempty state}) is correct.
	
	That Eq.~(\ref{eq: nd minmax nonempty state}) holds follows from a similar argument as in Section~\ref{sec: polynomial time algorithm multiway minmax}.
	Recall that we have a correspondence between $\treeplus_v$-cuts $\multicutQ_v$ (whose top component contains $v$, which is necessary if $\multicutQ_v$ is proper) 
	and pairs of a $\treeplus_u$-cut $\multicutQ_u$ and $\treeplus_w$-cut $\multicutQ_w$.
	Specifically, given $\multicutQ_u$ and $\multicutQ_w$ which are $L$-feasible, we can merge them to obtain $\multicutQ_v$ 
	which is $L$-feasible and satisfies $\multicutQ_v (\pi_v) = \braced{\pi_v} \cup \multicutQ_u(\pi_u) \cup \multicutQ_w(\pi_w)$.
	(We do not know a priori that $\multicutQ_v (\pi_v)$ doesn't contain a conflict, but we can still interpret $\multicutQ_v$ as a partition.)
	This implies $\cutof \multicutQ_v (\pi_v) = \cutof \multicutQ_u (\pi_u) \sqcup \cutof \multicutQ_w (\pi_w)$.
	Likewise, given an $L$-feasible $\multicutQ_v$ (whose top component contains $v$, which is necessary if $\multicutQ_v$ is proper), 
	we can obtain an $L$-feasible pair $\multicutQ_u$ and $\multicutQ_w$ satisfying the same relations.
	
	Importantly, in both directions the three cuts satisfy the relation $\Terminals_v = \Terminals_u \sqcup \Terminals_w$, 
	where $\Terminals_x = \multicutQ_x(\pi_x) \cap \Terminals$ denotes the terminals in the respective top components.
	This implies that if we started with $\multicutQ_v$ and $\typeof(\Terminals_v)=s\ne\emptyset$ ($\multicutQ_v$ is in state $s$), 
	then $(\typeof(\Terminals_u), \typeof(\Terminals_w))\in \splits(s)$ is a split of $s$ by Property~(A2).
	Conversely, if we started with $\multicutQ_u$ and $\multicutQ_w$ and $(\typeof(\Terminals_u), \typeof(\Terminals_w))\in \splits(s)$, 
	then, by Property~(A3), $\Terminals_v$ is guaranteed to be independent and satisfy $\typeof(\Terminals_v)=s$,
	thus $\multicutQ_v$ as the result of merging is in fact a $\treeplus_v$-cut in state $s$. 
	This correspondence then implies the correctness of Eq.~(\ref{eq: nd minmax nonempty state}), as desired.
	
	
	\paragraph*{Extension to Real Weights.}
	As in Section~\ref{sec: polynomial time algorithm multiway minmax}, for each $v\in V$ we can define the function $H_v$ and real number $L_v$,
	where $H_v(L)$ equals $\weight(v\pi_v)$ if $\weight(v\pi_v) + D_v^* (L) \le L$ and is $\infty$ otherwise,
	and $L_v\coloneqq \min\braced{L \suchthat \weight(v\pi_v)+D_v^*(L) \le L}$.
	Since $D_v^*(L)$ has the same semantics as the table entry $D_v^1(L)$ from Section~\ref{sec: polynomial time algorithm multiway minmax}, $H_v$ and $L_v$ have the same interpretation as before.
	In particular, $H_v$ is a right-continuous step function with at most one point of discontinuity at $L_v$.
	And by induction, $D_v^s$ (for $s\ne \emptyset$) and $D_v^*$, as functions of $L$, are non-increasing right-continuous functions whose discontinuities are a subset of $\jumps_v^1=\braced{L_u}_{u \in T_v-\braced{v}}$,
	while $D_v^\emptyset$ has the same form with discontinuities in $\jumps_v^0 = \jumps_v^1 \cup \braced{L_v}$.
	
	This implies that we can compute a list representation for each dynamic programming table entry, using $O(n)$ space per function and $O(n)$ time for computing component-wise sums and minimums.
	In the same way as in Section~\ref{sec: polynomial time algorithm multiway minmax}, the discontinuities $L_v$ are computed dynamically from the list representation of $D_v^*$.
	This completes the proof of Theorem~\ref{thm: generalized neighborhood diversity result}(b) for real weights.


\bigskip

\myparagraph{Information on the use of generative AI.}
The ideas of using the local cuts and the linear program~(\ref{eq: lp constraint}) in Section~\ref{sec: 2-approximation minmax}, 
and of working with 
the entire functions $D_v^\sigma$ in Section~\ref{sec: polynomial time algorithm multiway minmax}, 
were suggested by the LLM-based
service Bolzano. We thank Vašek Rozhoň for providing us with access to the service.
In addition, GPT-6 suggested the half-integrality proof in Appendix~\ref{sec: primal dual algorithm}.


\newpage
\bibliographystyle{abbrv}
\bibliography{connected_multicuts_references}


\vfill
\newpage
\appendix


\newpage
\section{Faster 2-Approximation for Unweighted Trees}
\label{sec: faster unweighted approximation}


In Section~\ref{sec: 2-approximation minmax} we gave a polynomial-time $2$-approximation algorithm for the
unweighted version of $\ConMulticutMaxTrees$. That algorithm was rather inefficient. In this section we
show an alternative and much faster algorithm.


\begin{theorem}\label{thm: faster 2-approximation for unweighted min-max}
There is a $2$-approximation algorithm for unweighted \normalfont{$\ConMulticutMaxTrees$}
with running time $O(|\ConflictEdges| + n\log n)$.
\end{theorem}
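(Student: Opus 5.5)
We will use the bottom-up greedy scheme mentioned in the ``Intuitions'' paragraph of Section~\ref{sec: 2-approximation minmax}: in unweighted trees every edge has weight $1$, so cutting a child edge is never ``too expensive''. We first apply Lemma~\ref{lemma: normal form for trees}(a) in $O(n+|\ConflictEdges|)$ time to get an unweighted tree in normal form rooted at $r$. Then we run the decision version for a threshold $L$: either output a multicut of max-cost at most $2L$, or certify $\maxopt>L$. Since $L\le n$ is an integer, binary search costs only an $O(\log n)$ factor. To reach the stated $O(|\ConflictEdges|+n\log n)$ bound, we will make each test run in time $O(n)$ after an $O(|\ConflictEdges|)$ precomputation. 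An alternative is a single bottom-up pass that computes, for each vertex, the smallest feasible cost, which would avoid binary search entirely.

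The per-vertex step processes $v$ after its children. Let $A_v$ be the children $u$ whose current top component $X_u$ (containing $u$, possibly terminal-free) is conflict-free together with $v$. Let $B_v$ be the remaining children. Every edge $u v$ with $u\in B_v$ is cut. Among the children in $A_v$, the conflicts between different $X_u$'s define a graph $H_v$ on $A_v$, exactly as in Observation~\ref{obs: reduction from unweighted local cut to vertex cover}. We take a maximal matching in $H_v$ and cut both endpoints of every matched edge. This is the standard $2$-approximation for vertex cover, so at most $2\vcover(H_v)$ edges are cut here. The uncut children are merged into $X_v$. Each cut child $u$ becomes a finished component whose boundary is $\cutof X_u$ plus the edge $u v$. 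If some child component would exceed the budget $2L$, or $X_v$ would, we reject.

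For the approximation analysis we compare against an optimal multicut $\multicutM^*$ of cost at most $L$. The plan is an exchange argument, carried out by induction bottom-up. The inductive claim compares the number of edges below $v$ that our algorithm cuts and that touch the top component $X_v$ with the corresponding quantity for $\multicutM^*(v)\cap T_v$, namely $\cutof_{T_v}(\multicutM^*(v)\cap T_v)$. We want ours to be at most $2\,|\cutof_{T_v}(\multicutM^*(v)\cap T_v)|$. We also want each finished component to have boundary at most $2L$.

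The key step is an invariant: our top component $X_v$ contains no more terminals than $\multicutM^*(v)\cap T_v$, in the sense that its terminal set is ``less conflicting''. Cutting greedily as high as possible makes $X_v$ as small as $\multicutM^*$ allows, so any conflict we must resolve at $v$ is also resolved by $\multicutM^*$ somewhere in the relevant subtrees. Each such resolution costs $\multicutM^*$ at least one boundary edge of the component containing $v$ or of the child's component. The $|B_v|$ forced cuts and the matching cuts are then charged to distinct boundary edges of $\multicutM^*$-components, with factor $2$ coming only from the matching. \textbf{This charging is the main obstacle.} The difficulty is that the optimum may cut deep inside $T_u$ rather than at $u v$. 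We would first try to show that replacing $\multicutM^*$'s deep cut in $T_u$ by the single edge $u v$ never increases any component's boundary in the unweighted setting. This uses normal form: every leaf is a terminal, so the detached part forms a proper component, and its boundary does not grow, because it loses the deep edges and gains only $uv$.

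For the running time, maintaining the terminal sets of the top components explicitly is too costly. Instead we maintain conflict counts. For each terminal $t$ we track which current top component contains it, using a union-find structure or small-to-large merging, and we scan each conflict edge $st$ only when $s$ and $t$ first meet at their LCA. Computing LCAs offline for all conflict edges takes $O(|\ConflictEdges|+n)$ time. Then the graph $H_v$ is read off from the conflict edges whose LCA is $v$, and $B_v$ from the conflicts with $v$ itself. Building the maximal matching greedily is linear in the size of $H_v$. Summed over all $v$ this gives $O(n+|\ConflictEdges|)$ per test, or the claimed bound overall once the sorting and merging costs, which account for the $n\log n$ term, are included.
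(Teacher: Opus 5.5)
Your procedure is not a $2$-approximation, so the charging you flag as the main obstacle cannot be completed. The flaw is that you merge every uncut child's top component $X_u$ into $X_v$ on the basis of conflicts alone. The local boundaries of the children's top components then add up at $v$, whereas an optimum may cut the cheap child edges and keep only a terminal-free piece of $T_u$.

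Here is a concrete instance. Let the root $v$ be a terminal with one child $u$, and let $u$ be a non-terminal with three non-terminal children $a,b,c$. Give each of $a,b,c$ exactly $p\ge 3$ terminal leaves. The leaves below each of $a,b,c$ pairwise conflict, $v$ conflicts with every leaf, and there are no other conflicts. The multicut with components $\{v,u\}$, $\{a,a_1\}$, $\{b,b_1\}$, $\{c,c_1\}$ and the remaining leaves as singletons has max-cost $\max\{3,p\}=p$.

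Run your test with $L=p$:
\begin{itemize}
\item $H_a$ is a clique on $p$ vertices, so $X_a$ (and likewise $X_b$, $X_c$) has local boundary at least $p-1$.
\item $H_u$ has no edges, so $X_u$ absorbs all three and has local boundary at least $3(p-1)$.
\item At $v$, either $X_u$ contains a leaf, which conflicts with $v$, so $uv$ is cut and the finished component has boundary at least $3p-2>2p$; or $X_u$ is terminal-free and $X_v$ gets boundary $3p>2p$.
\end{itemize}
So you reject although $\maxopt(T,\ConflictGraph)\le L$, and an exact vertex cover instead of a matching does not help.

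The same instance refutes your invariant: your $X_u$ holds three terminals, while the optimum's component of $v$ holds none from $T_u$. It also refutes your exchange step. The optimum's piece $\{v,u\}\cap T_u=\{u\}$ is terminal-free, so moving its cuts up to $uv$ yields no proper component; normal form only says that leaves are terminals.

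Two smaller defects point the same way. First, cutting both endpoints of a maximal matching can leave $X_v$ without terminals, e.g.\ at a non-terminal root with two conflicting leaf children, which gives an invalid multicut. Second, a factor-$2$ invariant is one unit too weak: a top component of cost $2\lambda$ becomes $2\lambda+1$ once your algorithm cuts a parent edge that the optimum keeps.

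What is missing is the paper's cost-aware treatment of children.
\begin{itemize}
\item Define $A_v$ via whole subtrees ($T_u\cup\{v\}$ conflict-free). Merge those subtrees except a vertex cover $S_v$ of $H_v$ with $|S_v|\le\min\{2\vcover(H_v),|A_v|\}-1$.
\item For every vertex, maintain both a proper candidate $\Xplus_u$ and an improper one $\Xminus_u$.
\item A child $u\notin A_v$ is cut off (set $\Bplus_v$) exactly when $|\cutof\Xplus_u|\le 2k-1$. Otherwise (set $\Bminus_v$) $u$ joins $v$'s component through $\Xminus_u$. This is forced, because any $k$-feasible $T_v$-cut of local cost at most $k$ must then contain $u$ but no terminal of $T_u$.
\item When $v$ is a non-terminal with $A_v=\emptyset$, only the cheapest proper candidate in $\Bplus_v$ is merged and the others are cut. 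This is exactly what prevents the accumulation above.
\end{itemize}
The analysis then proves $|\cutof\Xplus_v|\le\posvalue{2\lambdaplus_v-1}$ and $|\cutof\Xminus_v|\le\posvalue{2\lambdaminus_v-1}$ against lower bounds such as $\lambdaplus_v\ge\vcover(H_v)+|\Bplus_v|+\sum_{u\in\Bminus_v}\lambdaminus_u$. The $-1$ is what keeps every finished component within $2k$.

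This also fixes your running time. Your $H_v$ depends on $L$ through the current top components, so it cannot be precomputed; each test rescans $\ConflictEdges$, and binary search yields $O((n+|\ConflictEdges|)\log n)$ rather than $O(|\ConflictEdges|+n\log n)$. In the paper, $H_v$ and $S_v$ do not depend on $k$, are computed once in $O(n+|\ConflictEdges|)$ time, and each test then runs in $O(n)$.
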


From now on, we fix an instance of $\ConMulticutMaxTrees$ consisting of
a normal-form tree $T=(V,E,\weight)$ (where $\weight(e)=1$ for all $e\in E$)
with root $r$ and a conflict graph $\ConflictGraph =(\Terminals, \ConflictEdges)$.
To prove Theorem~\ref{thm: faster 2-approximation for unweighted min-max}, it is sufficient to prove the
lemma below, as the algorithm from this lemma can be combined with binary search for $k$ to compute the $2$-approximate solution.


\begin{lemma}\label{lem: faster min-max 2-approximation key lemma}
Let $(T, \ConflictGraph)$ be an unweighted \normalfont{$\ConMulticutMaxTrees$} instance. 
After $O(n+|\ConflictEdges|)$ pre-processing, there is an $O(n)$ algorithm that, given $k\in \naturals$, either returns
a multicut of cost at most $2k$ or determines that $\maxopt(T,\ConflictGraph)>k$.
\end{lemma}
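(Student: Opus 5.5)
The plan is to exploit the intuition from Section~\ref{sec: 2-approximation minmax} that in the unweighted case a simple bottom-up greedy suffices. Instead of calling a vertex-cover oracle, we use a maximal-matching style $2$-approximation of the local vertex cover instances from Observation~\ref{obs: reduction from unweighted local cut to vertex cover}. In preprocessing we compute, for every conflict $st\in\ConflictEdges$, the lowest common ancestor $\lca(s,t)$, using an $O(n)$-preprocessing LCA structure, which costs $O(n+|\ConflictEdges|)$ in total. We then bucket the conflicts by their LCA vertex. This bucketing is independent of $k$, so it can be reused across the binary search.

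Given $k$, the algorithm processes $T$ bottom-up. For each vertex $v$ it maintains the \emph{open} part of its current top component, namely the set of terminals still attached to $v$ through uncut edges; equivalently, each vertex carries a mark saying whether its edge to its parent is cut. At an internal vertex $v$ with children $u_1,\dots,u_d$, the procedure is as follows. First, every child whose subtree's top component already has an internal conflict, i.e.\ the sets $B$ of Observation~\ref{obs: reduction from unweighted local cut to vertex cover}, must be separated; in the bottom-up order, however, this never arises, because such conflicts are resolved at their LCA. Second, for each conflict $st$ with $\lca(s,t)=v$ whose endpoints are both still attached to $v$ through children $u_i,u_j$, we cut \emph{both} edges $u_iv$ and $u_jv$. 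This is the greedy maximal-matching step on the graph $H$ of Observation~\ref{obs: reduction from unweighted local cut to vertex cover}. When an edge $u_iv$ is cut, the top component at $u_i$ becomes a finished component, and we check that its boundary has size at most $2k$. If the check fails, we report $\maxopt>k$. Each conflict is inspected once, and each edge is cut at most once; testing whether a terminal is still attached to $v$ can be done with a union-find or an ``attached-to-top'' pointer, giving near $O(n)$ time per value of $k$. To avoid the inverse-Ackermann factor, we can precompute, in the preprocessing, for each conflict endpoint the child of its LCA on the path to it, so that an attachment test reduces to checking one flag per child.

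The key step, and the main obstacle, is the charging argument that the boundary of each produced component is at most $2k$ whenever $\maxopt\le k$. The argument runs as follows. Fix an optimal multicut $\multicutM^*$. Consider a produced component $X$ with top vertex $x$. Its boundary consists of the parent edge of $x$ together with the children edges cut at vertices inside $X$. Each cut pair $u_iv,u_jv$ at a vertex $v\in X$ corresponds to a conflict $st$ whose path through $v$ must be cut by $\multicutM^*$ somewhere. Because the chosen conflicts form a matching on the children of $v$, and distinct vertices $v$ give edge-disjoint subtrees, these paths are disjoint below $v$. Hence $\multicutM^*(v)$'s boundary, or that of components further down, pays at least one edge per matched pair.

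The delicate point is to show that all these charged edges lie in the boundary of a \emph{single} optimal component, namely the one containing $x$. A charged edge need not lie in that boundary, since the optimal solution may cut deeper, inside $T_{u_i}$. To handle this, we use the invariant proved bottom-up, in the style of $(\star)$ in Lemma~\ref{lem: min-max approximation reduction}: every terminal still attached to $v$ in our solution lies, in $\multicutM^*$, in the component of $v$ or in a component whose top vertex is cut from $v$ by an edge we have not yet paid for. Concretely, I would argue that if $\multicutM^*$ cuts the conflict deeper than $u_iv$, then the subtree of our top component at $u_i$ already contained a cut that we performed but could have ``merged''. To make this go through, the greedy step should prefer cutting at the highest possible level only when forced, exactly as in Observation~\ref{obs: reduction from unweighted local cut to vertex cover}. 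The factor $2$ then comes from the matching bound $|\text{matched vertices}|\le 2\,\vcover(H)$, combined with the parent edge being absorbed via the stricter requirement $\le k-1$ for non-root top components. If this direct charging breaks, the fallback is to implement literally the reduction of Lemma~\ref{lem: min-max approximation reduction} with the $2$-approximate vertex cover given by a maximal matching, and to show that the $E^\rho_v$-restricted local cuts can be computed incrementally in amortized $O(1)$ per vertex.
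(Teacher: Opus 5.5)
\paragraph{Gap in the main algorithm.} Your greedy does not work, and it fails exactly at the step you flag as delicate. The only edges it ever cuts are child edges $u_iv$ toward endpoints of conflicts with $\lca(s,t)=v$. So none of its cut decisions depends on $k$, and it never cuts an edge just to stop boundary from piling up along a path.

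Here is a counterexample. Take a spine $y_1,\dots,y_m$ rooted at $y_1$, with $y_{i+1}$ a child of $y_i$. Give each $y_i$ three leaf terminals $a_i,b_i,c_i$, and let $\ConflictEdges=\braced{a_ib_i \suchthat 1\le i\le m}$; this tree is in normal form.
\begin{itemize}
\item Your greedy cuts $y_ia_i$ and $y_ib_i$ for every $i$. This leaves the component $\braced{y_1,\dots,y_m,c_1,\dots,c_m}$ with boundary $2m$.
\item The multicut with components $\braced{a_i}$ and $\braced{y_i,b_i,c_i}$ cuts every $y_ia_i$ and every spine edge. It has max-cost $3$.
\end{itemize}
So for $k=3$ and $m\ge 4$ your output violates the lemma: cost $2m>2k$ although $\maxopt(T,\ConflictGraph)\le k$. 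The ratio is unbounded.

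Your charging breaks at the phrase ``or that of components further down''. The optimum pays for the conflicts below $y_{i+1}$ in components other than the one containing $y_1$, at the price of one spine edge. Hence no single optimal component pays for all the matched pairs in your component. Note also that the spine edge $y_iy_{i+1}$ is a child edge of $y_i$, but no conflict with $\lca$ equal to $y_i$ touches $T_{y_{i+1}}$, so your rule never cuts it.

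Separately, you do not track properness. Your top component need not contain a terminal, e.g.\ a non-terminal root with two conflicting leaf children.

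\paragraph{The missing idea.} The paper uses a $k$-dependent dynamic program that keeps, for every $v$, both a proper and an improper $T_v$-cut $\Dplus_v,\Dminus_v$, with top components $\Xplus_v,\Xminus_v$.
\begin{itemize}
\item Vertex cover is used only on the children $u\in A_v$ for which $T_u\cup\braced{v}$ is conflict-free. The graphs $H_v$ do not depend on $k$, so covers with $|S_v|\le 2\vcover(H_v)-1$ are precomputed once.
\item Each remaining child $u\in B_v$ has $uv$ cut when $|\cutof\Xplus_u|\le 2k-1$; in your example this is what cuts the spine edges.
\item Otherwise $u$ contributes $\Xminus_u$ to $v$'s component. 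This is justified by the claim that any $k$-feasible $T_v$-cut of local cost at most $k$ must put $u$, but no terminal of $T_u$, in $v$'s component.
\item If $v$ is a non-terminal with $A_v=\emptyset$, the terminal is inherited from the cheapest $\Xplus_{\staru}$ with $\staru\in\Bplus_v$.
\end{itemize}
The invariant is $|\cutof\Xplus_v|\le\posvalue{2\lambdaplus_v-1}$ and $|\cutof\Xminus_v|\le 2\lambdaminus_v-1$. The ``$-1$'' is what makes the threshold $2k-1$ consistent with the claim above, which only gives $\lambdaplus_u\le k$. This is also why a plain maximal matching, which yields $2\vcover(H_v)$, is not enough.

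\paragraph{The fallback.} Running Lemma~\ref{lem: min-max approximation reduction} with ``amortized $O(1)$'' local cuts is not yet an argument. An $E^\rho_v$-restricted proper local cut is obtained through $O(n)$ improper-local-cut calls on $k$-dependent contracted instances, which cannot be preprocessed. Making that linear-time essentially amounts to rediscovering the $\Bplus_v/\Bminus_v$ decomposition above.
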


The rest of this section consists of the proof of Lemma~\ref{lem: faster min-max 2-approximation key lemma}.
For $v\in V$, we define a \emph{$T_v$-cut} to be a connected multicut $\multicutQ$ for the restricted instance $(T_v, \ConflictGraph \cap T_v)$, 
with the possible exception that the component containing $v$ (called the ``top component'' of $\multicutQ$)
may not contain a terminal.
$\multicutQ$ is called \emph{proper} if this component has a terminal and \emph{improper} otherwise. 
$\multicutQ$ is called \emph{$k$-feasible} if all non-top components have cost at most $k$.
Furthermore, the \emph{local cost} of $\multicutQ$ is the top component boundary weight (within $T_v$).
Let $\lambdaminus_v$ and $ \lambdaplus_v$ denote the optimum local cost over all $k$-feasible improper and proper $T_v$-cuts respectively.
We follow the convention that if $X$ is a component of a $T_v$-cut, $\cutof X$ always refers to the boundary edges in $T_v$, excluding the edge $v\pi_v$.
(This is similar to $\treeplus_v$-cuts from Section~\ref{sec: polynomial time algorithm multiway minmax}, 
with $\treeplus_v$ replaced by $T_v$.)

The essence of the algorithm is to extend Observation~\ref{obs: reduction from unweighted local cut to vertex cover} to a lower bound on $\lambdaminus_v$ and $\lambdaplus_v$.
Using this, we reduce the problem to solving a set of vertex cover instances, one for each sub-tree $T_v$, then perform a dynamic-programming
bottom-up computation on $T$.
For each vertex $v$, we will define an auxiliary graph $H_v$ whose nodes are some children of $v$, and
we use a $2$-approximate vertex cover for $H_v$ to compute the top component of a multicut for $T_v$.
Crucially, these vertex cover instances do not depend on $k$, and can thus be pre-processed, and their combined size does not exceed $n+|\ConflictEdges|$.


\subsection{Preprocessing}

Here we describe how to construct the vertex cover instances needed for our algorithm and how to solve them
all in $O(n+|\ConflictEdges|)$ time total.

For an internal vertex $v\in V$, let $A_v\subseteq ch(v)$ be the set of children
$u\in ch(v)$ of~$v$ such that the terminals in $T_u\cup\braced{v}$ do not conflict.
An \emph{auxiliary conflict graph} $H_v$ is $H_v = (A_v,F_v)$
where, for vertices $u,u' \in A_v$, $uu'\in F_v$ if and only if there exist terminals $t\in V(T_u)$
and $t'\in V(T_{u'})$ such that $t t'\in \ConflictEdges$. 

For vertices $u,v \in V$, let $\lca(u,v)$ denote their least common ancestor in $T$ (which may be $u$ or $v$).
For a vertex $v$ and descendant $u\in T_v-\braced{v}$, let $\child_u(v)$ be the unique child $w\in \children(v)$ such that $u \in T_w$.
Trivially, $\sum_{v \in V} |A_v| \le n$.
Furthermore, we claim that $\sum_{v\in V} |F_v| \le |\ConflictEdges|$.
This inequality follows by mapping each $tt' \in \ConflictEdges$ with $\lca(t,t')\notin\braced{t,t'}$
to $uu' \in F_v$, where $v=\lca(t,t')$, $u = \child_t(v)$ and $u' = \child_{t'}(v)$.

The pseudo-code below shows how to compute every $H_v$ in $O(n+|\ConflictEdges|)$ time.
Correctness follows directly from the definition of $H_v$ and the correspondence described above.
Since $\lca(u,v)$ and $\child_u(v)$ can be computed in constant time with $O(n)$ pre-processing using standard data structures, 
the computation of all graphs $H_v$ will take time $O(n+|\ConflictEdges|)$.


\begin{center}
\framebox{
\begin{minipage}{3in}
\begin{tabbing}
aa \= aa \= aa \= aa \= aa \= aa \= \kill
\textbf{Algorithm:} Computing graphs $H_v$ and vertex covers $S_v$\\
\> \myforeach $v\in V$ \mydo $a_v \gets \textfalse$, $A_v \gets \emptyset$, $F_v \gets \emptyset$ \\
\> \myforeach $tt' \in \ConflictEdges$ \mydo\\
\>\> 	$v \gets \lca(t, t')$ and $a_v \gets \texttrue$ \\
\>\> 	\myif $t = v$ \mythen $a_{u'}\gets \texttrue$ for $u' = \child_{t'}(v)$ \\
\>\> 	\myif $t' = v$ \mythen $a_{u}\gets \texttrue$ for $u = \child_t(v)$ \\
\> \myforeach $v\in V$ (processing bottom-up) \mydo\\ 
\>\> 	$a_v \gets a_v \lor \bigvee_{u \in \children(v)} a_u$ \\
\>\> 	\myif {$v\neq r$} and $a_v = \textfalse$ \mythen add $v$ to $A_{\pi_v}$ \\
\> \myforeach $tt' \in \ConflictEdges$ \mydo\\ 
\>\> 	$v \gets \lca(t, t')$ \\
\>\> 	\myif $v\notin \braced{t,t'}$ \mythen \\
\>\>\> 			$u\gets \child_t(v)$ and $u' \gets \child_{t'}(v)$ \\
\>\>\> 			\myif $u,u'\in A_v$ \mythen  add $uu'$ to $F_v$ \\
\> \myforeach internal $v\in V$ \mydo\\
\>\> $H_v \gets (A_v,F_v)$\\
\>\>\myif $F_v \neq \emptyset$ \mythen\\
\>\>\>\> $S_v \gets$ vertex cover of $H_v$ with $|S_v|\le \min\braced{ 2\cdot\vcover(H_v), |A_v|} - 1$ \\
\>\>\myelse $S_v = \emptyset$
\end{tabbing}
\end{minipage}
}
\end{center}

To compute $S_v$ in the last step, we sligtly modify the standard matching-based
$2$-approximation for $\VertexCover$:
First choose any edge $(t',t'')\in F_v$. (We know that $|F_v| \neq \emptyset$.)
Let $H'$ and $H''$ be the graphs obtained from $H_v$ by removing $t'$ and $t''$, respectively.
Compute $2$-approximate solutions $S'_v$ and $S''_v$ for $\VertexCover$ on $H'_v$ and $H''_v$.
If $|S'_v|\le |S''_v|$ then let $S_v = S'_v\cup \braced{t'}$, else let  $S_v = S''_v\cup \braced{t''}$.
Since one of $t',t''$ must be in the optimal vertex cover of $H_v$, we obtain that 
$|S_v|\le 2\cdot\vcover(H_v)-1$.
If it so happens that $S_v = A_v$ (and $A_v\neq \emptyset$), 
we can simply remove any vertex from $S_v$ so that we also have $|S_v|\le |A_v|-1$.
The running time for computing each $S_v$ is $O(|A_v| + |F_v|)$, so
the total cost of computing all sets $S_v$ amortizes to $O(n+|\ConflictEdges|)$.


\subsection{Main Algorithm}

Let $k\in\naturals$ be the given integer. If $k=0$, the algorithm checks if $\ConflictEdges =\emptyset$,
and if that is the case, it returns the optimal multicut consisting of just one component $\braced{T}$,
and otherwise it reports that  $\maxopt(T,\ConflictGraph)>0$.

So we assume now that $k\ge 1$. Our algorithm proceeds bottom-up, computing table entries $\Dplus_v$ and $\Dminus_v$,
for all $v\in V$, that contain a proper $T_v$-cut and an improper $T_v$-cut, respectively.
We will also keep track of the respective top components that we denote by $\Xplus_v$ and $\Xminus_v$.
Similar to Sections~\ref{sec: polynomial time algorithm multiway minmax}~and~\ref{sec: neighborhood diversity algorithms},
$\cutof \Xplus_v$, $\cutof \Xminus_v$, $\cutof \Dplus_v$, and $\cutof \Dminus_v$ are by convention restricted to $T_v$.
So $|\cutof \Xplus_v|$ and  $|\cutof \Xminus_v|$ are the local costs of $\Dplus_v$ and $\Dminus_v$.
(It's possible that the solution for $v$ will be undefined, which we represent by $\nullsol$. By default,
the corresponding local cost is then $\infty$.)

Computing solutions for the leaves is trivial:
if $t$ is a leaf (and thus $t\in\Terminals$), let $\Dplus_t=\braced{\Xplus_t}$ for $\Xplus_t = \braced{t}$, and $\Dminus_t = \Xminus_t = \nullsol$.
So assume now that $v$ is an internal vertex such that all vertices in $T_v\setminus\braced{v}$
have already been processed. Let $B_v=ch(v)\setminus A_v$. We partition $B_v$ into two subsets $\Bplus_v$ and $\Bminus_v$:
\begin{align*}
\Bplus_v \;&=\; \braced{ u \in B_v \suchthat |\cutof \Xplus_u| \le 2k-1}
\\
\Bminus_v \;&=\; B_v\setminus \Bplus_v 
\end{align*}
(Note that this partitioning depends on the algorithm.)

\myitparagraph{Computing $\Dplus_v$ and  $\Xplus_v$:} We distinguish three cases. 
\begin{description}
\item{(P1)} If $v\in \Terminals$ or $A_v \ne \emptyset$: Set
\begin{align*}
	\Xplus_v \;&=\; \braced{v} 
					\cup \bigcup_{u \in A_v\setminus S_v} V(T_u)
					\cup \bigcup_{u\in \Bminus_v} \Xminus_u
	\\
	\Dplus_v \;&=\; \braced{\Xplus_v} 
					\cup \braced{ V(T_u) \suchthat u\in S_v}
					\cup \bigcup_{u \in \Bplus_v} \Dplus_u 
					\cup \bigcup_{u\in \Bminus_v} (\Dminus_u \setminus\braced{\Xminus_u})
\end{align*}
\item{(P2)} Otherwise, if $\Bplus_v\ne\emptyset$: Select $\staru\in \Bplus_v$ for which $|\cutof \Xplus_{\staru}|$ is minimized, and set
\begin{align*}
	\Xplus_v \;&=\; \braced{v} 
					\cup \Xplus_{\staru}
					\cup \bigcup_{u\in \Bminus_v} \Xminus_u 
	\\
	\Dplus_v \;&=\; \braced{\Xplus_v} 
					\cup (\Dplus_{\staru}\setminus\braced{\Xplus_{\staru}})
					\cup \bigcup_{u \in \Bplus_v \setminus \braced{{\staru}}} \Dplus_u
					\cup \bigcup_{u\in \Bminus_v} (\Dminus_u \setminus\braced{\Xminus_u})
\end{align*}
\item{(P3)} Otherwise (that is, for $\Bminus_v = ch(v)$ and $v\notin\Terminals$): set $\Xplus_v,\Dplus_v=\nullsol$.
\end{description}

\myitparagraph{Computing $\Dminus_v$ and  $\Xminus_v$:} We have two cases.
\begin{description}
\item{(I1)} If $v \notin \Terminals$, set
\begin{align*}
	\Xminus_v \;&=\; \braced{v} 
					 \cup \bigcup_{u\in \Bminus_v} \Xminus_u 
	\\
	\Dminus_v \;&=\; \braced{\Xminus_v}
					\cup \braced{ V(T_u) \suchthat u\in A_v}
					\cup \bigcup_{u \in \Bplus_v} \Dplus_u
					\cup \bigcup_{u \in \Bminus_v} (\Dminus_u \setminus\braced{\Xminus_u})
\end{align*}
\item{(I2)} Otherwise (for $v \in \Terminals$), set $\Xminus_v,\Dminus_v = \nullsol$
\end{description}

In the event that $\nullsol$ appears in the formulas above, we set $Y \cup \nullsol = \nullsol$ for any set $Y$.
With this caveat, we are done describing our recurrence equations.
We observe that  (when not $\nullsol$) the $T_v$-cuts $\Dplus_v$ obtained via these
recurrences are proper. In Case~(P1),
the top component of $\Dplus_v$ either contains $v$ as a terminal or a terminal leaf from $A_v$.
In Case~(P2), it contains a terminal from the top component of $\Dplus_{\staru}$.
As for $\Dminus_v$, in Case~(I1) it does not contain a terminal, because $v$ is not a terminal and
top components $\Xminus_u$ for $u\in \Bminus_v$ also do not contain terminals.

After processing all vertices, the algorithm returns $\Dplus_r$ if $|\cutof \Xplus_r| \le 2k-1$ 
and otherwise declares $\maxopt(T, \ConflictGraph)>k$. This completes the description of the algorithm. 

\smallskip

Regarding the running time,
for a linear-time implementation, we only need to keep track of the edges in $\Xplus_v,\Xminus_v$ which are incident to $v$.
Then by standard backtracking we can obtain the final multicut.


\myparagraph{Analysis.}
To complete the proof of Lemma~\ref{lem: faster min-max 2-approximation key lemma}, it remains to prove its correctness.
We first prove a conditional bound on local costs for each $T_v$-cut in Lemma~\ref{lemma: 2-approx lambda invariant},
then in Lemma~\ref{lemma: 2-approx boundary size} we show that bounding the local costs of our computed cuts yields a bound on the min-max cost.

Throughout the rest of the proof, for a real number $a$, we use notation $\posvalue{a} = \max(a,0)$.


\begin{lemma}\label{lemma: 2-approx lambda invariant}
For each vertex $v$:
\begin{description}
	\item{(a)} If $\lambdaplus_v \le k$, then $|\cutof \Xplus_v| \le \posvalue{2\lambdaplus_v-1}$. 
	\item{(b)} Likewise, $\lambdaminus_v \le k$ implies $|\cutof \Xminus_v| \le \posvalue{2 \lambdaminus_v-1}$. 
\end{description}
\end{lemma}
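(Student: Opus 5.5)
We prove (a) and (b) simultaneously by induction on the height of $v$, processing bottom-up in the same order as the algorithm. For a leaf $t$, $\Xplus_t=\braced{t}$ has local cost $0 = \posvalue{2\lambdaplus_t - 1}$ since $\lambdaplus_t = 0$, and no improper $T_t$-cut exists ($t$ is a terminal), so $\lambdaminus_t=\infty$ and (b) is vacuous. For an internal $v$ we fix an optimal $k$-feasible $T_v$-cut $\multicutQ$ (proper for (a), improper for (b)) with top component $Y$, and account for the edges of $\cutof Y$ child by child. Each child $u$ contributes $c_u = |\cutof Y \cap (\braced{uv}\cup E(T_u))|$, so $\lambda_v = \sum_u c_u$. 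The plan is to charge each child's contribution in our solution to at most $2c_u$, with an overall saving of one.

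\emph{Children in $B_v$.} For $u\in B_v$, $T_u\cup\braced{v}$ contains a conflict, so $Y$ cannot contain all of $T_u$ together with $v$, whence $c_u\ge 1$. If $uv\in\cutof Y$, our algorithm pays at most $1$ for $u$ (cut $uv$ if $u\in\Bplus_v$, or \dots). If $uv\notin \cutof Y$, then $Y\cap T_u$ is the top component of a $k$-feasible $T_u$-cut, proper or improper, with local cost $c_u$; restricting $\multicutQ$ to $T_u$ gives $\lambdaminus_u\le c_u$ or $\lambdaplus_u\le c_u$, with $c_u\le \lambda_v\le k$. For $u\in\Bminus_v$ we merge $\Xminus_u$ and pay $|\cutof\Xminus_u|$. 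The key point is that $u\in\Bminus_v$ means $|\cutof\Xplus_u|>2k-1$, so by induction $\lambdaplus_u>k$ (otherwise $|\cutof \Xplus_u|\le 2k-1$). Hence the optimum either cuts $uv$ or uses an improper restriction, giving $\lambdaminus_u\le c_u$ and paying $\le 2c_u-1$ by induction (b). Merging $\Xminus_u$ would cost more than $1$ only if $uv$ is cut, in which case we would cut $uv$ instead. I will need to check the recurrence: the formulas always merge $\Xminus_u$ for $u\in\Bminus_v$. I would therefore argue directly that $\lambdaminus_u \le 1$ is not guaranteed, and handle it by comparing with the option of cutting $uv$. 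For $u\in\Bplus_v$ we cut $uv$, paying $1\le 2c_u-1$. So every $B_v$ child pays at most $2c_u-1$.

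\emph{Children in $A_v$.} Here $Y$'s pattern induces a vertex cover of $H_v$: the set of $u\in A_v$ with $c_u\ge1$, exactly as in Observation~\ref{obs: reduction from unweighted local cut to vertex cover}. So $\sum_{u\in A_v} c_u \ge \vcover(H_v)$, and we pay $|S_v|\le 2\vcover(H_v)-1$ when $F_v\neq\emptyset$, and $0$ otherwise. In case (P1) the total is then $\le 2\lambdaplus_v - 1$ when at least one term has a saving, and $\posvalue{\cdot}$ handles the case where everything is zero. The bound $|S_v|\le |A_v|-1$ guarantees that the top component keeps a terminal from $A_v$. Case (P2) applies when $v\notin\Terminals$ and $A_v=\emptyset$. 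The terminal of $Y$ must then come from some $u$ with $uv\notin\cutof Y$ and $\lambdaplus_u\le c_u$, which forces $u\in\Bplus_v$. Choosing $\staru$ minimizing $|\cutof\Xplus_u|$ costs at most $2c_u-1$ instead of $1$ for that child, still within budget. For (b), case (I1) cuts every $A_v$ child, paying $|A_v|$. But an improper $Y$ must exclude every terminal, so each $A_v$ child with a terminal contributes $c_u\ge1$; in normal form every subtree has a terminal. This gives at most $c_u\le 2c_u-1$ per child.

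The main obstacle is the $\Bminus_v$ accounting when the optimum cuts $uv$ itself while our algorithm merges $\Xminus_u$. I would resolve it using $\lambdaminus_u\le$ the local cost of cutting everything below, plus the $k$-feasibility threshold $2k-1$, and by verifying that the $-1$ savings suffice globally (only one saving is needed, obtained from any child with $c_u\ge1$).
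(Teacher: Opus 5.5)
\textbf{There is a genuine gap at the step you flag as the main obstacle.} Your decomposition is the paper's: charge child by child, use $\sum_{u\in A_v}c_u\ge\vcover(H_v)$ for the $A_v$ children, one unit per $\Bplus_v$ child, the inductive bound on $\lambdaminus_u$ per $\Bminus_v$ child, and save one unit overall. The gap is that for $u\in\Bminus_v$ you leave the case $uv\in\cutof Y$ open, and neither remedy you suggest works.

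\begin{itemize}
\item The algorithm never cuts $uv$ for $u\in\Bminus_v$; in (P1), (P2) and (I1) it always merges $\Xminus_u$. So ``cutting $uv$ instead'' is not available.
\item The single global unit of saving cannot absorb the discrepancy. In that case $c_u=1$, but nothing you have controls $|\cutof\Xminus_u|$: nothing forces $\lambdaminus_u\le k$, so induction hypothesis (b) need not even apply to $u$.
\end{itemize}

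\textbf{Missing idea.} This case cannot occur, because $\multicutQ$ is $k$-feasible.

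\begin{itemize}
\item Suppose $uv\in\cutof Y$. Then the component $Z\in\multicutQ$ containing $u$ is a non-top component, so it contains a terminal and $|\cutof Z|\le k$.
\item Restricting $\multicutQ$ to $T_u$ gives a proper $k$-feasible $T_u$-cut with top component $Z$ and local cost $|\cutof Z|-1\le k-1$.
\item Hence $\lambdaplus_u\le k$, and induction hypothesis (a) gives $|\cutof\Xplus_u|\le 2k-1$. This contradicts $u\in\Bminus_v$.
\end{itemize}

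Combine this with your own argument for the case $uv\notin\cutof Y$ (a terminal of $T_u$ in $Y$ would give $\lambdaplus_u\le c_u\le k$). Then $Y$ contains $u$ but no terminal of $T_u$. So $\lambdaminus_u\le c_u\le k$, and induction gives $|\cutof\Xminus_u|\le 2\lambdaminus_u-1\le 2c_u-1$.

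This is exactly the paper's Claim~\ref{claim: 2-approx Type-C children}. You need it for the $\Bminus_v$ children in both (a) and (b). With it in place, the rest of your accounting goes through.
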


By the normal form of $T$, each leaf is a terminal, which implies that $\lambdaminus_v\ge 1$ for each node $v$
($\lambdaminus_v$ could be $\infty$), so in~(b) we can write $2 \lambdaminus_v-1$ instead of $\posvalue{2 \lambdaminus_v-1}$
on the right-hand side of the equation.

\begin{proof}
We proceed by induction on the tree, assuming \emph{both} implications in our induction hypothesis.
The lemma is trivial when $v$ is a leaf, so assume $v$ is an internal vertex and the lemma holds for all $u \in ch(v)$ (for both solution types).
	
First we make one general observation about the children of $v$ in $\Bminus_v$.

\begin{claim}\label{claim: 2-approx Type-C children}
If $\multicutQ'$ is a $k$-feasible $T_v$-cut (either proper or improper) with local cost at most $k$,
then for each $u\in \Bminus_v$, $v$'s component in $\multicutQ'$ contains $u$ but it does not contain any terminal from $T_u$. 
\end{claim}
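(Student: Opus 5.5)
The plan is to argue by contradiction, using the induction hypothesis of Lemma~\ref{lemma: 2-approx lambda invariant}(a) for the child $u$. The first step is to show that every $u\in\Bminus_v$ satisfies $\lambdaplus_u > k$. By definition of $\Bminus_v$, either $\Xplus_u=\nullsol$ or $|\cutof \Xplus_u| \ge 2k$. Suppose instead that $\lambdaplus_u\le k$. Then the inductive hypothesis (a) at $u$ gives $|\cutof\Xplus_u|\le \posvalue{2\lambdaplus_u-1}\le 2k-1$, using $k\ge 1$. This contradicts $u\in\Bminus_v$, so indeed $\lambdaplus_u>k$.

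Next, fix a $k$-feasible $T_v$-cut $\multicutQ'$ whose top component $X=\multicutQ'(v)$ has $|\cutof X|\le k$, and fix $u\in\Bminus_v$. In each case below I restrict $\multicutQ'$ to $T_u$ to get a $T_u$-cut $\multicutQ_u$. By connectivity, every component of $\multicutQ'$ other than the one containing $u$ lies either entirely inside $T_u$ or entirely outside it. Hence $\multicutQ_u$ consists of the component containing $u$, intersected with $V(T_u)$, together with the non-top components of $\multicutQ'$ that lie inside $T_u$. Those non-top components keep their boundaries, so each has cost at most $k$, and $\multicutQ_u$ is $k$-feasible. The two cases differ only in the local cost of $\multicutQ_u$.

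\emph{Case 1: $u\notin X$.} Then $uv\in\cutof X$. The component $Y=\multicutQ'(u)$ is a non-top component of $\multicutQ'$, so it contains a terminal and satisfies $|\cutof Y|\le k$. In $\multicutQ_u$ it is a proper top component, and its boundary within $T_u$ is $\cutof Y\setminus\braced{uv}$, which has size at most $k-1$ because the tree is unweighted. Hence $\lambdaplus_u\le k-1$.

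\emph{Case 2: $u\in X$ and $X$ contains a terminal of $T_u$.} Then $X\cap V(T_u)$ is a proper top component of $\multicutQ_u$. Its boundary within $T_u$ is contained in $\cutof X$, so it has size at most $k$, and hence $\lambdaplus_u\le k$.

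Both cases contradict $\lambdaplus_u>k$. Therefore $X$ contains $u$ and contains no terminal of $T_u$, which proves the claim. The only delicate step is the restriction argument: one must check that the non-top components of $\multicutQ_u$ are exactly non-top components of $\multicutQ'$, so that $k$-feasibility transfers. In Case~1 one also needs the unit weight of $uv$ to reach the strict bound $k-1$, although the weaker bound $\le k$ already suffices for the contradiction.
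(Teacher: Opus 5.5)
Your proof is correct and follows the paper's argument. In both cases ($u$ outside $v$'s component, or $u$ inside it together with a terminal of $T_u$), restricting $\multicutQ'$ to $T_u$ gives a proper $k$-feasible $T_u$-cut of local cost at most $k$, so $\lambdaplus_u\le k$, and the induction hypothesis then contradicts $u\in\Bminus_v$. The only differences are that you derive $\lambdaplus_u>k$ first and spell out the restriction details that the paper leaves implicit.
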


\begin{claimproof}
Let $X'\in\multicutQ'$ be the component containing $v$.
If $X'$ did not contain $u$ then, by $k$-feasibility of $\multicutQ'$, $\multicutQ'$ restricted to $T_u$ would be a proper $k$-feasible $T_u$-cut with local cost at most $k$. 
(In particular, the top component in this $T_u$-cut would be a non-top component in $\multicutQ'$, so it would contain a terminal.)
Similarly, if $X'$ contained $u$ and some terminal from $T_u$
then again, $\multicutQ'$ restricted to $T_u$ would be a proper $k$-feasible $T_u$-cut with local cost at most $k$. 
(In this case, the top component is $X'\cap T_u$ and contains a terminal.)
Either way, we have $\lambdaplus_u \le k$, and by our induction hypothesis, 
this implies $|\cutof \Xplus_u| \le \posvalue{2\lambdaplus_u -1} \le 2k-1$, contradicting the definition of $\Bminus_v$.
\end{claimproof}
	
As a corollary, our assumption that $\min\braced{\lambdaminus_v, \lambdaplus_v} \le k$ implies
$\lambdaminus_u \le \min\braced{\lambdaminus_v, \lambdaplus_v} \le k$ 
(and $|\cutof \Xminus_u| \le  2\lambdaminus_u-1$ by induction) for all $u \in \Bminus_v$.


\myitparagraph{Proof of~(a): computing $\Dplus_v$.}
We first analyze $\Dplus_v$, breaking into cases corresponding to the cases in its computation. 
For the next few claims, it is convenient to introduce the following notation.
Given a component $X\in\multicutQ$ of a $T_v$-cut and $u\in ch(v)$, let $\cutof_u X = \cutof X \cap (E(T_u) \cup \braced{uv})$.
Then let $\cutof_A X = \bigsqcup_{u \in A_v} \cutof_u X$, 
$\cutof_\Bplus X = \bigsqcup_{u \in \Bplus_v} \cutof_u X$, and $\cutof_\Bminus X = \bigsqcup_{u \in \Bminus_v} \cutof_u X$.
Clearly we have $\cutof X = \cutof_A X \sqcup \cutof_\Bplus X \sqcup \cutof_\Bminus X$.


\begin{claim}\label{claim: 2-approx vertex cover bound}
If $\lambdaplus_v \le k$ then
\begin{equation*}
		\lambdaplus_v \;\ge\; \vcover(H_v) + |\Bplus_v| + \sum_{u \in \Bminus_v} \lambdaminus_u
\end{equation*}
\end{claim}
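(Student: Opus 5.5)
Fix an optimal $k$-feasible proper $T_v$-cut $\multicutQ'$ with local cost $\lambdaplus_v\le k$, and let $X'$ be its top component. Using the decomposition $\cutof X' = \cutof_A X' \sqcup \cutof_\Bplus X' \sqcup \cutof_\Bminus X'$, the plan is to bound each of the three parts from below separately: $|\cutof_A X'|\ge \vcover(H_v)$, $|\cutof_\Bplus X'|\ge |\Bplus_v|$, and $|\cutof_\Bminus X'|\ge \sum_{u\in\Bminus_v}\lambdaminus_u$. Summing the three bounds gives the claim.

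\emph{The $A$-part.} I would mimic the argument of Observation~\ref{obs: reduction from unweighted local cut to vertex cover}. Let $A'=\braced{u\in A_v \suchthat \cutof_u X'\neq\emptyset}$. Then $|\cutof_A X'|\ge |A'|$, since the sets $\cutof_u X'$ are disjoint. If some edge $uu'\in F_v$ had $u,u'\notin A'$, then $X'$ would contain all of $T_u\cup T_{u'}$ together with $v$. This uses connectivity of $X'$: if no boundary edge lies in $\braced{uv}\cup E(T_u)$, then all of $T_u$ lies in $X'$. But $T_u\cup T_{u'}$ contains a conflicting pair, contradicting independence of $X'\cap\Terminals$. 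Hence $A'$ is a vertex cover of $H_v$, and $|\cutof_A X'|\ge\vcover(H_v)$.

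\emph{The $\Bplus$-part.} Take $u\in B_v$. By definition of $A_v$, the terminals of $T_u\cup\braced{v}$ contain a conflict. If $\cutof_u X'=\emptyset$, then $X'\supseteq T_u\cup\braced{v}$, which is a conflict. So $|\cutof_u X'|\ge 1$ for every $u\in\Bplus_v$, giving the second bound.

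\emph{The $\Bminus$-part.} For $u\in\Bminus_v$, Claim~\ref{claim: 2-approx Type-C children} applies because $\multicutQ'$ is $k$-feasible with local cost $\lambdaplus_v\le k$. It says that $X'$ contains $u$ but no terminal of $T_u$. Restrict $\multicutQ'$ to $T_u$. The top component is $X'\cap V(T_u)$, which contains no terminal, so this restriction is an improper $T_u$-cut. Its non-top components are non-top components of $\multicutQ'$, so it is $k$-feasible. Its local cost is exactly $|\cutof X'\cap E(T_u)|$, and $uv\notin\cutof X'$. Therefore $|\cutof_u X'|\ge\lambdaminus_u$.

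The main subtlety I expect is the last step. One must check that the restriction genuinely qualifies as a $k$-feasible improper $T_u$-cut; in particular, every non-top component must be a subtree containing a terminal, which is inherited from $\multicutQ'$. One must also check that the boundary convention, which excludes the edge $u\pi_u$, matches the sets $\cutof_u X'$. Both points follow directly from Claim~\ref{claim: 2-approx Type-C children}.
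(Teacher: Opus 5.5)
Your proposal is correct and follows essentially the same argument as the paper. It uses the same three-way split of $\cutof X'$, the vertex-cover argument for the $A_v$ part, the forced-conflict argument for $\Bplus_v$, and Claim~\ref{claim: 2-approx Type-C children} to obtain a $k$-feasible improper $T_u$-cut of local cost $|\cutof_u X'|$ for each $u\in\Bminus_v$; your explicit remark that $uv\notin\cutof X'$ fills in a detail the paper leaves implicit.
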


\begin{claimproof}
Consider a proper $k$-feasible $T_v$-cut $\multicutQ'$ with local cost $\lambdaplus_v \le k$, and let $X'\in\multicutQ'$ be the top component.
We produce three bounds on $\cutof_A X'$, $\cutof_\Bplus X'$, and $\cutof_\Bminus X'$.
The set $S'=\braced{u \in A_v \mid |\cutof_u X'|\ge 1 }$ is a vertex cover for $H_v$, implying $|\cutof_A X'| \ge \vcover(H_v)$.
(If edge $uw \in F_v$ was not covered by $S'$, then $\cutof_u X'=\cutof_w X'=\emptyset$ and $X$ contains both $T_u$ and $T_w$.
But $uw\in F_v$ implies there is some conflicting terminal pair $tt'\in\ConflictEdges$ where $t\in T_u$ and $t'\in T_w$, implying $X'$ has a conflict.)
If $u\in \Bplus_v$, then $|\cutof_u X'| \ge 1$, otherwise $X'$ would contain $T_u \cup \braced{v}$ and a conflicting pair.
Therefore $|\cutof_\Bplus X'| \ge |\Bplus_v|$.
If $u\in \Bminus_v$, then by Claim~\ref{claim: 2-approx Type-C children}, $\multicutQ'$ induces an improper $k$-feasible $T_u$-cut with local cost $|\cutof_u X'|$,
implying $|\cutof_u X'| \ge \lambdaminus_u$ and $|\cutof_\Bminus X'| \ge \sum_{u \in \Bminus_v} \lambdaminus_u$.
Summing inequalities yields our desired bound.
\end{claimproof}

\medskip
Suppose now that $v$ satisfies Case (P1) of the algorithm, that is $v\in \Terminals$ or $A_v \neq \emptyset$. In this case,
\begin{equation*}
		|\cutof \Xplus_v| \;=\; |S_v| + |\Bplus_v| + \sum_{u \in \Bminus_v} |\cutof \Xminus_u| 
\end{equation*}
If $F_v\neq\emptyset$ then $|S_v| \le 2\razy\vcover(H_v)-1$.
If $\Bplus_v\neq\emptyset$ then $|\Bplus_v| \le 2|\Bplus_v| -1$.
If $\Bminus_v\neq\emptyset$ then for all $u\in \Bminus_v$ we have $|\cutof \Xminus_u| \le 2\lambdaminus_u -1$, by induction.
Therefore, if at least one of these conditions holds,  Claim~\ref{claim: 2-approx vertex cover bound} implies that 
\begin{equation*}
		|\cutof \Xplus_v| \;\le\; 2\razy \vcover(H_v) + 2|\Bplus_v| + 2\sum_{u \in \Bminus_v} \lambdaminus_u -1
		\;\le\; 2\lambdaplus_v -1
\end{equation*}
On the other hand, if all three conditions are false then $\cutof \Xplus_v = \emptyset$. 
Thus $\cutof \Xplus_v \le \posvalue{2\lambdaplus_v -1}$  in Case~(P1), as needed.


The next claim handles the case when $v\notin \Terminals$ and $A_v=\emptyset$.

\begin{claim}\label{claim: 2-approx Type-B bound}
If $\lambdaplus_v \le k$, $v\notin\Terminals$ and $A_v =\emptyset$, then $\Bplus_v \ne \emptyset$ and
\begin{equation*}
\lambdaplus_v \ge |\Bplus_v|-1 + \min_{u \in \Bplus_v} \lambdaplus_u + \sum_{u \in \Bminus_v} \lambdaminus_u
\end{equation*}
\end{claim}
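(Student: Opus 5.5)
Take a proper $k$-feasible $T_v$-cut $\multicutQ'$ with local cost $\lambdaplus_v \le k$, and let $X'$ be its top component. As in Claim~\ref{claim: 2-approx vertex cover bound}, split $\cutof X' = \cutof_\Bplus X' \sqcup \cutof_\Bminus X'$; there is no $\cutof_A$ part, since $A_v=\emptyset$ means $ch(v)=B_v=\Bplus_v\sqcup\Bminus_v$.

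\emph{Step 1: the $\Bminus_v$ children.} By Claim~\ref{claim: 2-approx Type-C children} (which applies since the local cost is at most $k$), for each $u\in\Bminus_v$ the component $X'$ contains $u$ but no terminal of $T_u$. So $\multicutQ'$ restricted to $T_u$ is an improper $k$-feasible $T_u$-cut with local cost $|\cutof_u X'|$. This gives $|\cutof_\Bminus X'| \ge \sum_{u\in\Bminus_v}\lambdaminus_u$.

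\emph{Step 2: find a proper child and show $\Bplus_v\neq\emptyset$.} Since $\multicutQ'$ is proper and $v\notin\Terminals$, the set $X'$ contains some terminal $t$. Let $\staru=\child_t(v)$. By Step~1, $\staru\notin\Bminus_v$, so $\staru\in\Bplus_v$, and in particular $\Bplus_v\neq\emptyset$.

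\emph{Step 3: the other $\Bplus_v$ children.} For every $u\in\Bplus_v\setminus\{\staru\}$ we need $|\cutof_u X'|\ge 1$. Suppose instead $\cutof_u X'=\emptyset$. Then $X'\supseteq V(T_u)\cup\{v\}$. But $u\in B_v$ means the terminals of $T_u\cup\{v\}$ contain a conflict, so $X'$ would contain a conflicting pair, a contradiction. This contributes at least $|\Bplus_v|-1$.

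\emph{Step 4: the child $\staru$.} Restrict $\multicutQ'$ to $T_{\staru}$. Since $t\in X'\cap T_{\staru}$ and $X'$ is connected through $v$, the vertex $\staru$ lies in $X'$. The restriction is therefore a $T_{\staru}$-cut with top component $X'\cap V(T_{\staru})$, which contains $t$, so it is proper. Its non-top components are non-top components of $\multicutQ'$, so it is $k$-feasible. Its local cost is $|\cutof_{\staru} X'|$; the edge $\staru v$ is not cut. Hence $|\cutof_{\staru}X'|\ge\lambdaplus_{\staru}\ge\min_{u\in\Bplus_v}\lambdaplus_u$.

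Summing Steps~1, 3 and~4 gives the claimed inequality. The only step needing care is Step~4: one must verify that the restriction is a genuine proper $k$-feasible $T_{\staru}$-cut, i.e.\ that the top component stays connected, contains $t$, and that the boundary convention excludes $\staru v$.
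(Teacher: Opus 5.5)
Your proof is correct and follows essentially the same route as the paper. You use Claim~\ref{claim: 2-approx Type-C children} both to bound the $\Bminus_v$ contributions by $\lambdaminus_u$ and to find a child in $\Bplus_v$ whose subtree supplies a terminal of $X'$. You bound that child's contribution by $\lambdaplus$ via the induced proper $k$-feasible cut, and you charge at least one cut edge to each remaining child of $\Bplus_v$. Your checks in Steps~3 and~4 (the conflict inside $V(T_u)\cup\{v\}$, connectivity of the restricted top component, and that the edge $\staru v$ is not cut) spell out details the paper leaves implicit.
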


\begin{claimproof}
Consider a proper $k$-feasible $T_v$-cut $\multicutQ'$ with local cost $\lambdaplus_v\le k$, and let $X' \in \multicutQ'$ be the top component.
By similar reasoning as before, Claim~\ref{claim: 2-approx Type-C children} implies $|\cutof_u X'| \ge \lambdaminus_u$ for $u\in \Bminus_v$, so $|\cutof_\Bminus X'| \ge \sum_{u \in \Bminus_v} \lambdaminus_u$.
Combined with $v\notin \Terminals$ and $A_v=\emptyset$, Claim~\ref{claim: 2-approx Type-C children} also implies that $X'$ contains a terminal from $T_w$ for some $w \in \Bplus_v$.
(In particular, $\Bplus_v \ne \emptyset$.)
Then $\multicutQ'$ induces a proper $k$-feasible $T_w$-cut of local cost $|\cutof_w X'|$, implying $|\cutof_w X'| \ge \lambdaplus_w \ge \min_{u \in \Bplus_v} \lambdaplus_u$.
We have $|\cutof_u X'| \ge 1$ for the remaining children $u\in \Bplus_v-\braced{w}$, which implies our desired bound.
\end{claimproof}

\smallskip
Suppose now that $v$ satisfies Case~(P2) of the algorithm, that is $v\notin \Terminals$, $A_v = \emptyset$, and $\Bplus_v\ne\emptyset$.
Claim~\ref{claim: 2-approx Type-B bound} implies there is some $w\in \Bplus_v$ 
where $\lambdaplus_w = \min_{u \in \Bplus_v} \lambdaplus_u \le \lambdaplus_v \le k$.
By the algorithm, we have  $|\cutof \Xplus_{\staru}| \le |\cutof \Xplus_{w}|$, so
\begin{equation*}
		|\cutof \Xplus_v| \;\le\; |\cutof \Xplus_w| + |\Bplus_v|-1 + \sum_{u \in \Bminus_v} |\cutof \Xminus_u| 
\end{equation*}
Now, if $\lambdaplus_w \ge 1$ then, by induction, $|\cutof \Xplus_w|\le 2\razy\lambdaplus_w-1$.
If $|\Bplus_v|\ge 2$ then $|\Bplus_v|-1 \le 2( |\Bplus_v|-1 ) - 1$.
If $\Bminus_v\neq\emptyset$ then for all $u\in \Bminus_v$ we have $|\cutof \Xminus_u|\le 2 \lambdaminus_u-1$, by induction.
So if at least one of these three conditions holds then
\begin{equation*}
		|\cutof \Xplus_v| \;\le\; 2\lambdaplus_w + 2(|\Bplus_v|-1) + 2\sum_{u \in \Bminus_v} \lambdaminus_u -1 \;\le\; 2\lambdaplus_v -1
\end{equation*}
On the other hand, if all three conditions are false then $\cutof \Xplus_v = \emptyset$,
So $|\cutof \Xplus_v| \le \posvalue{2\lambdaplus_v -1}$ holds in Case~(P2).

\smallskip
Regarding Case~(P3), note that Claim~\ref{claim: 2-approx Type-B bound} implies that $A_v \cup \Bplus_v$ is non-empty, 
which means that (under the assumption that $\lambdaplus_v\le k$)
the algorithm will never execute Case~(P3) for computing the proper $T_v$-cut where $\Bminus_v=ch(v)$.
This concludes the analysis of $\Dplus_v$.


\myitparagraph{Proof of~(b): computing $\Dminus_v$.}
We now analyze $\Dminus_v$ and derive one more bound.

\begin{claim}\label{claim: 2-approx Type-0 bound}
If $\lambdaminus_v\le k$ then
\begin{equation*}
	\lambdaminus_v \ge |A_v| + |\Bplus_v| + \sum_{u \in \Bminus_v} \lambdaminus_u
\end{equation*}
\end{claim}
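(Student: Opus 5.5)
We mirror the proof of Claim~\ref{claim: 2-approx vertex cover bound}. Fix an improper $k$-feasible $T_v$-cut $\multicutQ'$ whose local cost is $\lambdaminus_v\le k$, and let $X'$ be its top component. Since $X'$ is improper, it contains no terminal; in particular $v\notin\Terminals$. We then bound the three parts of the decomposition $\cutof X' = \cutof_A X' \sqcup \cutof_\Bplus X' \sqcup \cutof_\Bminus X'$ separately and add the bounds.

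\emph{Children in $A_v$.} Take $u\in A_v$ and suppose $\cutof_u X'=\emptyset$. Then $X'\supseteq V(T_u)$. Since $T$ is in normal form, $T_u$ contains a terminal leaf, so $X'$ would contain a terminal, contradicting improperness. Hence $|\cutof_u X'|\ge 1$ for every $u\in A_v$, which gives $|\cutof_A X'|\ge |A_v|$. This is where the improper case is stronger than the proper one: there we only obtained $\vcover(H_v)$.

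\emph{Children in $\Bplus_v$.} The same argument applies to $u\in \Bplus_v$: if $\cutof_u X'=\emptyset$, then $X'$ contains all of $T_u$, hence a terminal. So $|\cutof_\Bplus X'|\ge |\Bplus_v|$.

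\emph{Children in $\Bminus_v$.} Here we invoke Claim~\ref{claim: 2-approx Type-C children}, whose hypothesis holds because the local cost of $\multicutQ'$ is at most $k$. It says that for $u\in\Bminus_v$, the set $X'$ contains $u$ but no terminal of $T_u$. Restricting $\multicutQ'$ to $T_u$ therefore gives an improper $T_u$-cut with top component $X'\cap V(T_u)$. This cut is $k$-feasible, since its non-top components are non-top components of $\multicutQ'$ with unchanged boundaries, and its local cost is exactly $|\cutof_u X'|$, because the edge $uv$ lies inside $X'$. Consequently $|\cutof_u X'|\ge\lambdaminus_u$. Summing over $u\in\Bminus_v$ and adding the three bounds gives the claim.

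The only delicate step is this last restriction argument. We need to check that the induced $T_u$-cut has local cost exactly $|\cutof_u X'|$, that is, that no boundary edge of $X'$ is lost or double-counted when passing from $T_v$ to $T_u$. This follows directly from the boundary convention for $T_u$-cuts, which excludes the edge $u\pi_u = uv$.
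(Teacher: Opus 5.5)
Your proof is correct and follows essentially the same route as the paper's. Both split $\cutof X'$ over the children of $v$. Both use the normal form (every $T_u$ contains a terminal leaf) to force $|\cutof_u X'|\ge 1$ for each $u\in A_v\cup\Bplus_v$, and both use Claim~\ref{claim: 2-approx Type-C children} with restriction to $T_u$ to get $|\cutof_u X'|\ge\lambdaminus_u$ for $u\in\Bminus_v$. The only difference is that you verify the restriction step explicitly ($k$-feasibility and exact local cost of the induced improper $T_u$-cut), where the paper just writes ``by similar reasoning as before''.
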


\begin{claimproof}
Let $\multicutQ'$ be an improper $k$-feasible $T_v$-cut with local cost $\lambdaminus_v \le k$, and let $X' \in \multicutQ'$ be the top component.
By similar reasoning as before, Claim~\ref{claim: 2-approx Type-C children} implies $|\cutof_\Bminus X'| \ge \sum_{u \in \Bminus_v} \lambdaminus_u$.
For $u \in A_v \cup \Bplus_v$, $|\cutof_u X'| \ge 1$, for otherwise $X'$ would contain $T_u$ and its terminals.
(Recall that $T$ is in normal form, so all leaves are terminals.)
\end{claimproof}

For computing $\Dminus_v$ the algorithm has two cases. In Case~(I2), $v\in\Terminals$
implies that $\lambdaminus_v = \infty$, and the inductive step holds.
So consider Case~(I1), that is $v\notin\Terminals$. Then
\begin{equation*}
	|\cutof \Xminus_v| \;=\; |A_v| + |\Bplus_v| + \sum_{u \in \Bminus_v} |\cutof \Xminus_u| 
\end{equation*}
If $A_v\cup \Bplus_v\neq\emptyset$ then $|A_v| + |\Bplus_v|\le 2(|A_v| + |\Bplus_v|) - 1$.
If $\Bminus_v\neq\emptyset$ then, as in the previous cases, for $u\in \Bminus_v$ we have $\lambdaminus_u\ge 1$, so
the induction implies $|\cutof \Xminus_u| \le 2\lambdaminus_u-1$.
One of these cases must hold (since $v$ is not a leaf), so
\begin{equation*}
	|\cutof \Xminus_v| \;\le\; 2|A_v| + 2|\Bplus_v| + 2\sum_{u \in \Bminus_v} \lambdaminus_u -1
			\;\le\; 2\lambdaminus_v -1
\end{equation*}
This finishes the proof of Lemma~\ref{lemma: 2-approx lambda invariant}.
\end{proof}


\begin{lemma}\label{lemma: 2-approx boundary size}
For each vertex $v$:
\begin{description}
\item{(a)} If $|\cutof \Xplus_v| \le 2k-1$ then  $\Dplus_v$ has max-cost (in $T_v$) at most $2k$. 
\item{(b)} Likewise,
if $|\cutof \Xminus_v| \le 2k-1$ then  $\Dminus_v$ has max-cost (in $T_v$) at most $2k$. 
\end{description}
\end{lemma}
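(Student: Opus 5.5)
We prove (a) and (b) together by induction over the tree, bottom-up, using both statements for the children of $v$. The statement is vacuous at a leaf $t$ unless it concerns $\Dplus_t=\braced{\braced{t}}$, which has max-cost $0$ in $T_t$. So let $v$ be internal, and assume (a) and (b) hold for all $u\in ch(v)$. In every case the computed cut (when not $\nullsol$) consists of three kinds of components. First, the top component $\Xplus_v$ or $\Xminus_v$; its cost within $T_v$ is at most $2k-1\le 2k$ by the hypothesis of the lemma. Second, the components $V(T_u)$ for $u\in S_v$ (case P1) or $u\in A_v$ (case I1). Third, the components inherited from the children's solutions. Only the second and third kinds need checking.

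\textbf{Components $V(T_u)$.} The boundary of $V(T_u)$ in $T_v$ is the single edge $uv$, so its cost is $1\le 2k$; recall $k\ge 1$. It is a valid component: $T_u$ is connected and contains a terminal, because leaves are terminals in normal form. It has no conflict because $u\in A_v$, and by the definition of $A_v$ the terminals in $T_u\cup\braced{v}$ do not conflict.

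\textbf{Inherited components.} Take $X\in\Dplus_u$ with $u\in\Bplus_v$, or $X\in\Dplus_{\staru}\setminus\braced{\Xplus_{\staru}}$, or $X\in\Dminus_u\setminus\braced{\Xminus_u}$ with $u\in\Bminus_v$.
\begin{itemize}
\item If $X$ is a non-top component of the child's cut, then its boundary in $T_v$ equals its boundary in $T_u$. By induction, (a) or (b) for $u$ bounds its cost by $2k$, provided the child's top component has cost at most $2k-1$. For $u\in\Bplus_v$ this holds by the definition of $\Bplus_v$.
\item If $X=\Xplus_u$ for $u\in\Bplus_v$, then it becomes a non-top component of the cut at $v$. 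Its boundary in $T_v$ is $\cutof\Xplus_u\cup\braced{uv}$, of cost at most $(2k-1)+1=2k$. This is exactly why the threshold $2k-1$ defines $\Bplus_v$.
\end{itemize}

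\textbf{The delicate step: children in $\Bminus_v$ and the child $\staru$.} Here the child's top component $\Xminus_u$ (or $\Xplus_{\staru}$) is absorbed into the top component at $v$. Its edges of $\cutof\Xminus_u$ all contribute to $\cutof\Xplus_v$ or $\cutof\Xminus_v$, so
\[
|\cutof\Xminus_u|\le|\cutof\Xplus_v|\le 2k-1,
\]
and similarly $|\cutof\Xplus_{\staru}|\le 2k-1$. These bounds follow from the explicit cost identities for $|\cutof \Xplus_v|$ and $|\cutof\Xminus_v|$ in the cases (P1), (P2), (I1). Hence the induction hypothesis (b), resp.\ (a), applies to $u$, and the remaining non-top components of $\Dminus_u$ (or $\Dplus_{\staru}$) have cost at most $2k$ in $T_u$. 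Their boundaries in $T_v$ are unchanged.

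Finally, we confirm that the result is a valid $T_v$-cut. Components are connected, and no component changes its terminal set except the top one. The top component's terminals are independent for the following reasons. In (P1), $v$ or a single $A_v$-subtree terminal set is combined with the sets $T_u$ for $u\in A_v\setminus S_v$, which are conflict-free since $S_v$ is a vertex cover of $H_v$ and $u\in A_v$; the $\Xminus_u$ contribute no terminals. In (P2), the top component is $\Xplus_{\staru}$ together with terminal-free parts. In (I1), the top component has no terminals. With this, both (a) and (b) are established at $v$, completing the induction.
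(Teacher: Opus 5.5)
Your proof is correct and follows essentially the same route as the paper. Both run a joint induction on (a) and (b). The hypothesis applies to every child because $|\cutof \Xplus_u|\le 2k-1$ for $u\in\Bplus_v$ by definition, and each $\cutof \Xminus_u$ with $u\in\Bminus_v$ is absorbed into the top boundary at $v$. Every non-top component is then either some $V(T_u)$ (cost $1$), a former top component $\Xplus_u$ (cost at most $(2k-1)+1$), or an inherited non-top component. Your extra check that the top component is conflict-free is not part of the paper's proof of this lemma, which treats validity of the computed cuts in the algorithm's description, but it is harmless.
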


\begin{proof}
We proceed by induction on $T$, again assuming both implications in the induction hypothesis.
The lemma is trivial for leaves, so suppose $v$ is an internal vertex. 
Whether $|\cutof \Xplus_v|$ or $|\cutof \Xminus_v|$ is at most $2k-1$, we have the following bounds on the children's local costs:
$|\cutof \Xplus_u| \le 2k-1$ for $u\in \Bplus_v$ by definition and $\sum_{u \in \Bminus_v} |\cutof \Xminus_u| \le \min\braced{|\cutof \Xplus_v|, |\cutof \Xminus_v|}$ by construction.
This means our induction hypothesis applies to all children $u\in B_v$. Specifically, for $u \in \Bplus_v$ and $u' \in \Bminus_v$,
$\Dplus_u$ and $\Dminus_{u'}$ have each max-cost at most $2k$ in their respective sub-trees.

We now prove part~(a), by estimating the max-cost of $\Dplus_v$ when $|\cutof \Xplus_v| \le 2k-1$. (The analysis for part~(b) is nearly identical, so we omit it.) 
Let $X$ be a component of $\Dplus_v$, so $\cutof X$ denotes $X$'s boundary in $T_v$.
If $X$ is the top component (i.e.\ $v\in X$), then $|\cutof X|\le 2k-1$, by our assumption.

So we can now assume that $X$ is a non-top compoment of $\Dplus_v$.
If $\cutof X$ contains an edge $uv$ where $u\in A_v$, then $X=T_u$ and $|\cutof{X}|=1$.
If instead $uv\in \cutof X$ and $u\in \Bplus_v$, then $X$ is the top component of $\Dplus_u$ and $|\cutof X|=|\cutof \Xplus_u|+1 \le 2k$.
Lastly, if $X$ does not satisfy any of the previous conditions, then $X$ must be a non-top component of a previously computed $T_u$-cut $\multicutQ$, for some $u\in B_v$.
Either $u\in \Bplus_v$ and $\multicutQ=\Dplus_u$, or $u\in \Bminus_v$ and $\multicutQ=\Dminus_u$; either way, by induction we have $|\cutof{X}|\le 2k$.
\end{proof}

If $\maxopt(T,\ConflictGraph) \le k$, then $\lambdaplus_r \le \maxopt(T,\ConflictGraph) \le k$, 
which by Lemma~\ref{lemma: 2-approx lambda invariant} implies $|\cutof \Xplus_r| \le 2k-1$. 
Then by Lemma~\ref{lemma: 2-approx boundary size}, $\Dplus_r$ is a multicut of $(T, \ConflictGraph)$ with max-cost at most $2k$, as desired.


\newpage
\section{Normal Forms}
\label{sec: normal forms proof}

In Section~\ref{sec: preliminaries} we defined a normal form and a binary normal form
for trees and stated Lemma~\ref{lemma: normal form for trees} that says that
each tree can be converted into an appropriate normal form.
In this section we prove this lemma.

We first describe how to modify a non-weighted tree $T$ into normal form,
which can be done by repeatedly applying two reduction rules:
(1) If $v$ is a non-terminal leaf, remove it;
(2) If $v$ is a non-terminal with exactly two neighbors $u$ and $w$ (and $v$ is not the root), add an edge between $u$ and $w$ and delete $v$. 
These steps will eventually convert $T$ into a tree $T'$ in normal form.
To justify correctness, consider just one step, converting $T$ into a tree $T''$, and show that it satisfies all necessary 
properties except possibly~(a). For instance, in rule~(2), a multicut in $T$ that contains $uv$ or $wv$ (it cannot contain both)
has the same cost as the multicut $T''$ that contains $uw$ instead.

The above construction also works for the weighted case: in rule~(2) simply let $\weight''(uw)=\min\braced{\weight(uv), \weight(vw)}$.
Thus to show part (b), we can assume that $T$ is in the normal form already.
Let $W=\sum_{e \in E} \weight(e)$. To obtain $T'$ we then repeatedly apply the following rules:
\begin{description}
\item{(3)} If $t\in \Gamma$ is an internal vertex with children $u_1,...,u_d$, temporarily remove $t$ and create a new vertex $v$ in its place. 
That is, $v$ is now the parent of each $u_i$ with the same weight $\weight'(vu_i)=\weight(tu_i)$. Then reattach $t$ to $v$ as a leaf and set $\weight'(tv)=W$. 
\item{(4)} If $v\notin \Gamma$ is an internal vertex with children $u_1,...,u_d$ for $d\ge 3$, create a new vertex $w$ and make it the new parent of $u_2,...,u_d$ 
with the same weights $\weight'(wu_i)=\weight(vu_i)$. Then make $w$ a child of $v$ and set $\weight'(wv)=W$. 
\end{description}
These steps will eventually transform $T$ into binary normal form. To justify correctness, 
let $T''$ be the tree obtained after one step of this process from $T$, and argue that it has the required properties, except possibly~(b).
The idea, in both cases~(3) and~(4), is that 
$W$ is so large that any multicut of $T''$ that cuts $tv$ can be modified to cut other edges without increasing cost, so
again we have an obvious correspondence between multicuts of $T$ and $T''$.  
Both rules preserve the normal form conditions, so the resulting tree will eventually be in binary normal form.


\newpage
\section{$\NP$-Hardness of Svitkina/Tardos' $\TreeCutting$ Problem~\cite{svitkina_tardos_min-max_multiway_cuts_2004}}
\label{sec: np-hardness tree cutting}


Here we consider the $\TreeCutting$ problem of Svitkina and Tardos~\cite{svitkina_tardos_min-max_multiway_cuts_2004}
which is essentially a bi-objective combination of $\ConMultiwayCutSumTrees$ and $\ConMultiwayCutMaxTrees$:
Given a weighted tree $T = (V,E,\weight)$, a set of terminals $\Terminals\subseteq V$, and a bound $B$, the objective
is to compute a partition of $T$ into subtrees $T_1,...,T_h$, each with at most one terminal and with $\weight(\boundary T_i) \le B$,
that minimizes total boundary weight $\half\sum_{i=1}^h \weight( \boundary T_i)$.
Svitkina and Tardos~\cite{svitkina_tardos_min-max_multiway_cuts_2004} gave a pseudo-polynomial algorithm for $\TreeCutting$
that they used as a subroutine to design an approximation algorithm for the $\MultiwayCutMaxTrees$ problem for weighted trees
(without the connectivity requirement). They did not analyze, however, the computational complexity of $\TreeCutting$
itself. Here, we show that the problem is $\NP$-hard, and thus a pseudo-polynomial time algorithm is all we can hope for
if we insist on finding the exact optimum.


\begin{figure}[t]
\begin{center}
\includegraphics[width = 3in]{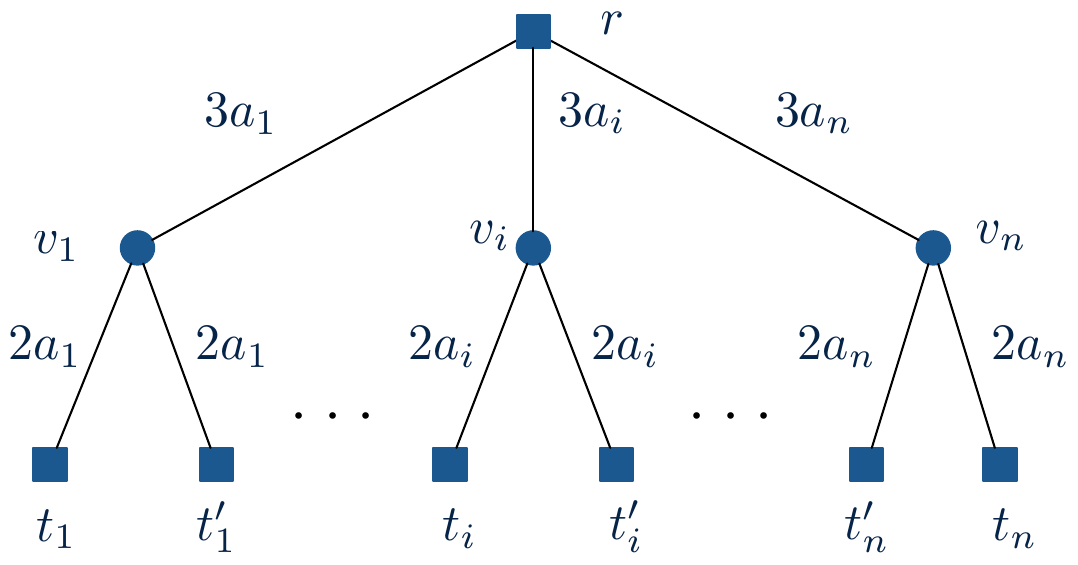}
\caption{%
The constructed instance of $\TreeCutting$. Squares represent terminals, circles represent non-terminals.}
\label{fig: tree cutting np-hardness}
\end{center}
\end{figure}


The proof is by reduction from $\Partition$. Consider an instance of $\Partition$ with integers $a_1,...,a_n$ and let $A = \half\sum_{i=1}^n a_i$.
We convert it into the following instance of $\TreeCutting$: the tree has a root vertex $r$, and for each $i = 1,...,n$ it has three
vertices $v_i$, $t_i$ and $t'_i$, and edges $(r,v_i)$ with weight $3a_i$, and $(v_i,t_i)$, $(v_i,t'_i)$, both with weight $2a_i$. 
(See Fig.~\ref{fig: tree cutting np-hardness}.)
The terminals are $\Terminals = \braced{r,t_1,t'_1,...,t_n,t'_n}$ and the boundary bound is $B = 7A$.
We claim that
\begin{description}
\item{$(\ast)$} The instance of $\Partition$ has a solution iff the constructed instance of $\TreeCutting$ has a solution of
	cost at most $9A$.
\end{description}
The proof of $(\ast)$ is quite simple, so we only sketch the idea, leaving the details to the reader. In the instance of
$\TreeCutting$, for each $i$, in order to separate terminals $r$, $t_i$ and $t'_i$
each partition must cut at least two edges among $(r,v_i)$, $(v_i,t_i)$, and $(v_i,t'_i)$. Cutting all three unnecessarily
increases the total cost, so we can assume that only two are cut. By symmetry, we can then assume that $(v_i,t_i)$ is
cut. So we have two types of cuts: Type~1 with $(v_i,t_i)$ and $(r,v_i)$  cut, and Type~2 with $(v_i,t_i)$ and $(v_i,t'_i)$ cut, at costs
$5a_i$ and $4a_i$, respectively. 
Have the cut of Type~1 represent $i$ being in the solution $I$ of the instance of $\Partition$, and Type~2 of being in the
complement $\barI$.
All clusters except the top one (containing $r$) have small cost, so only the top cluster is of
concern. Then to satisfy the condition on $\TreeCutting$ in $(\ast)$ we need:
\begin{equation*}
	\textstyle
	\sum_{i\in I} 3a_i + \sum_{i\in \barI} 4a_i \;\le \; 7A
	\quad\textrm{and}\quad
	\sum_{i\in I} 5a_i + \sum_{i\in \barI} 4a_i \;\le \; 9A
\end{equation*}
By simple algebra, these inequalities are equivalent to $\sum_{i\in I} a_i = A$, so the equivalence $(\ast)$ is true.


\newpage
\section{On the Connectivity Requirement for $\MultiwayCutMax$ }
\label{sec: connectivity requirement for min-max}


In this section we examine how the connectivity requirement for $\MulticutMax$ affects the value of the optimum
solution. We focus on the simple case in which the conflict graph is a clique, that is on $\MultiwayCutMax$.
(Recall that in $\MultiwayCutMax$ the conflict graph is a clique, so all terminals must be separated.)
We refer to the worst-case ratio between the optimal solution for $\MultiwayCutMax$ and the optimal connected
solution (namely the solution for $\ConMultiwayCutMax$) as the \emph{connectivity gap}.


\myparagraph{Trees.}
We start with the case of weighted trees, that is the solution of $\MultiwayCutMaxTrees$ versus that of $\ConMultiwayCutMaxTrees$.
Svitkina and Tardos~\cite{svitkina_tardos_min-max_multiway_cuts_2004} gave an example of an instance showing that for trees the
connectivity gap for $\MultiwayCutMaxTrees$ is at least $1.25$. 
We show here that this gap is $\Theta(\log n)$.

For the lower bound, we use a non-weighted and perfectly balanced binary tree $T$
and the set $\Terminals$ of terminals that are the leaves of $T$. 
Consider a partition of $T$'s vertices into pairs, where each
pair is formed by grouping each non-terminal with a unique terminal (one leaf will not be paired with any internal vertex). 
The min-max cost of this partition is $O(1)$.
On the other hand,
in a connected partition, the part containing the root would also contain the path $P$ from the root to the unique terminal $t$ of this part.
Then any cut that separates $P$ from $\Terminals\setminus\braced{t}$ must have $\Omega(\log n)$ edges.
(We can assume that $P$ be the leftmost path from the root. Then for each
vertex $v\in P\setminus\braced{t}$, the rightmost path from $v$ to a leaf must contain at least one edge in this cut.)
Thus the boundary weight of the partition containing the root is $\Omega(\log n)$, and we can conclude
that the connectivity gap, even for non-weighted trees, is $\Omega(\log n)$.

We next argue that connectivity gap for weighted trees is $O(\log n)$. 
Consider an instance consisting of a tree $T = (V,E,\weight)$ and a set of terminals $\Terminals$.
Let $\multicutM$ be an optimal multicut of $\MultiwayCutMaxTrees$ for this instance, and
let $L = \maxcost(\multicutM)$. 
(Here $\multicutM$ is represented as a partition into connected components, each without conflicts. 
For brevity, here we will refer to these connected components simply as \emph{components},
even though some may not contain a terminal.)
We will convert $\multicutM$ into a connected multicut $\multicutM'$ satisfying
$\maxcost(\multicutM') = O(L \log|\Terminals|)$, thus establishing the claimed upper bound on the connectivity gap.

Each component of $\multicutM$ contains at most one terminal.
We refer to the pieces with terminals as \emph{terminal components} and other as \emph{nonterminal components}.
To obtain $\multicutM'$, we combine the components into larger subtrees, each
containing exactly one terminal.
To this end, consider a graph $Q$ whose nodes are the components of $\multicutM$ and edges connect the pairs of components that have
an edge (clearly, exactly one) of $T$ between them.  Then $Q$ is itself a tree, which we root at an arbitrary node $r$.

First, repeatedly merge any nonterminal leaf into its parent.
This operation does not increase the boundary weight of the parent component.
Consequently, every remaining component has boundary weight at most $L$, and every leaf of $Q$ is a terminal component.
Next, ensure that all nonterminal components of degree two are non-adjacent. To do this, 
consider any inclusion-wise maximal path of degree two nonterminal components and merge them together into one subtree.
The boundary weight of this new subtree is at most $2L$.
After repeating this process, each nonterminal component in $Q$ of degree two is only adjacent to terminal components.

For each nonterminal component $C$, choose a nearest terminal node among its descendants,
and retain only the edge from $C$ to the child on the path to that terminal
(i.e., delete every other edge to $C$ and from $C$). Remove from $Q$ also all edges
connecting two terminal components. This way, by the properties of $Q$ described in the previous paragraph,
the tree $Q$ is partitioned
into paths of length at most $\log_2 |\Gamma|$, each ending at a terminal component.
Now merge the components of each such path, producing a subtree of $T$ that contains exactly one terminal.
The boundary of this subtree will have weight $O(L \log |\Terminals|)$, completing the proof.


\myparagraph{Arbitrary graphs.}
We now show that for arbitrary graphs, the connectivity gap for $\MultiwayCutMax$ could be much larger than for trees, namely at least $\Omega(\sqrt{n})$,
even in the non-weighted case.
Our example is a non-weighted graph $G = (V,E)$ with terminals $\Terminals\subseteq V$.
Let $V' = V\setminus \Terminals$ be the set of non-terminals.
$\Terminals$ is partitioned into two sets, $\Terminals = \Terminals' \cup \Terminals''$,
where $|\Terminals'| = k'$ and $|V'| = |\Terminals''| = \ell$, so that $n = k'+2\ell$.
The topology of $G$ is this: the vertices in $V'\cup \Terminals'$ form a clique of size $k'+\ell$, and the vertices in $\Terminals''$ are isolated.

Let $\multicutM$ be a partition representing the following multicut: each terminal in $\Terminals'$ is a singleton, and all other vertices
are grouped into $\ell$ disjoint pairs, with one vertex from $V'$ and one from $\Terminals''$. Then $\maxcost(\multicutM) = \ell+k'-1 \le n$.

To show our lower bound, it is sufficient to prove that for some choice of $k'$ and $\ell$ each connected multicut has cost $\Omega(n^{1.5})$.
In any connected multicut, the components of the terminals in $\Terminals''$ are singletons, and
$V'\cup \Terminals'$ is partitioned into components, each containing one terminal from $\Terminals'$.
Let $B$ be a component with maximum cardinality.
Observe that $|\cutof B| = |B|(\ell+k'-|B|)$.
Choosing $k' = \sqrt{n}$ and $\ell = \half ( n - \sqrt{n})$, with $n$ sufficiently large,
we have $|B| = \Omega(\sqrt{n})$, which implies that  the boundary weight of $B$ is $|\cutof B| = \Omega(n^{1.5})$.
(To see this, consider the cases when $|B|\le n/4$ and $|B| > n/4$.)
Thus the cost of this multicut is $\Omega(n^{1.5})$, completing the proof of our $\Omega(\sqrt{n})$ lower bound.


\newpage
\section{Primal-Dual Algorithm for $\LocalCutTreesImprop$ }
\label{sec: primal dual algorithm}

In this section we present a primal-dual algorithm for $\LocalCutTreesImprop$ from Section~\ref{sec: 2-approximation minmax},
providing an alternative proof of Lemma~\ref{lem: two approximation for local cut}.
Let $(T, \ConflictGraph, \starv)$ be our instance.
The dual of the linear program in the proof of Lemma~\ref{lem: two approximation for local cut} is as follows:
\begin{align}
	 \text{maximize } \sum_{st \in \ConflictEdges} \alpha_{st} & \;
	&\\\
	\text{subject to }\sum_{st \in \ConflictEdges} \alpha_{st} (1_{e \in P_{s\starv}} + 1_{e \in P_{t\starv}}) 
				\;&\le\; \weight(e)  \quad \forall e \in E 
	\label{eq: dual constraint}
	\\
	 \alpha_{st} \;&\ge\; 0 \quad \forall st \in \ConflictEdges
	\nonumber
\end{align}
Naturally, we can interpret the dual as a flow problem: for each conflict edge $st\in\ConflictEdges$,
we send \emph{two} flows from $\starv$, one to $s$ along $P_{s\starv}$ and another to $t$ along $P_{t \starv}$.
We send $\alpha_{st}$ units of flow along both paths.
(The intersection $P_{t\starv}\cap P_{s\starv}$ thus receives $2\alpha_{st}$ units of flow.)
The right hand side of inequality~(\ref{eq: dual constraint}) is then the total flow going through edge $e$.

We greedily construct a dual solution, similar to the primal-dual algorithms for $\VertexCover$ and $\MulticutSumTrees$.
The variables $\alpha_{st}$ are initially zero, and we say a tree edge $e \in E$ is \emph{saturated} when its corresponding constraint is tight.
Then for each conflict edge $st \in \ConflictEdges$,
we simply raise the value of $\alpha_{st}$ until a tree edge (possibly many) in $P_{s\starv} \cup P_{t\starv}$ become saturated.
Let $J\subseteq E$ be the set of saturated tree edges after applying this process to all conflict edges.
Our local cut $X$ is then the connected component in $T-J$ containing $\starv$.

By rooting $T$ at $\starv$, we can obtain the tree edges $P_{s\starv} \cup P_{t\starv}$ for any $st\in\ConflictEdges$ in $O(n)$ time,
and so this algorithm can be implemented in $O(n|\ConflictEdges|)$ time.
Furthermore, $X$ is indeed a local cut:
for each conflict edge $st \in \ConflictEdges$, $\cutof X$ must contain at least one saturated edge from $P_{s\starv} \cup P_{t\starv}$ by construction,
implying that at least one of $s,t$ is not in $X$.

We now show that $\weight(\cutof X)\le 2\sum_{st \in \ConflictEdges} \alpha_{st}$, where $\alpha_{st}$ refers to the constructed dual variables.
The boundary $\cutof X \subseteq J$ contains only (but not necessarily all) saturated edges,
and since $X$ is connected and contains $\starv$, we have $|\cutof X \cap P_{t \starv}| \le 1$ for all $t\in \Terminals$.
This implies
\begin{align*}
	\weight(\cutof X) &= \sum_{e \in \cutof X} \weight(e) =  \sum_{e \in \cutof X} \sum_{st \in \ConflictEdges} \alpha_{st} (1_{e \in P_{s\starv}} + 1_{e \in P_{t\starv}}) \\
	&=   \sum_{st \in \ConflictEdges} \alpha_{st} \sum_{e \in \cutof X}(1_{e \in P_{s\starv}} + 1_{e \in P_{t\starv}}) \le 2 \sum_{st\in\ConflictEdges} \alpha_{st}
\end{align*}
as desired. Then, by duality, $X$ is a 2-approximate solution.

\begin{figure}[t]
\begin{center}
\includegraphics[width = 2.5in]{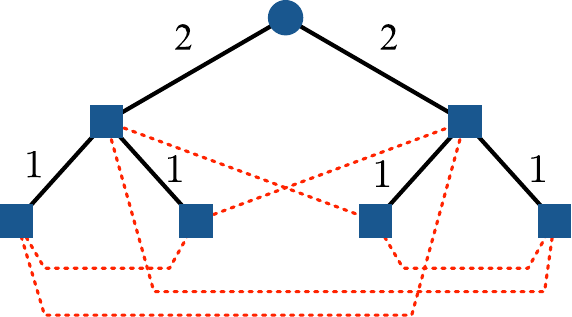}
\caption{An instance of the ``non-split LP'' with non-half-integral extremal points; $\starv$ is the single non-terminal at the top, and conflicts are illustrated in dashed red lines.
One can use an LP solver to verify that the unique
optimal solution of cost $8/3$ is achieved by setting $x_e =1/3$ for all edges $e$.
By inspection, the best half-integral and integral solutions have cost $3$.
}
\label{fig: nont-split LP counter example}
\end{center}
\end{figure}

\myparagraph{Half-Integrality.} We remark that the linear program from Section~\ref{subsec: local cut} is also half-integral.
In fact, the distances $x(P_{t\starv})$ for each $t\in \Terminals$ lie in $\braced{0,1/2,1}$ at the polytope's extremal points.
To see this, we root $T$ at $\starv$ and observe that our LP is equivalent to the following program:
\begin{align}
	 \text{minimize } & \;\sum_{v \in V-\braced{\starv}} \weight(v\pi_v)(x_v - x_{\pi_v}  )
	&\\\
	\text{subject to } x_v \;&\ge\; x_{\pi_v}  \quad \forall v \in V - \braced{\starv} \label{eq: order constraint}
	\\
	x_{s} + x_{t} \;&\ge\; 1 \quad \forall st \in \ConflictEdges \\
	 x_{\starv}=0, x_v \;&\ge\; 0 \quad \forall v \in V - \braced{\starv}
	\nonumber
\end{align}
where our variables are $\braced{x_v}_{v \in V-\braced{\starv}}$ (with $x_{\starv}=0$ as a dummy variable).
Intuitively, we've simply done a change of variables, where now we assign distances to vertices that induces edge weights $x(v\pi_v) = x_v - x_{\pi_v}$,
using the observation that weights of a tree can be reconstructed from distances from a fixed source.
To ensure that the induced weights are non-negative, we impose an ordering constraint in Eq.~\ref{eq: order constraint}.
Clearly, extremal points of our original LP correspond to extremal points of this new one.

Ignoring the slightly more complicated objective function,
it's now clear that our new LP constraints are just a modified version of the $\VertexCover$ LP.
Indeed, standard arguments proving the half-integrality of the $\VertexCover$ LP (e.g.~\cite{nemhauser_trotter_properties_of_vertex_packing_and_independence_system_polyhedra_1974}) also apply here;
the extra ordering constraints make no difference.

Interestingly, we can slightly modify the linear program from Section~\ref{subsec: local cut} by not double-counting edges in $P_{s\starv} \cap P_{t \starv}$,
which simply means replacing Eq.~\ref{eq: lp constraint} with $\sum_{e \in P_{s \starv} \cup P_{t \starv} } x_e \ge 1$.
This ``non-split LP'' is still a relaxation of $\LocalCutTreesImprop$, and the same LP rounding and primal-dual algorithms apply to this variant.
However, this variant is \emph{not} half-integral. An example is shown in Figure~\ref{fig: nont-split LP counter example}, and indeed it's not to hard to construct for every integer $d \ge 3$ an example
containing an extremal point with denominator $d$.

\end{document}